\DeclareMathOperator{\plim}{plim}
\newtcbox{\mymath}[1][]{%
    nobeforeafter, math upper, tcbox raise base,
    enhanced, colframe=blue!30!black,
    colback=blue!30, boxrule=1pt,
    #1}
\newcommand{\overbar}[1]{\mkern 1.5mu\overline{\mkern-1.5mu#1\mkern-1.5mu}\mkern 1.5mu}
\def\BState{\State\hskip-\ALG@thistlm}
\title{One-step Estimation of Networked Population Size:\\
Respondent-Driven Capture-Recapture with Anonymity}
\author[1,*]{Bilal Khan}
\author[1]{Hsuan-Wei Lee}
\author[2]{Ian Fellows}
\author[1]{Kirk Dombrowski}
\affil[1]{Department of Sociology, University of Nebraska-Lincoln}
\affil[2]{Fellow Statistics}
\affil[*]{Corresponding author: bkhan2@unl.edu}
\newtheorem{proposition}{Proposition}
\newtheorem{corollary}{Corollary}
\newtheorem{lemma}{Lemma}
\newtheorem{definition}{Definition}
\newtheorem{assumption}{Assumption}
\newtheorem{notation}{Notation}
\newcounter{example}[section]
\begin{document}
\date{}
\maketitle
\begin{abstract}
Population size estimates for hidden and hard-to-reach populations are particularly important when members are known to suffer from disproportion health issues or to pose health risks to the larger ambient population in which they are embedded. Efforts to derive size estimates are often frustrated by a range of factors that preclude conventional survey strategies,  including social stigma associated with group membership or members' involvement in illegal activities. 

This paper extends prior research on the problem of network population size estimation, building on established survey/sampling methodologies commonly used with hard-to-reach groups. Three novel one-step, network-based population size estimators are presented, to be used in the context of uniform random sampling, respondent-driven sampling, and when networks exhibit significant clustering effects. Provably sufficient conditions for the consistency of these estimators (in large configuration networks) are given.  Simulation experiments across a wide range of synthetic network topologies validate the performance of the estimators, which are seen to perform well on a real-world location-based social networking data set with significant clustering.  Finally, the proposed schemes are extended to allow them to be used in settings where participant anonymity is required.  Systematic experiments show favorable tradeoffs between anonymity guarantees and estimator performance.

Taken together, we demonstrate that reasonable population estimates can be derived from anonymous respondent driven samples of 250-750 individuals, within ambient populations of 5,000-40,000. The method thus represents a novel and cost-effective means for health planners and those agencies concerned with health and disease surveillance to estimate the size of hidden populations. Limitations and future work are discussed in the concluding section.

\textbf{Keywords}: capture-recapture, population size estimation, respondent driven sampling. 
\end{abstract}

Population size estimation for hidden and hard-to-reach populations is of considerable interest to health officials seeking to prevent health problems that may be concentrated in such populations \cite{magnani_review_2005}, or when ``reservoirs'' of infection among a hidden population pose health threats to the ambient population in which the hidden population is embedded \cite{dombrowski_topological_2013, reluga_reservoir_2007}. In the former, treatable maladies can remain out of reach, multiplying eventual treatment costs when cases are discovered only in their most severe form. Such is the situation, for example, with mental illness among homeless and street dwelling populations \cite{bonin_typology_2009, burt_critical_1995,ivanich2017relationship}. In other situations, the “hidden” nature of the population may frustrate intervention efforts that are effective in the ambient population, preventing control of infections despite well-known contagion dynamics \cite{potterat_aids_1993}. A simple example, long-known to public health officials, is the high prevalence of sexually transmitted infections among commercial sex workers \cite{abdul2014estimating, law_spatial_2004, zohrabyan_determinants_2013}. In such situations, health officials seek to know the overall prevalence levels of maladies within a hidden population \textit{and} the size of those populations in order to understand the scope of treatment needs and overall social risk. 

Efforts to ascertain prevalence and size estimates are frustrated by a range of factors that produce the “hiddenness” of the population initially. Such factors include heavy social stigma that precludes a willingness on the part of members of the hidden population to reveal their membership. Such is the situation with people who inject drugs (PWID), who may be unwilling to self-identify as such under ordinary survey conditions \cite{darke_self-report_1998, harwood_sampling_2012}. Hiddenness due to stigma can be further compounded when such activities are illegal, when they carry heavy personal costs (such as when self-identified heterosexual men also have sex with other men), or when disease status is unknown (such as undiagnosed HIV infection rates among PWID). In these situations, conventional sampling is unreliable, and ordinary multiplier methods based on conventional sampling are rendered ineffective.

A number of strategies have been devised to address either the prevalence or population size (or both) aspects of this problem. These include capture-recapture \cite{larson_indirect_1994, vuylsteke_capturerecapture_2010}, chain referral \cite{biernacki_snowball_1981, platt_methods_2006}, venue-based \cite{haley_venue-based_2014, muhib_venue-based_2001}, cluster sampling \cite{burnham_mortality_2006}, and combinations of these. Among the most popular is respondent-driven sampling (RDS) \cite{heckathorn_extensions_2007, RDS2002, salganik_sampling_2004}, which has been adopted for use in many of the situations described above, and which is employed widely in HIV surveillance efforts both within the United States and beyond \cite{johnston2016systematic}. RDS employs an incentivized chain referral process to recruit a sample of the hidden population. Under restricted but recognized conditions, RDS can be shown to result in a steady-state, ``equilibrium'' sample, and numerous means have been derived for producing reasonable prevalence estimates from such a sample while accounting for biases introduce in the referral process \cite{gile_respondent-driven_2010, gile_diagnostics_2015, mouw_network_2012,shi_model-based_2016, verdery_network_2015, wejnert_empirical_2009}. The ease of implementing RDS, the fact that it can operate under conditions of anonymity (via number coupons that track referrals), and its rigorous treatment under a range of statistical modeling strategies have made it a popular choice for researchers working with hidden populations \cite{heckathorn_network_2017}. However, equally rigorous means for estimating the overall size of the hidden population from RDS derived data have been less successful, often resulting in widely varying estimates \cite{sulaberidze_population_2016,crawford2017hidden}.   The ability of the RDS method to produce meaningful prevalence data remains, and presents considerable potential for use in size estimation.  

Other efforts restricted to size estimation of hidden populations have been developed, including various versions of capture-recapture procedures (sometimes call mark-recapture or multiplier procedures) \cite{domingo-salvany_analytical_1998, kruse_participatory_2003} and network scale-up methods (NSUM) \cite{bernard_counting_2010}. Multiplier efforts normally make use of a sample of the hidden population and some external, normally institutional knowledge-base (e.g. arrest records or hospital admissions) for estimation purposes \cite{hay_estimating_1996, vuylsteke_capturerecapture_2010}. In these cases, however, two assumptions must be met: (i) that the sample is representative of the hidden population more generally, and (ii) that everyone in the hidden population is equally likely to be “captured” in the official statistics used in the estimation \cite{jones_recapture_2014}. While representativeness can sometimes be assumed (as in the case of RDS), it is often difficult to establish the uniformity of the capture statistics, and often there are good reasons to believe that random capture is simply not the case. Frankly put, police arrests and hospital admission can seldom be assumed to draw randomly from the hidden population. Further, capture-recapture/multiplier methods often require that the sample be identifiable in the institutional record, requiring that expectations on the part of sample respondents for anonymity be sacrificed. When working with hidden and highly stigmatized populations, such a sacrifice can be highly detrimental to both recruitment and informant reliability \cite{wolitski_effects_2009}.

Network scale-up methods are also used to establish the size of hidden populations, though work in this area remains at an early stage. Here members of the entire population (ambient plus hidden) are asked to report on the number of known associates who fit the hidden population criteria \cite{ezoe_population_2012, guo_estimating_2013}. This technique has the advantage of being employable under ordinary random sampling conditions that can make use of known sampling frames (i.e. mail surveys and/or random digit dialing) \cite{habecker_improving_2015}. However, this method assumes that ordinary people know whom among their associates fit the criteria for inclusion in the hidden category \cite{killworth_investigating_2006, salganik_assessing_2011}. This assumption raises suspicion in many of those situations in which we ordinarily wish to use it, as when we seek to estimate populations of PWID or sex workers. Under these conditions, individuals from the hidden population may not want their friends and associates to know about their status, and may make efforts to hide this information. These efforts introduce “transmission” errors into NSUM estimates that are difficult to uncover or estimate. 

In previous work, we proposed a novel capture-recapture methodology for estimating the size of a hidden population from an RDS sample \cite{TELEFUNKEN2012}. Were such a result possible, it could easily be integrated into the conventional RDS framework, taking advantage of the wide body of work in that area and the ability of RDS to produce reasonable prevalence estimates. Our method was first proposed in several forms undertaken as quasi-experiments within actual data collection efforts with commercially sexually exploited children \cite{curtis_commercial_2008} and, later, users of methamphetamine \cite{wendel_dynamics_2011}. Both studies took place in New York City, and both made use of RDS samples. Subsequent implementations of the technique have lent further evidence of the effectiveness (and ease of implementation) of what we there referred to as the ``telefunken'' method. This method asks RDS sample respondents to report on others in the population known to them via an encoding of their associates telephone number and demographic features, avoiding the reliance on official statics or the need to draw two independent samples from the hidden population. The technique was referred to as \textit{telefunken} because it entailed an encoding of the phone numbers of known associates in the hidden population. The code was created by taking a specified number of phone number digits, in order from last to first, and encoding each digit as 0/1 for even/odd, and again 0/1 for 0-4/5-9. This produced a binary code of length 2 x the number of phone digits specified in the protocol. This many-to-one encoding allowed for ongoing anonymity for both respondents and their reported associates, while enabling the matching of contacts across numerous respondent interviews. It also introduced the need to estimate the number of expected false matches created by the many-to-one encoding. In essence, this ``one-step'' approach eased the assumptions normally associated with other capture-recapture methods, and can be accomplished via a single sample from the hidden population. If shown to be effective, such an approach lends simplicity and greater cost-effectiveness to the size estimation procedure, potentially allowing for widespread application.  

Independently, and in roughly the same time period, Fellows put forward a general framework of Privatized Network Sampling (PNS) design \cite{fellowsThesis}.  PNS addressed two of the major concerns with regard to RDS data, namely the assumption that coupons are passed at random among alters, and that subjects can accurately report the number of alters that they have. As PNS is closely related to RDS, the standard RDS estimators may be used on data collected with the PNS design.

Given the interest in telefunken and PNS-like techniques \cite{merli_sampling_2016, mouw_network_2012, sulaberidze_population_2016}, this paper proposes a more rigorous formalization of a one-step, network-based population estimation procedure that can be employed under conditions of anonymity. In what follows we describe the technique and simulate its performance under a range of implementation conditions across a range of hidden population sizes. The simulations show considerable promise for the technique under the kinds of research scenarios normally associated with research among “hidden populations”. Limitations and further efforts toward validation/extension are discussed at the end of the paper.

\section{Background}

Current network size estimation methods are based on quantifying the ``repetition'' or overlap observed across multiple samples \cite{heckathorn2017network}, where the category of objects  sampled may be nodes, edges, distances, paths, motifs, or substructures \cite{leskovec2006sampling, wang2014efficiently})---depending on the specific approach in question. 

\begin{itemize}
\item Node sampling methods often begin by taking independent uniform random samples of the population. In interpreting the overlap between samples \cite{bawa2003estimating, massoulie2006peer}, these methods are based on the same principle as the well-studied ``coupons collector's problem'' of probability theory, for which maximum likelihood estimators and conservative confidence intervals are long established  \cite{finkelstein1998confidence}. The classic method  considers two uniform independent random samples \cite{sekar1949method}; in ecology, this method is often referred to as the ``mark and recapture'' protocol. To wit, within a population $V$, the protocol first selects a uniform random ``capture'' sample $S\subseteq V$, and then a second {\em independent} uniform random ``recapture'' sample $R\subseteq V$. From independence assumptions one infers that
\begin{eqnarray}
\label{classical-prop}
\frac{|V|}{|S|} \approx \frac{|R|}{|S \cap R|}.
\end{eqnarray}
and hence
\begin{eqnarray}
\label{lincoln}
|V| \approx \frac{|S| \cdot |R|}{|S \cap R|}.
\end{eqnarray}
The right-hand-side expression in (\ref{lincoln}) is known as the Lincoln-Peterson estimator \cite{LINCOLN1930,PETERSON1896}.  Many extensions and performance improvements to this classical technique are known, such as those making use of weighted sampling techniques \cite{dasgupta2012social}, or sampling that is biased by the network's degree distribution \cite{katzir2011estimating}. 
\item Edge sampling approaches to population size estimation have also been developed \cite{krishnamurthy2005reducing, kurant2012graph, dasgupta2014estimating}. These methods not only consider a sampled set of nodes, but also elicit a sample of their network neighbors. While edge sampling encounters problems associated with a bias toward high degree nodes, these methods offers potential gains in efficiency in dense graphs and where independent random sampling of nodes is restricted. 
\item Lastly, sampling via random walks represents a practical approach that is commonly used in estimating the size of social networks. Random walk methods start from an arbitrary node, then move to a neighboring node uniformly at random, and iterate. A typical random walk visits every node with a frequency proportional to its degree, but this bias can be quantified and corrected by Markov Chain analysis. One can estimate the population size of a graph based on the frequency with which sampled nodes appear (and reappear) during the random walk process. This method is now widely used to measurement the size of graphs such as online social networks and is used in conjunction with a variety of data from web crawlers \cite{mislove2007measurement,ahn2007analysis,gjoka2010walking,kurant2011walking,hardiman2013estimating}.
\end{itemize}
The approach developed here is inspired by and builds on several of the above strategies, including random walks and edge elicitation. An outline of the paper is as follows: In Section \ref{sec:n1}, we present a population estimator for uniform random samples.  This estimator is extended for respondent-driven samples in Section \ref{sec:n2}.  The two estimators are evaluated over a broad range of graph families (see Subsection \ref{sec:graph-families}) using a general experimental framework (see Subsection \ref{sec:exp-framework}).  The experimental results are presented in Sections \ref{sec:eval-n1} and \ref{sec:eval-n2}. In Section 
   \ref{sec:n3}, we adapt the estimators for use in networks with clustering, showing in Section \ref{sec:eval-n3} that the revised schemes continue to perform well on synthetic networks.  In Section \ref{sec:anonymity}, we extend the network size estimation schemes to allow for protection of subject anonymity.  These anonymity-preserving schemes are evaluated through simulation experiments in Sections 
\ref{sec:eval-n2-psi} and \ref{sec:eval-n3-psi}.  The RDS-based estimators are used on a real-world network in Section \ref{sec:Brightkite}.  Finally, discussion and limitations are presented in Section \ref{sec:discussion}.


\section{New Population Size Estimators}

We seek to generalize the Lincoln-Peterson framework of overlapping capture and recapture sets to the context of networked populations, expressed formally in the language of graphs. The following definition provides graph-theoretic notations which will be necessary in order to precisely define the proposed sampling and estimation processes.

\begin{definition}
\label{def:basics}
Let $G=(V,E)$ be a graph.  For each $v \in V$, we denote the degree of $v$ in $G$ as $d(v)$.  Given $A \subseteq V$, we denote the (arithmetic) mean degree of vertices in $A$ as:
   \begin{align}
d(A) &\coloneqq \frac{1}{|A|} \sum_{v \in A} d(v)
\intertext{and the (harmonic) mean degree of vertices in $A$ as}
\widetilde{d}(A) &\coloneqq \frac{|A|}{\sum_{v \in A} \frac{1}{d(v)}}.
\intertext{noting that the latter is more robust against the presence of high-degree outliers.  If $H=(S,F)$ is a subgraph on $S\subseteq V$ with edge set $F \subseteq E \cap (S \times S)$, the {\em free neighborhood} of $u$ (in $G$ modulo $H$) is defined as}
N(u,F) &\coloneqq \{ v \;|\; (u,v) \in E\;\backslash\; F\} \subseteq V. \label{def:N}
\intertext{Note that when $G$ is allowed to have parallel edges (as is the case when it is obtained through configuration graph sampling), then $N(u,F)$ may be a multiset.  The {\em free ends} of $S$ (in $G$ modulo $H$) are taken to be the disjoint union (multiset)}
R(S,F) &\coloneqq \coprod_{u\in S} N(u,F)  \subseteq V \label{def:R}
\intertext{and the {\em matches} of (in $G$ modulo $H$) are taken to be the disjoint union (multiset)} 
M(S,F) &\coloneqq \coprod_{u\in S} \left( N(u,F) \cap S \right)  \subseteq V. \label{def:M}
\intertext{We denote their respective multiset cardinalities as}
 \langle R(S,F) \rangle &\coloneqq \sum_{u\in S} \left| N(u,F) \right| \nonumber\\
 \langle M(S,F) \rangle &\coloneqq \sum_{u\in S} \left| N(u,F) \cap S \right|.\nonumber
  \end{align}
  \end{definition}
  
\begin{notation}
In the arguments that follow, graph-theoretic quantities (such as those formalized in Definition \ref{def:basics}) will sometimes be considered simultaneously in the context of more than one graph---e.g. $G_1=(V_1,E_1)$,  and $G_2=(V_2,E_2)$.  To avoid ambiguity, in such settings, we will
make the context unambiguous by appending the graph as a parameter---e.g. the average degree of vertices in $G_1$ is denoted $d(V_1; G_1)$, while the average degree of vertices in $G_2$ is expressed as $d(V_2; G_2)$.
\end{notation}

\begin{notation}
\label{def:multiset-cardinality}
Whenever we are considering a multiset $X$, we will denote to its multiset cardinality as $\langle X \rangle$, while its set cardinality will be written as $|X^*|$. For example, if $X=\{1,1,2,8,8,8\}$ then $\langle X \rangle = 6$, while $|X^*| = 3$.
\end{notation}
\begin{definition}
\label{def:multiset-operations}
Given multisets of vertices $A,B \subseteq V$ we denote their characteristic functions as $\chi_A,\chi_B: V \rightarrow {\mathbb N}$ and define the multisets 
$A \backslash B$, $A \cap B$, $A \cup B$ by the respective characteristic functions
$$
\chi_{A \backslash B}, \chi_{A \cap B}, \chi_{A \cup B}:V\rightarrow {\mathbb N}
$$
where for each $v\in V$
\begin{eqnarray*}
\chi_{A \backslash B}(v) &\coloneqq& \max\{0, \chi_{A}(v) - \chi_{B}(v)\}\\
\chi_{A \cap B}(v) &\coloneqq& \min\{\chi_{A}(v), \chi_{B}(v)\}\\
\chi_{A \cup B}(v) &\coloneqq& \chi_{A}(v) + \chi_{B}(v).
\end{eqnarray*}
We say that $A \subseteq B$ as multisets, if $\forall v\in V$, we have $\chi_A(v) \leqslant \chi_B(v)$.
\end{definition}

\subsection{Estimating Population Size from Uniform Random Samples in Graphs}
\label{sec:n1}

With the formalisms of Definition \ref{def:basics} in place, we can formally express the estimator $n_1$, which given a uniform random subset of vertices $T \subseteq V$, yields an estimate $|V|$.

\begin{definition}
\label{def:n1}
Given a graph $G=(V,E)$ and $T \subseteq V$, define
\begin{align}
\label{eq:n1}
n_1(T) &\coloneqq \frac{|T|\cdot \langle R(T,\emptyset) \rangle}{\langle M(T,\emptyset) \rangle}
\end{align}
\end{definition}

Lemma \ref{lemma:n1-samplesize} shows that as the sample size grows, $n_1$ converges to $|V|$.

\begin{lemma}
\label{lemma:n1-samplesize}
Let $G=(V,E)$ be a graph and $T_1\subseteq T_2\subseteq T_3\subseteq \ldots \subseteq V$ an ascending chain converging to $\bigcup_{i=1}^{\infty} T_i = V$.  Then 
$$\lim_{i \rightarrow \infty} \frac{n_1(T_i)}{|V|} = 1$$
\end{lemma}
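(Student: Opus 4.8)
The plan is to unwind Definition~\ref{def:basics} at $F=\emptyset$ and reduce the statement to an elementary handshake-type identity. First I would record that, since $N(u,\emptyset)$ is simply the neighbourhood of $u$ in $G$, we have $\langle R(T,\emptyset)\rangle = \sum_{u\in T}|N(u,\emptyset)| = \sum_{u\in T} d(u)$, while $\langle M(T,\emptyset)\rangle = \sum_{u\in T}|N(u,\emptyset)\cap T|$ counts exactly twice the number of edges of $G$ whose endpoints both lie in $T$ (with multiplicity, when $G$ carries parallel edges), because each such edge is seen once from each of its two endpoints. In particular, taking $T=V$ gives $\langle R(V,\emptyset)\rangle = \sum_{u\in V}d(u) = \langle M(V,\emptyset)\rangle$, so that $n_1(V) = |V|\cdot \langle R(V,\emptyset)\rangle/\langle M(V,\emptyset)\rangle = |V|$ exactly, provided $G$ has at least one edge so that the denominator is nonzero (a harmless nondegeneracy assumption).

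With the exactness of $n_1$ on the full population in hand, the limit along the chain is essentially a continuity argument. Each of the three quantities $|T_i|$, $\langle R(T_i,\emptyset)\rangle = \sum_{u\in T_i}d(u)$, and $\langle M(T_i,\emptyset)\rangle$ is nondecreasing in $i$, and by monotone convergence over $\bigcup_i T_i = V$ they converge to $|V|$, $\langle R(V,\emptyset)\rangle$, and $\langle M(V,\emptyset)\rangle$ respectively. Fixing one edge $\{a,b\}$ of $G$ and an index $i_0$ with $a,b\in T_{i_0}$, we get $\langle M(T_i,\emptyset)\rangle \ge 2 > 0$ for all $i\ge i_0$, so the quotient defining $n_1(T_i)$ may be evaluated in the limit, yielding $\lim_i n_1(T_i) = |V|\cdot \langle R(V,\emptyset)\rangle/\langle M(V,\emptyset)\rangle = |V|$, i.e. $\lim_i n_1(T_i)/|V| = 1$. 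When $V$ is finite the chain is in fact eventually equal to $V$, so the conclusion is already immediate from the exactness computation; the monotone-convergence phrasing is just the form that will generalize to the later sampling-theoretic versions of the statement.

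I expect no genuine obstacle here. The only points requiring care are the multiset bookkeeping --- verifying that $\langle M(T,\emptyset)\rangle$ double-counts each induced edge, which is precisely what makes it agree with $\langle R(V,\emptyset)\rangle$ when $T=V$ --- and excluding the degenerate edgeless graph so that the denominator never vanishes. Everything else is monotonicity and the limit of a ratio of convergent quantities.
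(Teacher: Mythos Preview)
Your proposal is correct. Both your argument and the paper's reduce to showing $\langle R(T_i,\emptyset)\rangle/\langle M(T_i,\emptyset)\rangle\to 1$, but the routes differ. The paper works with the multiset difference $\Delta_i\coloneqq R_i\setminus M_i$ and argues element-by-element that every vertex eventually drops out of $\Delta_i$ once all of its neighbours have been absorbed into the chain, whence $\bigcap_i\Delta_i=\emptyset$ and the ratio tends to~$1$. You instead observe the handshake identity $\langle R(V,\emptyset)\rangle=\langle M(V,\emptyset)\rangle$ directly, deduce the exact equality $n_1(V)=|V|$, and then pass to the limit by monotonicity of the three ingredients (noting that in the finite setting the chain in fact stabilises at $V$). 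Your approach is shorter and makes the underlying combinatorial reason---that $n_1$ is \emph{exact} on the full population---completely transparent; the paper's multiset-difference bookkeeping obscures this but is perhaps written with an eye toward the later probabilistic arguments where one cannot simply evaluate at $T=V$. Your handling of the degenerate cases (edgeless $G$, eventual positivity of the denominator) is also more explicit than the paper's.
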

\begin{proof}

Let $R_i \coloneqq R(T_i,\emptyset)$, $M_i \coloneqq M(T_i,\emptyset)$, and $\Delta_i \coloneqq R_i \backslash M_i$.  Note that $R_1\subseteq R_2\subseteq R_3\subseteq \ldots$ and $M_1\subseteq M_2\subseteq M_3\subseteq \ldots$ are ascending chains of multisets, and $M_i \subseteq R_i$ ($i=1,2,\ldots$).  Suppose $u \in \Delta_i$ and $\chi_{R_i}(u) = a$; clearly $0 < a \leqslant d(u)$.  Then since the ascending chain $(T_i)_{i=1,2,\ldots}$ converges to $V$, there exists a least $j_0>i$ for which $\chi_{M_j}(u) =  d(u)$ and therefore $\chi_{\Delta_j}(u) = 0$ for all $j\geqslant j_0$.  It follows that
  \begin{align*}
\bigcap_{i=1}^{\infty} R_i \backslash M_i &= \emptyset
\intertext{(where multiset intersection and difference are as described in Definition \ref{def:multiset-operations}), and thus} 
\lim_{i\rightarrow \infty} \frac{\langle R_i \rangle}{\langle M_i \rangle} &= 1
  \end{align*}
which implies $\lim_{i\rightarrow \infty} n_1(T_i)/|T_i| = 1$, completing the proof.
\end{proof}


The next proposition gives sufficient conditions under which uniform random samples $T\subseteq V$ produce consistent estimates $n_1(T) \sim |V|$ when $|V|$ is large.  Concrete realizations of these conditions are presented in Corollary \ref{corr:pop1}.

\begin{proposition}
\label{prop:n1-limit-n}
For $n=1,2,\ldots$ let $G_n=(V_n,E_n)$ be a graph on $|V_n|=f(n)$ vertices, where $f(n)$ grows unbounded.  Let $c_n \in (0,1]$ and take $T_n\subseteq V_n$ to be a subset of size $|T_n| = \lfloor c_n\cdot f(n) \rfloor$ selected using uniform random sampling in $V_n$.  If $c_n\cdot f(n)$ diverges as $n$ goes to infinity, while
    \begin{align}
c_n^2 \cdot d(V_n) \xrightarrow{\phantom{xxx}\phantom{xxx}} \Theta_1
  \end{align}
for some finite constant $\Theta_1>0$, then $\frac{n_1(T_n)}{|V_n|}$ necessarily converges to $1$.
\end{proposition}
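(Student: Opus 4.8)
The plan is to treat $\langle R(T_n,\emptyset)\rangle$ and $\langle M(T_n,\emptyset)\rangle$ (henceforth abbreviated $\langle R\rangle$, $\langle M\rangle$) as random variables --- the randomness being the choice of the uniform $t$-subset $T_n$, where I write $t\coloneqq|T_n|=\lfloor c_n f(n)\rfloor$ and $N\coloneqq f(n)$ --- show that each concentrates about its mean, and check that the ratio of the means is $N(1+o(1))$; this is a ``large-$n$'' companion of Lemma~\ref{lemma:n1-samplesize}, where instead one fixes $G$ and lets $T$ exhaust $V$. First I would record first moments. Since $N(u,\emptyset)$ is the full neighbourhood of $u$, $\langle R\rangle=\sum_{u\in T_n}d(u)$ and $\langle M\rangle=\sum_{u\in T_n}|N(u)\cap T_n|$; using $\Pr[u\in T_n]=t/N$ and $\Pr[\{u,v\}\subseteq T_n]=\tfrac{t(t-1)}{N(N-1)}$ for $u\neq v$,
\begin{align*}
\mathbb{E}\,\langle R(T_n,\emptyset)\rangle &= \tfrac{t}{N}\sum_u d(u)=t\,d(V_n),\\
\mathbb{E}\,\langle M(T_n,\emptyset)\rangle &= \tfrac{t(t-1)}{N(N-1)}\sum_u d(u)=\tfrac{t(t-1)}{N-1}\,d(V_n),
\end{align*}
so $\frac{t\cdot\mathbb{E}\langle R\rangle}{N\cdot\mathbb{E}\langle M\rangle}=\frac{t(N-1)}{N(t-1)}\to1$ because $t=c_nf(n)\to\infty$. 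Note also $\mathbb{E}\langle M\rangle=\bigl(c_n^2d(V_n)\bigr)f(n)(1+o(1))=\Theta_1 f(n)(1+o(1))\to\infty$: this is the role of $c_n^2 d(V_n)\to\Theta_1>0$, which moreover keeps $\langle M\rangle$ from vanishing with appreciable probability.

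Next I would prove concentration by a second-moment (Chebyshev) bound (working with Bernoulli($c_n$) sampling makes the indicators $\mathbf{1}[u\in T_n]$ independent and the bookkeeping mechanical; one de-Poissonizes at the end). For $\langle R\rangle=\sum_u d(u)\,\mathbf{1}[u\in T_n]$ one gets $\operatorname{Var}\langle R\rangle=O(t\,\sigma_n^2)$ with $\sigma_n^2\coloneqq\tfrac1N\sum_u d(u)^2-d(V_n)^2$ the population variance of the degree sequence, hence $\operatorname{Var}\langle R\rangle/(\mathbb{E}\langle R\rangle)^2=O\bigl(\sigma_n^2/(t\,d(V_n)^2)\bigr)$. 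For $\langle M\rangle=\sum_{(u,v)}\mathbf{1}[\{u,v\}\subseteq T_n]$, a sum over the $2|E_n|$ ordered edges, I would expand the variance over pairs of ordered edges by the number of shared endpoints: disjoint pairs are negligible, pairs meeting in one vertex contribute $O\bigl((t/N)^3\sum_u d(u)^2\bigr)$, and pairs meeting in two vertices contribute $O\bigl((t/N)^2\sum_u d(u)\bigr)$; dividing by $(\mathbb{E}\langle M\rangle)^2$,
\[
\frac{\operatorname{Var}\langle M\rangle}{(\mathbb{E}\langle M\rangle)^2}=O\!\left(\frac{\sigma_n^2}{t\,d(V_n)^2}\right)+o(1),
\]
the $o(1)$ collecting a $1/t$ term and a $1/\bigl(f(n)\,c_n^2 d(V_n)\bigr)$ term, both of which vanish since $c_nf(n)\to\infty$ and $f(n)\,c_n^2 d(V_n)\to\infty$. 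Once these ratios tend to $0$, Chebyshev gives $\langle R\rangle=\mathbb{E}\langle R\rangle(1+o_{\mathbb{P}}(1))$ and $\langle M\rangle=\mathbb{E}\langle M\rangle(1+o_{\mathbb{P}}(1))$, and then
\[
\frac{n_1(T_n)}{|V_n|}=\frac{t\,\langle R\rangle}{N\,\langle M\rangle}=\frac{t\cdot\mathbb{E}\langle R\rangle}{N\cdot\mathbb{E}\langle M\rangle}\cdot\frac{1+o_{\mathbb{P}}(1)}{1+o_{\mathbb{P}}(1)}\longrightarrow1
\]
in probability.

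The only genuine obstacle is the factor $\sigma_n^2/(t\,d(V_n)^2)$ appearing in both variance bounds: the argument needs $\sigma_n^2=o\bigl(|T_n|\,d(V_n)^2\bigr)$, for which --- via $\sigma_n^2\le d_{\max}(G_n)\,d(V_n)$ --- it suffices that $d_{\max}(G_n)=o\bigl(|T_n|\,d(V_n)\bigr)$. This is \emph{not} implied by the two displayed hypotheses by themselves: sparse graphs dominated by a single high-degree hub (e.g.\ stars $G_n=K_{1,f(n)-1}$ with $c_n$ an appropriate constant) satisfy both hypotheses, yet $d_{\max}\asymp f(n)$ and $n_1(T_n)/|V_n|$ does not converge (it is even undefined, $\langle M\rangle=0$, with probability bounded away from $0$). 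So the missing degree-sequence regularity is exactly what I would import from the concrete models in Corollary~\ref{corr:pop1} (e.g.\ configuration graphs whose degree distribution has finite second moment, under which $\sigma_n^2=O(1)$ and the common factor is $O(1/t)=o(1)$). Establishing that regularity for those models, together with the routine variance bookkeeping above, is where the real work lies.
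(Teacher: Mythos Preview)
Your approach differs from the paper's: the paper does not compute variances at all, but simply asserts that $\sum_{u\in T_n}d(u)\xrightarrow{p}|T_n|\,d(V_n)$ ``by the law of large numbers,'' from which $c_n\overbar{R}_n\xrightarrow{p}c_n^2 d(V_n)\to\Theta_1$, and likewise $\overbar{M}_n\xrightarrow{p}c_n\overbar{R}_n\to\Theta_1$; it then quotes Slutsky's theorem to conclude $n_1(T_n)/f(n)\to\Theta_1/\Theta_1=1$. Your second-moment computation is a genuine strengthening of that sketch, because it makes explicit the degree-sequence condition the paper's LLN invocation is silently relying on.

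More importantly, the star counterexample you give is correct and shows that the proposition, exactly as stated, is false. With $G_n=K_{1,f(n)-1}$ and $c_n=c\in(0,1)$ fixed, one has $d(V_n)\to2$, so both hypotheses hold with $\Theta_1=2c^2$; yet with probability $1-c$ the hub is missed and $\langle M\rangle=0$, and conditional on hitting the hub one computes $n_1(T_n)/f(n)\to(1+c)/2\neq1$. The paper's appeal to the law of large numbers breaks down here for exactly the reason you isolate: the population degree variance $\sigma_n^2$ is of order $f(n)$, so the sample average of degrees over $T_n$ does not concentrate. Your diagnosis that a bound such as $\sigma_n^2=o\bigl(|T_n|\,d(V_n)^2\bigr)$ (or, more crudely, $d_{\max}(G_n)=o\bigl(|T_n|\,d(V_n)\bigr)$) is the missing hypothesis is right, and your remark that the concrete families in Corollary~\ref{corr:pop1} supply it is the appropriate repair.
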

\begin{proof}
Define random variables
  \begin{align}
\overbar{R}_n &\coloneqq \frac{1}{f(n)}\langle R(T_n,\emptyset) \rangle = \frac{1}{f(n)}\sum_{u\in T_n} d(u) \label{ref:n1-rn-def}\\
\overbar{M}_n &\coloneqq \frac{1}{f(n)}\langle M(T_n,\emptyset) \rangle.\label{ref:n1-mn-def}
  \end{align}
  For uniform random $u \in V_n$, we have that $E[d(u)]=d(V_n)$.  Since $|T_n|=\lfloor c_n\cdot f(n) \rfloor$ diverges, by the law of large numbers and linearity of expectation, as $n$ tends to infinity
    \begin{align}
  \langle R(T_n,\emptyset) \rangle = \sum_{u\in T_n} d(u) &\xrightarrow{\phantom{xxx}p\phantom{xxx}} \sum_{u\in T_n} d(V_n) = |T_n| \cdot d(V_n)\\
  \intertext{and thus,}
c_n \cdot \overbar{R}_n = \frac{1}{f(n)}\langle R(T_n,\emptyset) \rangle &\xrightarrow{\phantom{xxx}p\phantom{xxx}} c_n \cdot \frac{1}{f(n)}\cdot |T_n| \cdot d(V_n) = c_n^2 \cdot d(V_n) \xrightarrow{\phantom{xxx}p\phantom{xxx}} \Theta_1. \label{ref:n1-rn}
\intertext{Now for each $u\in T_n$ we have $E[\langle N(u,F_n) \cap T_n \rangle] = d(u) \cdot |T_n|/f(n)$.  Again, by the law of large numbers and linearity of expectation, as $n$ tends to infinity} 
\overbar{M}_n &\xrightarrow{\phantom{xxx}p\phantom{xxx}} \overbar{R}_n \cdot \frac{|T_n|}{f(n)} =  \overbar{R}_n \cdot c_n \xrightarrow{\phantom{xxx}p\phantom{xxx}} \Theta_1. \label{ref:n1-mn}
  \end{align}
Considering (\ref{ref:n1-rn}) and (\ref{ref:n1-mn}) as preconditions of Slutsky’s theorem \cite{slutsky1925uber}, we conclude:
\begin{eqnarray*}
\frac{n_1(T_n)}{f(n)}  
= \frac{1}{f(n)} \cdot \frac{c_n \cdot f(n) \cdot  \overbar{R}_n}{\overbar{M}_n} \xrightarrow{\phantom{xxx}d\phantom{xxx}}   \frac{\plim_{n\rightarrow \infty} c_n \cdot  \overbar{R}_n}{\plim_{n\rightarrow \infty} \overbar{M}_n} = \frac{\Theta_1}{\Theta_1} = 1.  
\end{eqnarray*}
\end{proof}

The correspondence between equation (\ref{eq:n1}) in Definition \ref{def:n1} and our previous ``telefunken'' estimator is clear \cite{dombrowski_estimating_2012}. In addition, equation (\ref{eq:n1}) demonstrates a parallel structure with the Lincoln-Peterson estimator shown in expression (\ref{lincoln}): $T$ represents the first assay (set); $R(T,\emptyset)$ stands for the second assay (a multiset); the multiset $M(T,\emptyset)$ is the subpopulation of the first assay that is recaptured by the second assay.  Of course, in the present setting, the second assay $R(T,\emptyset)$ is far from independent of the first assay $T$, since the two are sets are intrinsically linked through the network geometry of $G$.  Nevertheless, the fact that $T$ is a random subset of $V$ is enough to neutralize the impact of this non-independence and enable consistent estimation of population size. 

\begin{corollary}
\label{corr:pop1}
Several special cases of Proposition \ref{prop:n1-limit-n} are of interest.  In each of these cases, it is straightforward to verify that as $n$ goes to infinity, $c_n\cdot f(n)$ diverges, while $c_n^2 \cdot d(V_n)$ tends to some finite strictly positive constant.
\begin{itemize}
\item When $f(n)=O(n)$, $c_n = O(1)$ constant, and $d(V_n) = O(1)$ constant.  In this case, we have a family of graphs of increasing size and constant average degree, in which we are taking uniform random samples whose size is a constant proportion of the entire population.
\item When $f(n)=O(n)$, $c_n = O(g(n)/n)$, and $d(V_n) = O(n^{1-\epsilon}/g(n)^2)$, where $g(n)$ is a function which diverges, and $\epsilon > 0$ is a constant.  For example, if we take $g(n)=n^\epsilon$, then $c_n = O(1/n^{1-\epsilon})$, and $d(V_n) = O(n^{1-3\epsilon})$.  As $\epsilon$ tends to 0, we approach a family of graphs of increasing size and linear average degree, in which we are taking uniform random samples of an absolute constant size.  This special limit case is manifested by Erd\H{o}s-R\'enyi graphs \cite{citeulike:4012374}.
\end{itemize}
\end{corollary}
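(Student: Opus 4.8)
The plan is to read Corollary~\ref{corr:pop1} as nothing more than a verification, in each listed case, that the two hypotheses of Proposition~\ref{prop:n1-limit-n} hold: (i) $c_n\cdot f(n)$ diverges as $n\to\infty$, and (ii) $c_n^2\cdot d(V_n)$ converges to a finite strictly positive constant $\Theta_1$. Once (i) and (ii) are checked, the proposition immediately delivers $n_1(T_n)/|V_n|\to 1$, so no fresh probabilistic content is needed --- the entire proof is the substitution of the prescribed growth orders for $f(n)$, $c_n$, and $d(V_n)$ into the two expressions $c_n f(n)$ and $c_n^2 d(V_n)$, done separately for the two families.

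For the first bullet, write $f(n)=\Theta(n)$, $c_n\to c\in(0,1]$, and $d(V_n)\to\bar d\in(0,\infty)$. Then $c_n f(n)=\Theta(n)$ diverges because $f$ is unbounded, and $c_n^2 d(V_n)\to c^2\bar d$, which is finite and strictly positive; setting $\Theta_1=c^2\bar d$ and invoking Proposition~\ref{prop:n1-limit-n} finishes that case. For the second bullet, substitute $c_n=\Theta(g(n)/n)$ together with the prescribed order for $d(V_n)$. Hypothesis (i) is immediate, since $c_n f(n)=\Theta(g(n))\to\infty$ by the assumed divergence of $g$ (and this records $|T_n|=\Theta(g(n))$, which for $g(n)=n^\epsilon$ equals $\Theta(n^\epsilon)$ and shrinks toward an absolute constant as $\epsilon\downarrow 0$). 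For (ii) one forms $c_n^2 d(V_n)$: the $g(n)^{\pm 2}$ factors cancel, so the product collapses to a pure power of $n$, and the remaining task is to confirm that the exponents in the bullet are calibrated so that this power yields a finite strictly positive limit. Specializing $g(n)=n^\epsilon$ then reproduces the displayed $c_n$ and $d(V_n)$; letting $\epsilon\downarrow 0$ drives the degree order toward linear while the sample size $\Theta(n^\epsilon)$ tends to a constant, which is the Erd\H{o}s--R\'enyi regime cited, and one can cross-check that limiting picture against the mean degree $(N-1)p$ of $G(N,p)$.

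I expect no real obstacles, only two points that deserve care. First, the $O(\cdot)$'s in the statement should be read as sharp two-sided orders ($\Theta$): hypothesis (ii) of Proposition~\ref{prop:n1-limit-n} requires $c_n^2 d(V_n)$ to stay bounded away from both $0$ and $\infty$, so a one-sided upper bound is not enough and the degree scaling in the second bullet must be pinned exactly. Second, one must check $c_n\in(0,1]$ for all large $n$, which in the second bullet is the mild requirement that $g(n)\le n$ eventually --- automatic for $g(n)=n^\epsilon$ with $\epsilon<1$, and a harmless restriction on $g$ in general. With these understood, each bullet is a one-line appeal to Proposition~\ref{prop:n1-limit-n}, and the exponent bookkeeping in the second bullet is the only thing to watch.
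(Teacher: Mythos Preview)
Your plan is exactly what the paper does: it offers no proof of the corollary beyond the phrase ``it is straightforward to verify,'' so checking hypotheses (i) and (ii) of Proposition~\ref{prop:n1-limit-n} case by case is the whole content. Your treatment of the first bullet is fine and matches the intended one-line verification.

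There is, however, a concrete gap in your handling of the second bullet. You write that after the $g(n)^{\pm 2}$ factors cancel, ``the remaining task is to confirm that the exponents \ldots\ are calibrated so that this power yields a finite strictly positive limit,'' and you leave that confirmation undone. If you actually carry it out with the exponents as printed, it fails:
\[
c_n^2\,d(V_n)\;=\;\Theta\!\left(\frac{g(n)^2}{n^2}\cdot\frac{n^{1-\epsilon}}{g(n)^2}\right)\;=\;\Theta\!\left(n^{-1-\epsilon}\right)\;\longrightarrow\;0,
\]
and the specialization $g(n)=n^\epsilon$ gives the same thing, $c_n^2 d(V_n)=\Theta(n^{2\epsilon-2}\cdot n^{1-3\epsilon})=\Theta(n^{-1-\epsilon})$. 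So hypothesis (ii) of Proposition~\ref{prop:n1-limit-n} is \emph{not} met with $\Theta_1>0$; the deferred step in your plan does not go through. (The described ``limit case'' with linear mean degree and absolute-constant sample size likewise gives $c_n^2 d(V_n)=\Theta(1/n)\to 0$.) This is a defect in the stated exponents of the corollary rather than in your method; the scaling that would make $c_n^2 d(V_n)$ a positive constant is $d(V_n)=\Theta(n^2/g(n)^2)$, not $\Theta(n^{1-\epsilon}/g(n)^2)$. Your caveat about reading $O(\cdot)$ as $\Theta(\cdot)$ is well taken, but it does not rescue the arithmetic here.
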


\subsection{Estimating Population Size from Respondent-Driven Samples in Configuration Graphs}
\label{sec:n2}
Though $n_1$ shows robust performance under uniform random sampling (see Section \ref{sec:eval-n1}), such an approach to data collection is seldom possible or reliable when working with hidden populations. As discussed above, sampling-hard to-reach populations represents considerable practical challenges \cite{heckathorn2017network}, and many current surveys of hidden populations have come to depend on a tracked ``peer referral'' process knowns as respondent driven sampling \cite{RDS2002}. Given the tendency of RDS to oversample high degree nodes, issues arise for edge-based size estimation regarding the sampling of incomplete edges, and steps must be taken to account for differences between the average degree of the sample and the average degree of the population from which the sample is drawn. 

For purposes of estimation, we consider a respondent-driven sample to be a random variable $\text{RDS-CAPTURE}(G, |D|, c, n_0)$ requiring four parameters: an underlying networked population $G=(V,E)$, 
a specified number of seeds $|D|$, the number of recruiting coupons $c$ to be given to each subject, and $r$ the target sample size. Informally stated, the procedure chooses $s$ random initial ``seed'' subjects in the network.  Each of these subjects is asked to participate in a ``referral'' process by being given $c$ recruiting coupons to be distribute among their peers.  When one or more of those peers come in for interview with the coupon received from their recruiter, they too are given $c$ coupons and participate in the referral process.  The scheme proceeds recursively in this manner until $r$ individuals have been recruited and interviewed. If and whenever the referral process stalls before $r$ subjects have been interviewed, a new seed is recruited.
Participation incentives are arranged to ensure that no subject will be the recipient of more than one coupon.
Note that this breadth-first search process always yields a collection of disjoint trees \cite{Bollobas98a}.
In a simulated setting this real-world process is implemented as formally described in the RDS-CAPTURE procedure presented as Algorithm \ref{rds-algo} (pp. \pageref{rds-algo}).

\begin{algorithm}[b]
\caption{Capturing a respondent-driven sample}\label{rds-algo}
\begin{algorithmic}[1]
\Procedure{RDS}{$G$, $|D|$, $c$, $r$}
\State $t \gets 0$
\State $S_0 \gets \{ v_1, \ldots, v_s\}$ a set of $|D|$ distinct ``seeds'' uniformly at random from $V[G]$.
\State $T_0 \gets \emptyset$.
\State $F_0 \gets S_0$.
\Repeat 
\State $t \gets t+1$
\State $x_{t} \gets $ a uniformly randomly chosen element from $F_{t-1}$
\State $N(x_{t}) \gets \{ v\in V[G] \setminus S_{t-1} \;|\; (x_{t},v) \in E[G] \}$ its undiscovered neighbors
\If {$|N(x_{t})| \leqslant c$}
\State $R(x_{t}) \gets N(x_{t})$
\Else
\State $R(x_{t}) \gets$ a uniformly  random chosen size-$c$ subset of $N(x_{t})$
\EndIf
\State $S_{t} \gets S_{t-1} \cup \{ x_{t} \} \cup R(x_{t})$
\State $T_{t} \gets T_{t-1} \cup \{ (x_t, v) \;|\; v \in R(x_{t})\}$
\State $F_{t} \gets F_{t-1} \setminus \{ x_{t} \} \cup R(x_{t})$
\If {$F_{t} = \emptyset$ and $|S_{t}| < n_0$}
\State $F_{t} \gets \{ v \}$ a single ``seed'' chosen at random from $V[G] \setminus S_{t}$
\EndIf
\Until {$|S_{t}| \geqslant r$}
\State \Return $(S_t, T_t)$
\EndProcedure
\end{algorithmic}
\end{algorithm}

\begin{assumption}
\label{assumption-harmonic}
Whenever we are considering $H=(S,F)$ to be a subgraph on $S\subseteq V$ obtained through an RDS process inside graph $G=(V,E)$, we will assume $\widetilde{d}(S) \sim d(V)$.  This assumption is justified in prior work \cite{salganik_sampling_2004,heckathorn_extensions_2007}, provably true for configuration graphs \cite{giles2009}, and reflects the basic fact that the harmonic mean is robust against the inclusion of high-degree outliers that we may face when $S$ is obtained via a non-uniform sampling process like RDS.
\end{assumption}

The next estimator $n_2$, provides an estimate $|V|$ from a respondent driven sample $S \subseteq V$.

\begin{definition}
\label{def:n2}
Given a graph $G=(V,E)$, a set $S \subseteq V$, and $H=(S,F)$ a subgraph with edge set $F \subseteq E \cap (S \times S)$, define
\begin{align}
\label{eq:n2}
n_2(S,F) &\coloneqq \frac{\frac{d(S)-1}{\widetilde{d}(S)} \cdot |S|\cdot \langle R(S,F) \rangle }{\langle M(S,F) \rangle}
\end{align}
\end{definition}

The next proposition gives sufficient conditions under which respondent-driven samples $S\subseteq V$ produce consistent estimates $n_2(T) \sim |V|$ when $|V|$ is large.

\begin{proposition}
\label{prop:n2}
For $n=1,2,\ldots$ let $G_n=(V_n,E_n)$ be a graph on $|V_n|=f(n)$ vertices obtained by configuration graph sampling via degree distribution ${\cal D}_n$, where $f(n)$ grows unboundedly.  Let $c_n \in (0,1]$, and take $S_n\subseteq V_n$ to be a subset of size $|S_n| = \lfloor c_n\cdot f(n) \rfloor$ selected using RDS sampling in $G_n$ from $|D_n|$ seeds chosen uniformly at random.  Define the random variable
  \begin{align*}
\Delta_n &\coloneqq \frac{d(S_n)-1}{\widetilde{d}(S_n)}.
  \end{align*}
Accepting Assumption \ref{assumption-harmonic}, if $c_n\cdot f(n)/D_n$ diverges as $n$ goes to infinity, while 
    \begin{align}
\Delta_n^2 \cdot c_n^2 \cdot d(V_n) = \frac{{\left( d(S_n)-1 \right)}^2 \cdot c_n^2}{\widetilde{d}(S_n)} \xrightarrow{\phantom{xxx}p\phantom{xxx}} \Theta_2
  \end{align}
for some finite constant $\Theta_2 > 0$, then $\frac{n_2(S_n)}{|V_n|}$ necessarily converges to $1$.
\end{proposition}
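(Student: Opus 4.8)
The plan is to run the same argument as in the proof of Proposition~\ref{prop:n1-limit-n}, with the exchangeability of the \emph{unused} half-edges of the configuration model $G_n$ playing the role that uniform sampling played there. Write $|S_n|=\lfloor c_n f(n)\rfloor$, let $k_n$ denote the number of recruitment trees produced by the RDS run (the $D_n$ seeds, plus any seed injected on a stall), and set $\overbar R_n\coloneqq\frac{1}{f(n)}\langle R(S_n,F_n)\rangle$ and $\overbar M_n\coloneqq\frac{1}{f(n)}\langle M(S_n,F_n)\rangle$. Then, exactly as before,
\[
\frac{n_2(S_n,F_n)}{f(n)}=\Delta_n\cdot\frac{|S_n|}{f(n)}\cdot\frac{\overbar R_n}{\overbar M_n}=\frac{\Delta_n\,c_n\,\overbar R_n}{\overbar M_n}\cdot\frac{\lfloor c_n f(n)\rfloor}{c_n f(n)},
\]
and the last factor tends to $1$; so it suffices to show that the numerator $\Delta_n c_n\overbar R_n$ and the denominator $\overbar M_n$ both converge in probability to $\Theta_2$ and then invoke Slutsky's theorem as in Proposition~\ref{prop:n1-limit-n}.

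For the numerator, I would first observe that the recruitment subgraph $H_n=(S_n,F_n)$ is a union of $k_n$ trees on $|S_n|$ vertices, so deleting its edges from the degree sum over $S_n$ yields the elementary identity $\langle R(S_n,F_n)\rangle=|S_n|\,d(S_n)-(|S_n|-k_n)$, hence $\overbar R_n=c_n\bigl(d(S_n)-1\bigr)+k_n/f(n)+o(1)$. The hypothesis that $c_n f(n)/D_n$ diverges, together with the standard fact --- provable for configuration graphs~\cite{giles2009} --- that stalls and hence restarts are asymptotically negligible, gives $k_n=o(|S_n|)$, so the $k_n/f(n)$ term is swallowed by $c_n(d(S_n)-1)$. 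Therefore $\Delta_n c_n\overbar R_n=\Delta_n c_n^2\bigl(d(S_n)-1\bigr)+o(1)=\dfrac{\bigl(d(S_n)-1\bigr)^2 c_n^2}{\widetilde d(S_n)}+o(1)$, which converges in probability to $\Theta_2$ by the standing hypothesis of the proposition.

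The denominator carries the real content. Conditioning on $S_n$, on its degree sequence, and on the forest $F_n$, the half-edges of $G_n$ not consumed by $F_n$ are matched uniformly at random; hence each free end rooted in $S_n$ is paired with another free end rooted in $S_n$ with probability $\langle R(S_n,F_n)\rangle/\bigl(f(n)d(V_n)-O(|S_n|)\bigr)$, and summing over the $\langle R(S_n,F_n)\rangle$ free ends of $S_n$ gives $E[\overbar M_n\mid S_n,F_n]=(1+o(1))\,(\overbar R_n)^2/d(V_n)$, provided the $O(|S_n|)$ half-edges taken up by the recruitment forest (and the $k_n$ seeds) are negligible against $f(n)d(V_n)$ --- which is what the quantitative hypotheses are meant to secure. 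Plugging in $\overbar R_n\sim c_n(d(S_n)-1)$ and Assumption~\ref{assumption-harmonic} ($\widetilde d(S_n)\sim d(V_n)$) gives $E[\overbar M_n\mid S_n,F_n]\sim c_n^2(d(S_n)-1)^2/\widetilde d(S_n)\to\Theta_2$. It then remains to upgrade this to convergence in probability of $\overbar M_n$ itself, which I would do with a Chebyshev (second-moment) bound: the indicators ``free end $i$ of $S_n$ is matched to free end $j$ of $S_n$'' are only weakly (negatively) dependent under a uniform random matching, so $\operatorname{Var}(\overbar M_n\mid S_n,F_n)=o\bigl((E[\overbar M_n\mid S_n,F_n])^2\bigr)$. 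Combining the two limits through Slutsky gives $n_2(S_n,F_n)/f(n)\to\Theta_2/\Theta_2=1$.

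The step I expect to be the main obstacle is exactly this last one: showing that conditioning on the RDS output $(S_n,F_n)$ genuinely restores a configuration-model pairing on the residual half-edges, that $\langle M(S_n,F_n)\rangle$ concentrates around its conditional mean despite the strong correlations the recruitment walk induces (the matches we are counting live precisely on the stubs the walk did \emph{not} consume), and --- most delicately --- that all the $O(\cdot)$ and $o(1)$ error terms, in particular the $O(|S_n|)$ corrections from the forest's consumed half-edges, really are lower order at the scales the hypotheses pin down. This is where $c_n f(n)/D_n\to\infty$ (to kill $k_n$) and the finiteness of $\Theta_2$ (which fixes the joint growth rates of $c_n$, $d(S_n)$ and $d(V_n)$, and in particular keeps $|S_n|$ small relative to $f(n)d(V_n)$) get consumed.
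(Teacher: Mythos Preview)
Your approach is essentially the paper's own --- forest structure of the RDS subgraph for $\overbar R_n$, uniform residual matching of the configuration model for $\overbar M_n$, then Slutsky --- and you are if anything more explicit than the paper about the concentration step, which it handles by a bare appeal to the law of large numbers and the configuration-model pairing without isolating a variance bound. The one presentational difference is that the paper routes the computation through an auxiliary \emph{uniform} sample $T_n\subseteq V_n$ with $|T_n|=|S_n|$, expressing the limits of its normalized quantities $R_n^*, M_n^*$ in terms of $\langle R(T_n,\emptyset)\rangle$ so as to recycle the arithmetic of Proposition~\ref{prop:n1-limit-n}; your direct calculation reaches the same endpoints.
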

\begin{proof}
Let $(S_n,F_n)$ be subgraph determined by an RDS sampling process in $G_n$ and let $T_n\subseteq V_n$ be an equal-sized set of vertices chosen by uniform random sampling, $|T_n|=|S_n|$. 
For random $u \in S_n$ and $v\in T_n$, as $n$ tends to infinity
\begin{eqnarray}
 \frac{|N(u,\emptyset)|}{d(S_n)} - \frac{|N(v,\emptyset)|}{d(T_n)} = \frac{|N(u,\emptyset)|}{d(S_n)} - \frac{|N(v,\emptyset)|}{d(V_n)} = \frac{|N(u,\emptyset)|}{d(S_n)} - \frac{|N(v,\emptyset)|}{\widetilde{d}(S_n)} &\xrightarrow{\phantom{xxx}p\phantom{xxx}} 0.\label{eq:ratio-eq}
 \end{eqnarray}
 where the first equality stems from the law of large numbers, and the second from Assumption \ref{assumption-harmonic}.
 Now $S_n$ is an RDS sample and hence is the disjoint union of $D_n$ many trees.  It follows that
\begin{align*}
\frac{|F_n|}{|S_n|} &= 1-\frac{|D_n|}{\lfloor c_n\cdot f(n) \rfloor}
\end{align*}
Since $|S_n|=\lfloor c_n\cdot f(n) \rfloor$ diverges and $c_n\cdot f(n)/D_n$ diverges, we may conclude that 
\begin{eqnarray}
\lim_{n \rightarrow \infty} \frac{|F_n|}{|S_n|} = 1. \label{eq:limit-1}
\end{eqnarray}
We note that $|N(u,F_n)| \leqslant  |N(u,\emptyset)|$, and incorporating (\ref{eq:limit-1}) back into the final expression in (\ref{eq:ratio-eq}), we deduce
\begin{align}
\frac{|N(u,F_n)|}{d(S_n) - 1} - \frac{|N(v,\emptyset)|}{\widetilde{d}(S_n)} &\xrightarrow{\phantom{xxx}p\phantom{xxx}} 0.\label{eq:ratio-eq2}
\intertext{Definition \ref{def:basics} equation (\ref{def:R}) and linearity of expectation then imply that as $n$ tends to infinity}
\label{eq:r-fix}
\langle R(S_n,F_n) \rangle &\xrightarrow{\phantom{xxx}p\phantom{xxx}} \frac{d(S_n)-1}{\widetilde{d}(S_n)}  \cdot \langle R(T_n,\emptyset) \rangle.
\end{align}
The configuration graph sampling process dictates that as $n$ tends to infinity, for uniformly random $u\in S_n$ 
$$E[\langle N(u,F_n) \cap S_n \rangle] = [d(u) - 1] \cdot \frac{\langle R(S_n,F_n) \rangle}{2|E_n|} = [d(u) - 1] \cdot \frac{\langle R(S_n,F_n) \rangle}{d(V_n) \cdot f(n)}.$$
Definition \ref{def:basics} equation (\ref{def:M}), expression (\ref{eq:r-fix}), the law of large numbers, and linearity of expectation, together imply that as $n$ tends to infinity
\begin{align}
\langle M(S_n,F_n) \rangle &\xrightarrow{\phantom{xxx}p\phantom{xxx}} \frac{\langle R(S_n,F_n) \rangle^2}{d(V_n) \cdot f(n)} \xrightarrow{\phantom{xxx}p\phantom{xxx}} \frac{1}{d(V_n) \cdot f(n)} \cdot \left[ \frac{d(S_n)-1}{\widetilde{d}(S_n)} \right]^2 \cdot \langle R(T_n,\emptyset) \rangle^2. \label{eq:m-fix}
\end{align}
Define the following random variables, closely related to (\ref{ref:n1-rn-def}) and (\ref{ref:n1-mn-def}) of Proposition \ref{prop:n1-limit-n}:
  \begin{align}
R_n^* \coloneqq \langle R(S_n, F_n) \rangle \;/\; f(n) = \Delta_n \cdot \overbar{R}_n &\xrightarrow{\phantom{xxx}p\phantom{xxx}} \Delta_n \cdot c_n \cdot d(V_n) \label{def:rv1-prop2}\\
M_n^* \coloneqq \langle M(S_n, F_n) \rangle \;/\; f(n) = \Delta_n^2 \cdot \overbar{R}_n^2 / d(V_n) &\xrightarrow{\phantom{xxx}p\phantom{xxx}} \Delta_n^2 \cdot c_n^2 \cdot d(V_n) \label{def:rv2-prop2}
\intertext{From our assumptions on the convergence of $\Delta_n^2 \cdot c_n^2 \cdot d(V_n)$, we see that as $n$ tends to infinity}
\Delta_n \cdot c_n \cdot R_n^* = \Delta_n^2 \cdot c_n^2 \cdot d(V_n)&\xrightarrow{\phantom{xxx}p\phantom{xxx}} \Theta_2 \label{ref:n1-rn2}\\
M_n^* &\xrightarrow{\phantom{xxx}p\phantom{xxx}} \Theta_2 \label{ref:n1-mn2}
  \end{align}
Considering (\ref{ref:n1-rn2}) and (\ref{ref:n1-mn2}) as preconditions of Slutsky’s theorem \cite{slutsky1925uber}, we conclude:
\begin{eqnarray*}
\frac{n_2(S_n)}{f(n)} 
= \frac{1}{f(n)}\cdot\frac{\Delta_n \cdot c_n f(n) \cdot R_n^*}{M_n^*} \xrightarrow{\phantom{xxx}d\phantom{xxx}}   \frac{\plim_{n\rightarrow \infty} \Delta_n \cdot c_n \cdot R_n^*}{\plim_{n\rightarrow \infty} M_n^*} = \frac{\Theta_2}{\Theta_2} = 1.  
\end{eqnarray*}
\end{proof}




\section{Evaluating the $n_1$ and $n_2$ Estimators}
\label{sec:n1-n2-eval}

To evaluate the proposed estimators $n_1$ (\ref{eq:n1}) and $n_2$ (\ref{eq:n2}), we conducted simulation experiments on samples drawn from synthetic networks using uniform and respondent-driven sampling, respectively.  Underlying networks were selected by configuration sampling techniques \cite{BENDER1978296,BOLLOBAS1980311,PhysRevE.64.026118} relative to Lognormal, Poisson, and Exponential distributions.   We also consider Barab\'asi-Albert graphs \cite{RevModPhys.74.47}, and Erd\H{o}s-R\'enyi graphs \cite{citeulike:4012374}. 

\subsection{Families of Synthetic Networks}
\label{sec:graph-families}

While little attention has been paid to the performance of RDS estimation on idealized graph topologies, the tendency of RDS to over-recruit high degree nodes is well known. Other attempts to model peer-referral/``snowball'' recruitment point to the fact that degree distribution can influence the performance of mean estimators \cite{illenberger2012estimating}, and Bayesian approaches to RDS  make considerable use of degree distribution to for population estimation \cite{handcock2014estimating,crawford2017hidden}. To validate the $n_1$ and $n_2$ against a wide range of possible topologies, five ideal families of random graphs were used to perform initial tests. Later, we take up the issue of clustering (Section \ref{sec:n3}), anonymity (Section \ref{sec:anonymity}), and the performance of further augmented estimators on a large real-world network (Section \ref{sec:Brightkite}). 

In what follows, configuration graphs were sampled (relative to the degree distribution) by first attaching the prescribed number of free half-edges to each node. Pairs of free half-edges are then chosen uniformly at random and bound together to form an edge, repeatedly, until no free half-edges remain. Note that in this model, the final graphs can have multiple parallel edges and self loops. 

\begin{definition}
Given set $V$ with $|V|=n$, for each positive $\lambda \in {\mathbb R}$, let distributions ${\cal D}_{{\cal L}(\lambda)}$, ${\cal D}_{{\cal P}(\lambda)}$, ${\cal D}_{{\cal X}(\lambda)}$, and ${\cal D}_{{\cal R}(\lambda)}:V \rightarrow {\mathbb N}$ be defined such that for each $v \in V$:
\begin{itemize}
\item ${\cal D}_{{\cal L}(\lambda,n)}(v)=1+X$ where $X$ is a Lognormal with mean $\lambda-1$ and standard deviation 1.
\item ${\cal D}_{{\cal P}(\lambda,n)}(v)=1+X$ where $X$ is a Poisson random variable with rate parameter $\lambda-1$.
\item ${\cal D}_{{\cal X}(\lambda,n)}(v)=1+X$ where $X$ is an Exponential random variable with mean $\lambda-1$.
\end{itemize}
Corresponding to each of the three distributions above, let ${\cal L}(\lambda,n), {\cal P}(\lambda,n), {\cal X}(\lambda,n), {\cal R}(\lambda,n)$ be the sample spaces of configuration graphs $G~=~(V,E)$ where $|V|=n$. Note that a random graph $G~=~(V,E)$ drawn from these sample spaces will have expected mean vertex degree $E[d(V)]=\lambda$.  
\end{definition}

\begin{definition}
Let ${\cal B}(\lambda, n)$ be the sample space of $n$-vertex Barab\'asi-Albert graphs $G=(V,E)$.  Each such graph is the final output of a process which produces a sequence of graphs $G^{i}=(V^{i},E^{i})$ on $V^{i} \coloneqq \{v_1, \ldots v_{i}\}$ with  $\lambda \leqslant i \leqslant n$.  $G^{\lambda}=(V^{\lambda},V^{\lambda}\times V^{\lambda})$ is taken to be the complete graph.  At each stage $i> \lambda$ of the process, node $v_i$ $( \lambda < i \leqslant n)$ connects to a random number
$$
\Delta_i \coloneqq |E_i \backslash E_{i-1}| = \left\{
\begin{array}{cl}
\lfloor \lambda/2 \rfloor & \text{ with probability } 1+\lfloor \lambda \rfloor - \lambda\\
1+\lfloor \lambda/2 \rfloor & \text{ otherwise.}
\end{array}
\right.
$$
of prior nodes $\{p_{i,1}, \ldots p_{i,\Delta_i}\} \subseteq V^{i-1}$.  This set of neighbors is constructed by sequential sampling without replacement, i.e. as $l=1, \ldots, \Delta_i$, each of the candidates $w \in C_{i,l} \coloneqq V^{i-1} \backslash \{v_{i,1}, \ldots v_{i,l-1}\}$ is chosen with probability  
$$
Prob(p_{i,l}=w) = 
\frac{1+d(w; G^{i-1})}{\sum_{w' \in C_{i,l}} 1+d(w'; G^{i-1})}.
$$
where $d(w; G^{i-1})$ is defined to be the degree of vertex $w$ in graph $G^{i-1}=(V^{i-1},E^{i-1})$. 
Note that when $n \gg \lambda$, a random graph $G=(V,E)$ drawn from ${\cal B}(\lambda, n)$ has expected mean vertex degree $E[d(V)] \sim \lambda$.
\end{definition}

\begin{definition}
Let ${\cal E}(\lambda, n)$ be the sample space of $n$-vertex Erd\H{o}s-R\'enyi graphs $G=(V,E)$, where $E\subseteq V \times V$ is a random subset constructed uniformly at random by taking:
$$
Prob((u,v) \in E) = \left\{
\begin{array}{cl}
\lambda/(n-1) & u\neq v\\
0 & u=v\\
\end{array}
\right.
$$
for each $(u,v)\in V\times V$.  Note that a random graph $G=(V,E)$ drawn from ${\cal E}(\lambda, n)$ will have expected mean vertex degree $E[d(V)] \sim \lambda$.  
\end{definition}

\subsection{Experimental Framework}
\label{sec:exp-framework}

For each of the $5$ families ${\cal L}(\lambda,n), {\cal P}(\lambda,n), {\cal X}(\lambda,n), {\cal B}(\lambda,n)$, and ${\cal E}(\lambda,n)$ defined in Section \ref{sec:graph-families}, we varied $\lambda=3, 5, 10$; from each of these $15$ concrete sample spaces, we used configuration graph sampling to select $30$ random graphs of  sizes $n=5000$, $10K$, $20K$ and $40K$.  In each of these $5 \times 3 \times 4 \times 30 = $1,800 graphs, we generated 30 uniform and 30 RDS samples of size $r=250,500$ and $750$.  In this manner, a total of $1,800\times 30 \times 3 x 2 = $324,000 simulations were conducted.  The Section \ref{sec:eval-n1}, these simulation experiments will involve uniform random samples; in Section \ref{sec:eval-n1} they will be based on respondent-driven samples.

\subsection{Evaluating $n_1$ on Synthetic Networks}
\label{sec:eval-n1}

The experiments here follow the framework described in Section \ref{sec:exp-framework} and use uniform random samples.  The $12$ graphs in Figure \ref{results:n1} present the performance of the $n_1$ estimator as the true population size $n$ is varied from $5\cdot 10^3$ to $40\cdot 10^3$ (vertical axis of the grid) and the size of the uniform sample is varied from $250$ to $750$ (horizontal axis of the grid).  In each of the $12$ graphs, the x-axis varies the average degree $\lambda$ from $3$ to $10$.  For each choice of $\lambda$, the medians and quartile ranges of $n_1$ are given for each of the $5$ graph families.  Each of these is determined by $900$ simulations ($30$ graphs times $30$ uniformly drawn samples in each graph).  

\begin{figure}[tbp]
\setlength{\belowcaptionskip}{12pt}
\centering
\begin{tabular}{ScScScSc}
\subcaptionbox{$n$ = $5\cdot 10^3$, $r=250$\label{1a}}{\includegraphics[width = 0.3\linewidth]{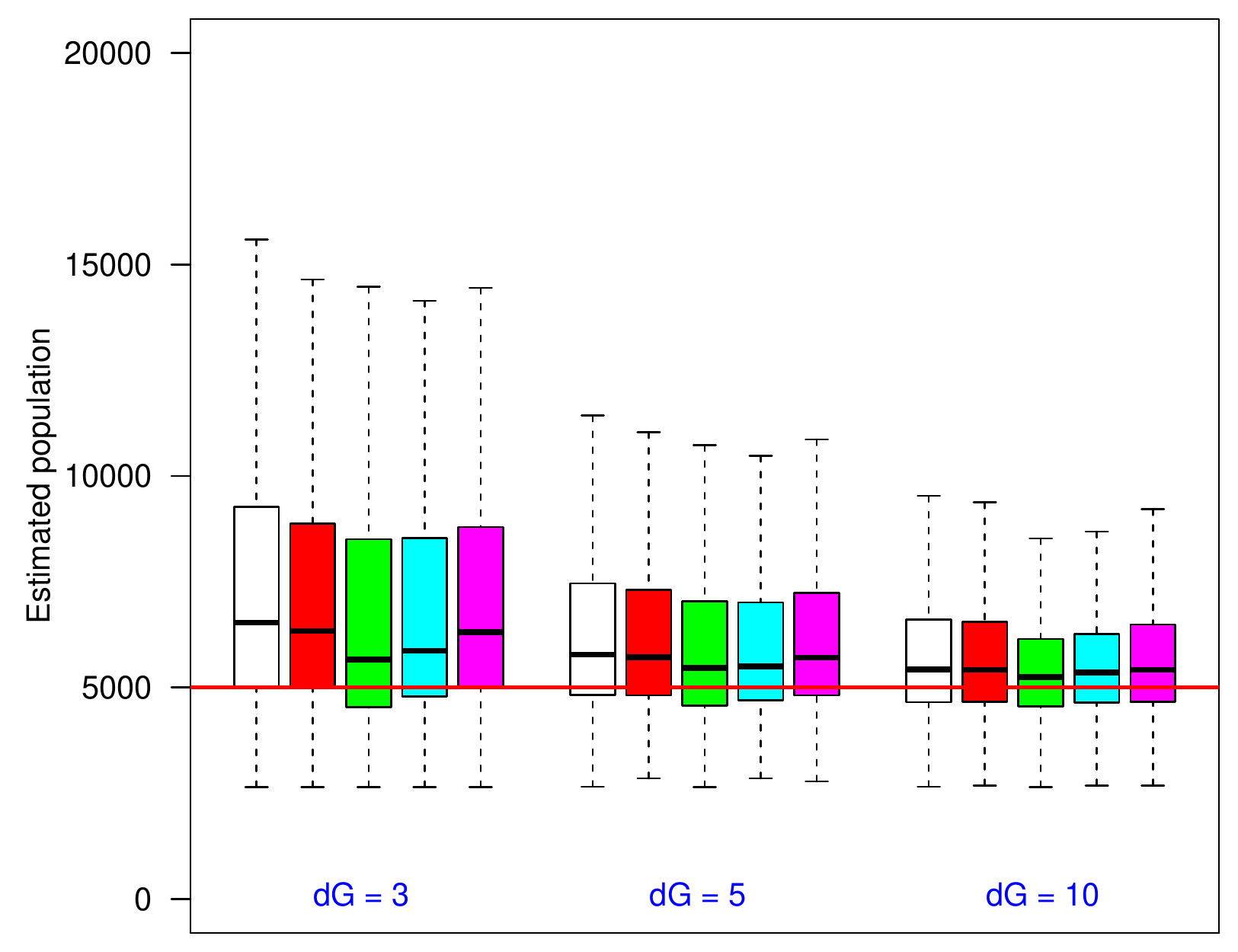}} &
\subcaptionbox{$n$ = $5\cdot 10^3$, $r=500$\label{2a}}{\includegraphics[width = 0.3\linewidth]{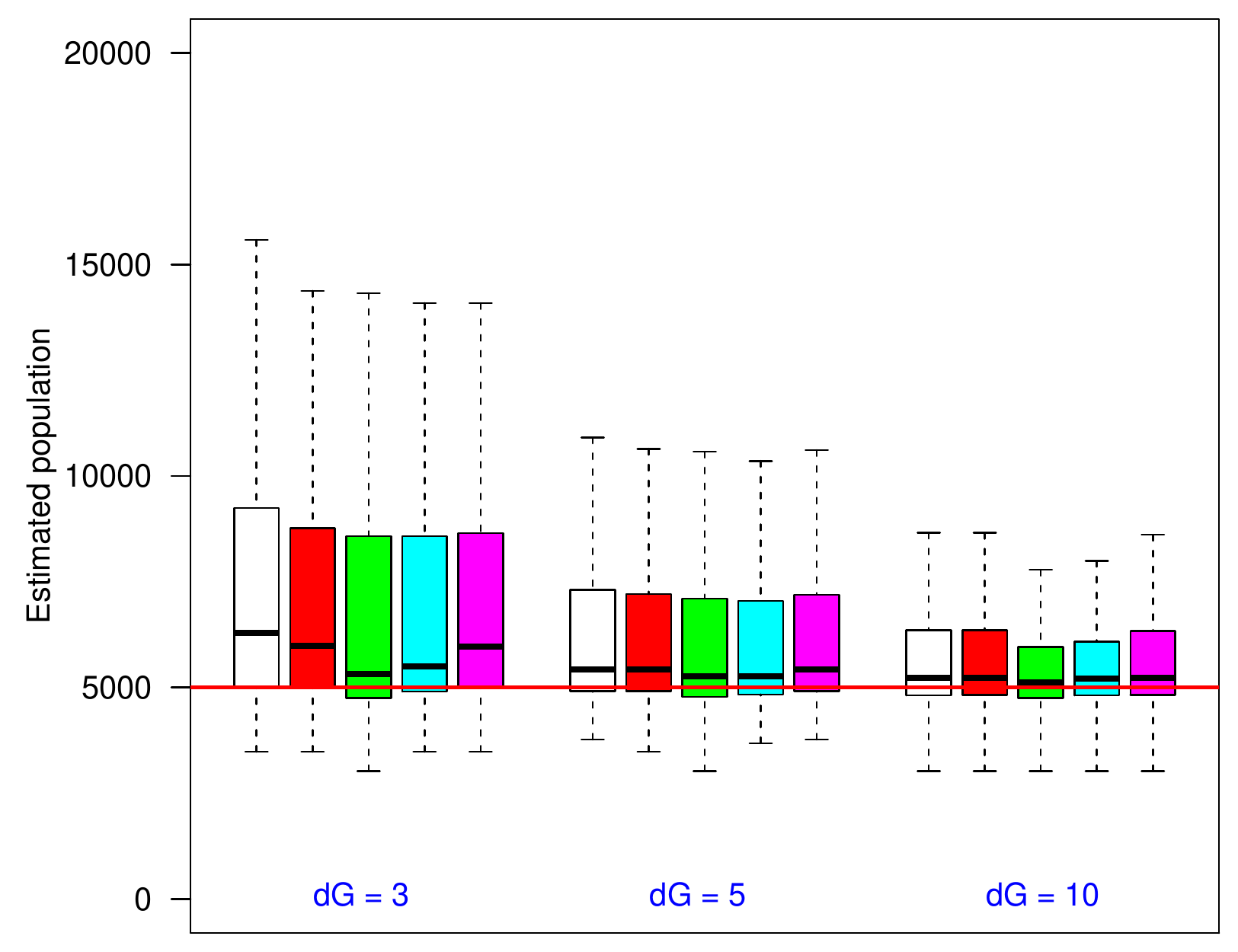}} &
\subcaptionbox{$n$ = $5\cdot 10^3$, $r=750$\label{3a}}{\includegraphics[width = 0.3\linewidth]{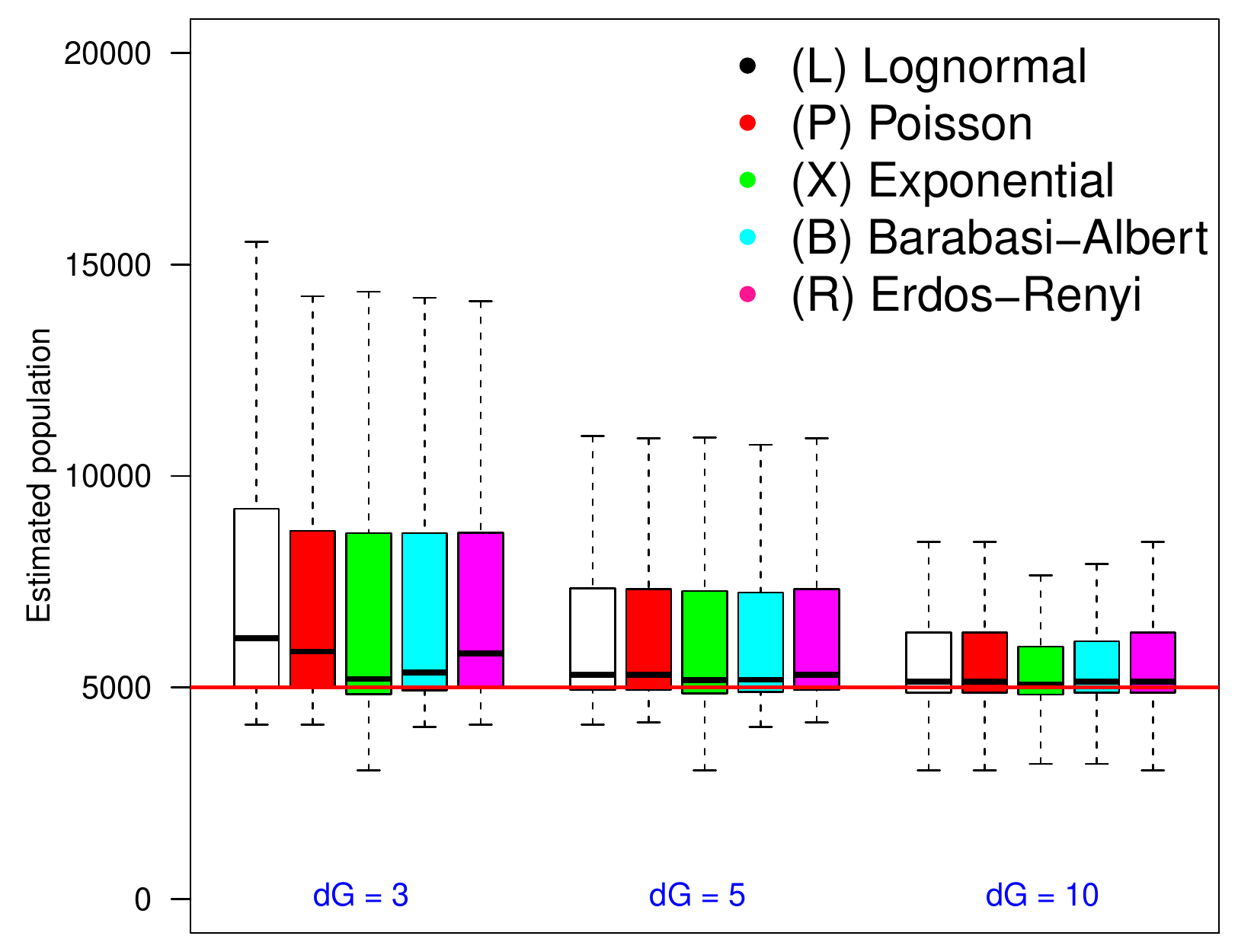}}\\
\subcaptionbox{$n$ = $10\cdot 10^3$, $r=250$\label{1b}}{\includegraphics[width = 0.3\linewidth]{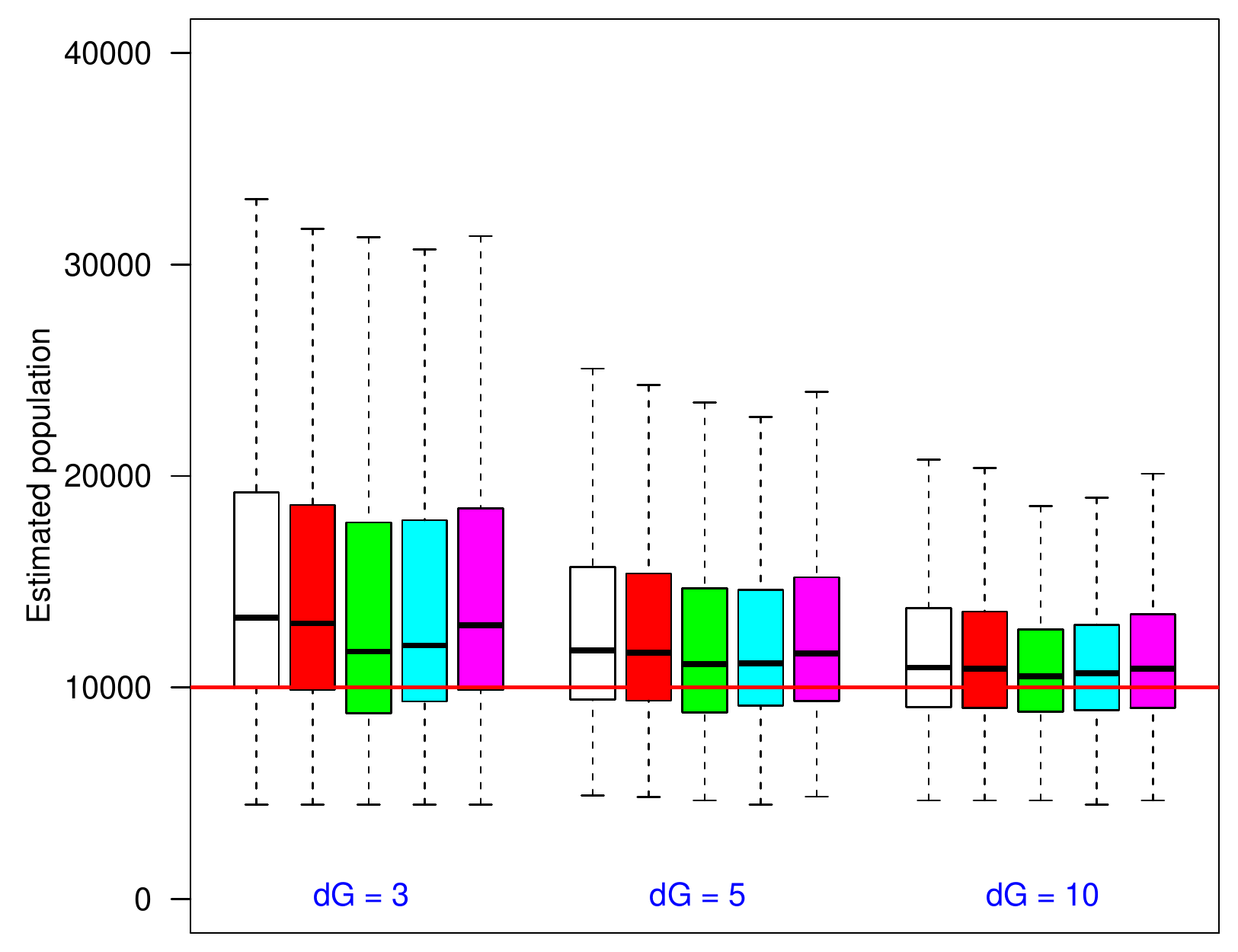}} &
\subcaptionbox{$n$ = $10\cdot 10^3$, $r=500$\label{2b}}{\includegraphics[width = 0.3\linewidth]{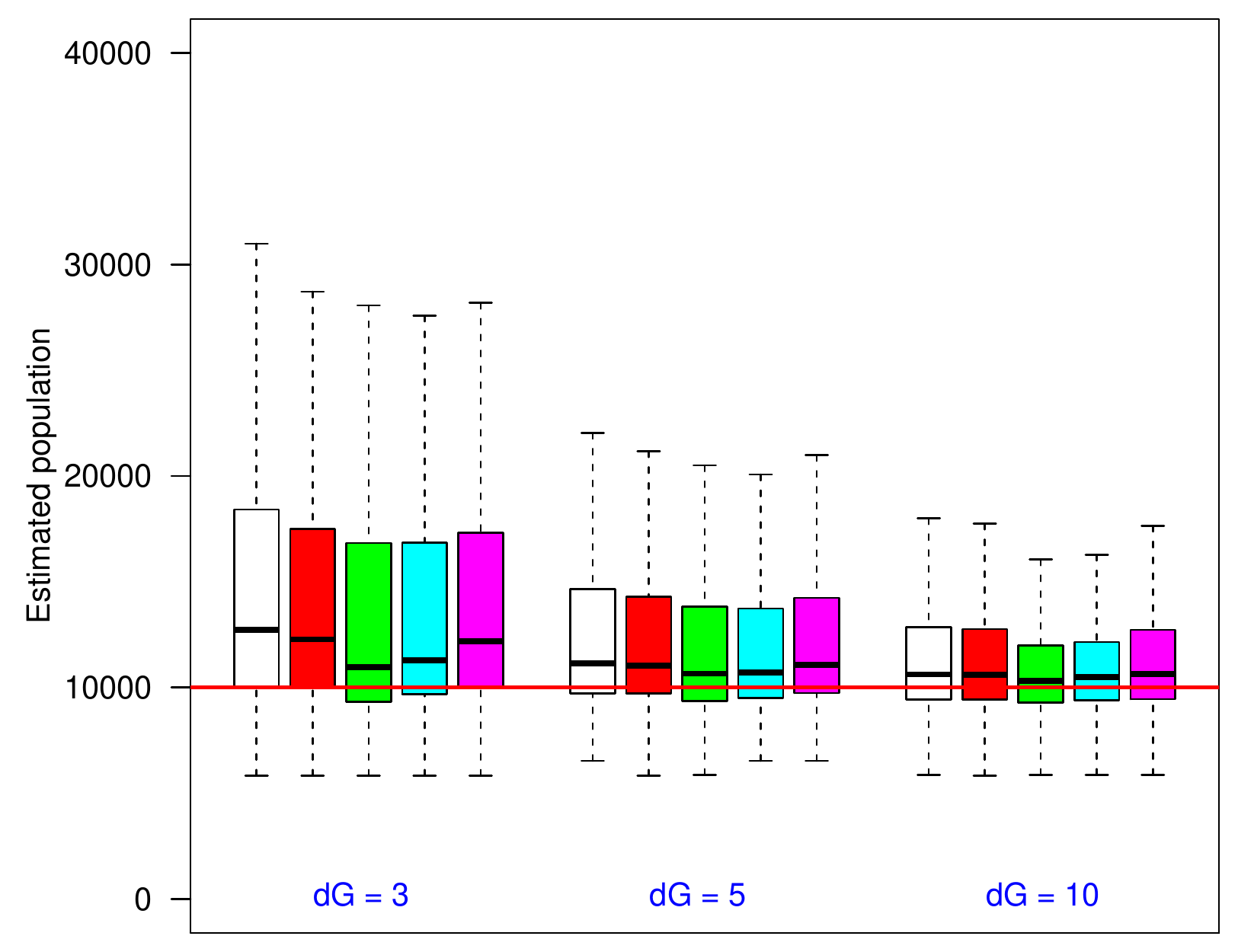}} &
\subcaptionbox{$n$ = $10\cdot 10^3$, $r=750$\label{3b}}{\includegraphics[width = 0.3\linewidth]{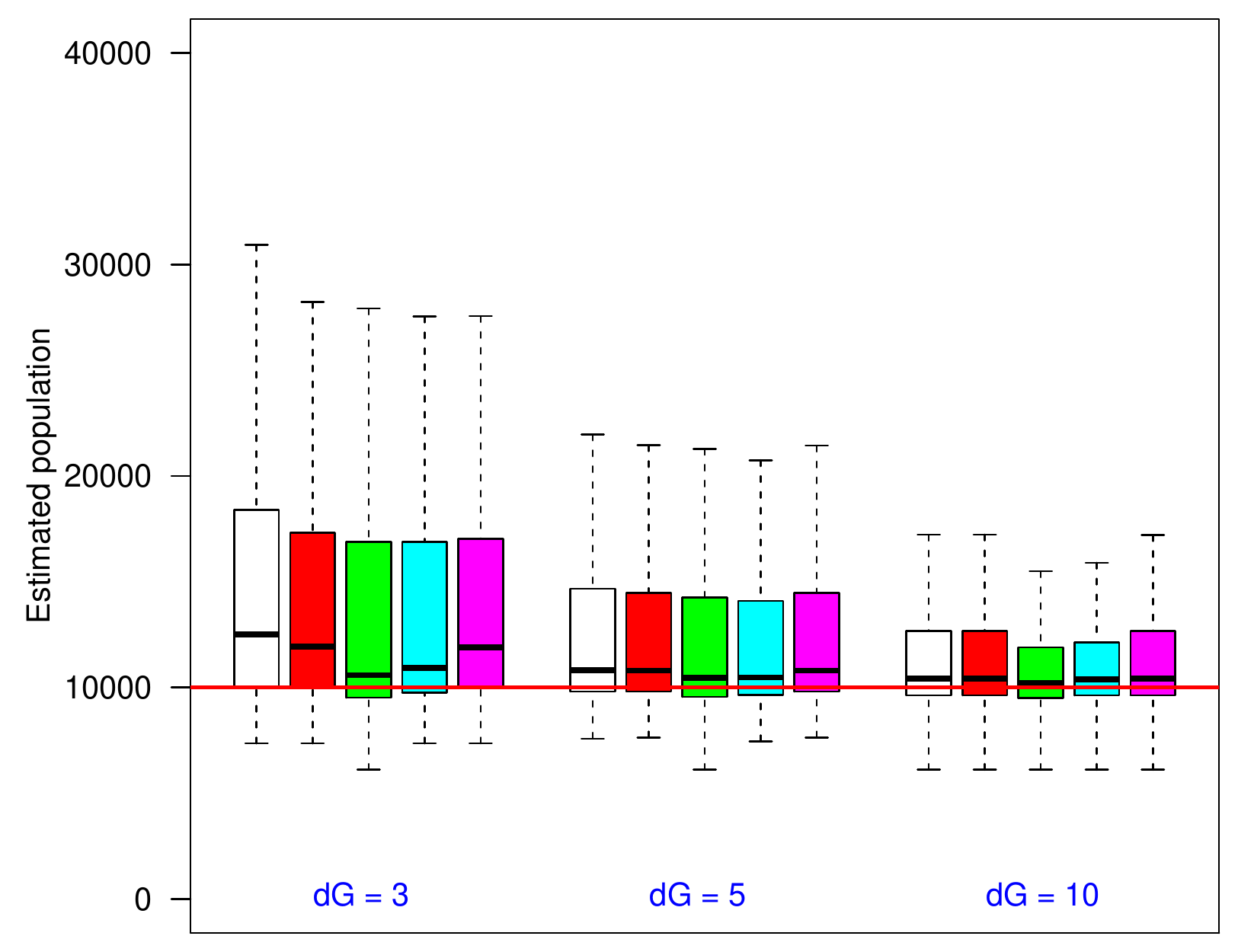}}\\
\subcaptionbox{$n$ = $20\cdot 10^3$, $r=250$\label{1c}}{\includegraphics[width = 0.3\linewidth]{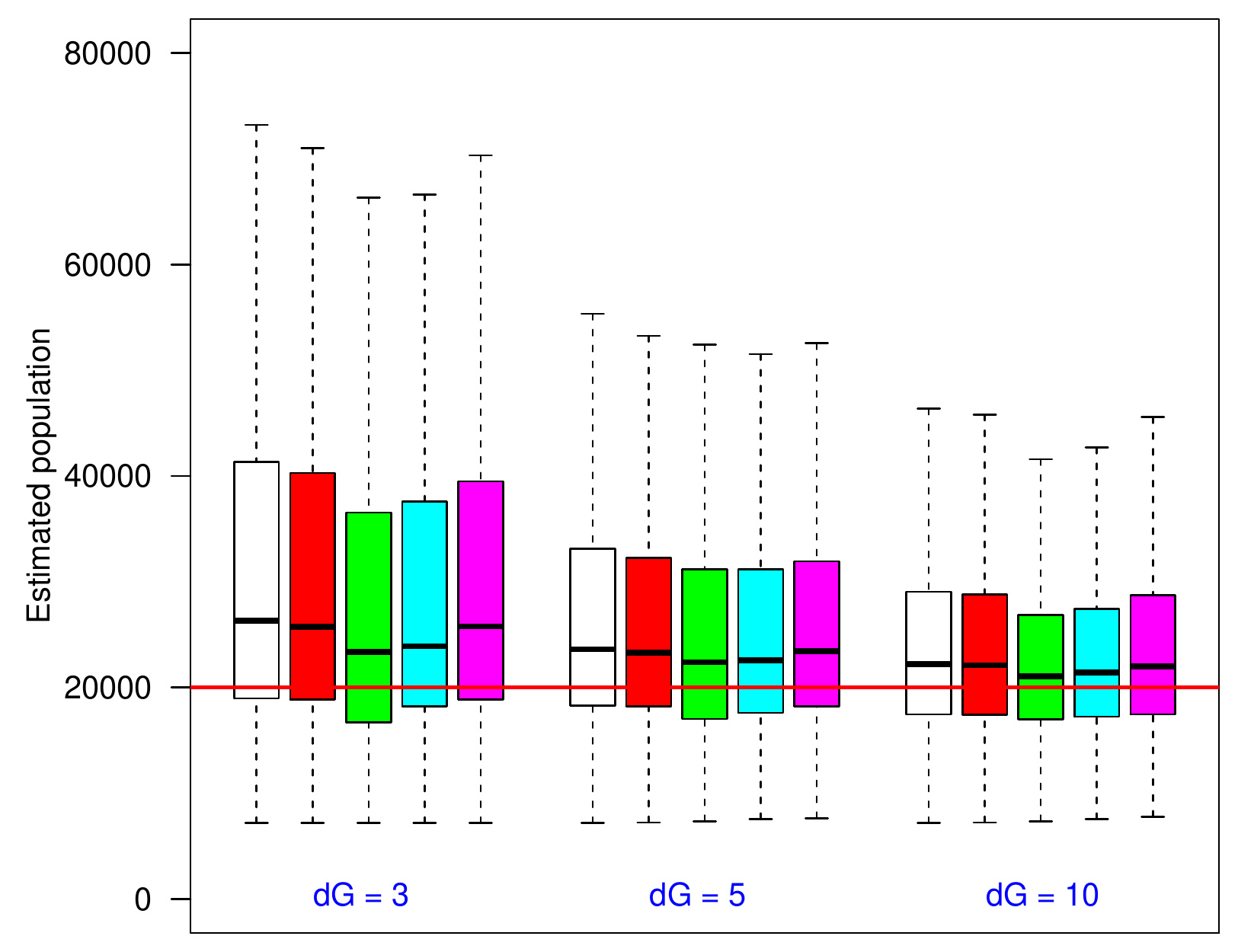}} &
\subcaptionbox{$n$ = $20\cdot 10^3$, $r=500$\label{2c}}{\includegraphics[width = 0.3\linewidth]{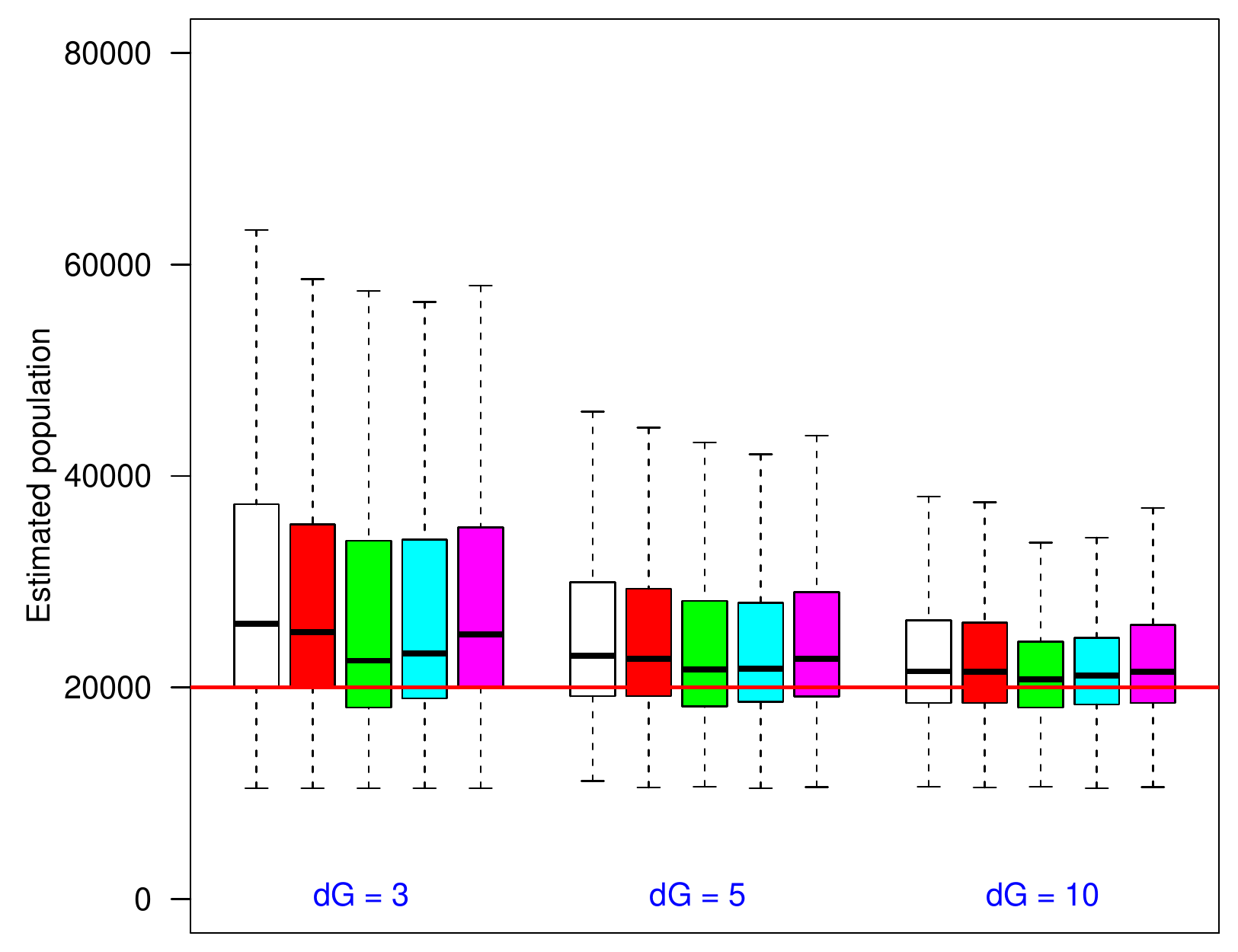}} &
\subcaptionbox{$n$ = $20\cdot 10^3$, $r=750$\label{3c}}{\includegraphics[width = 0.3\linewidth]{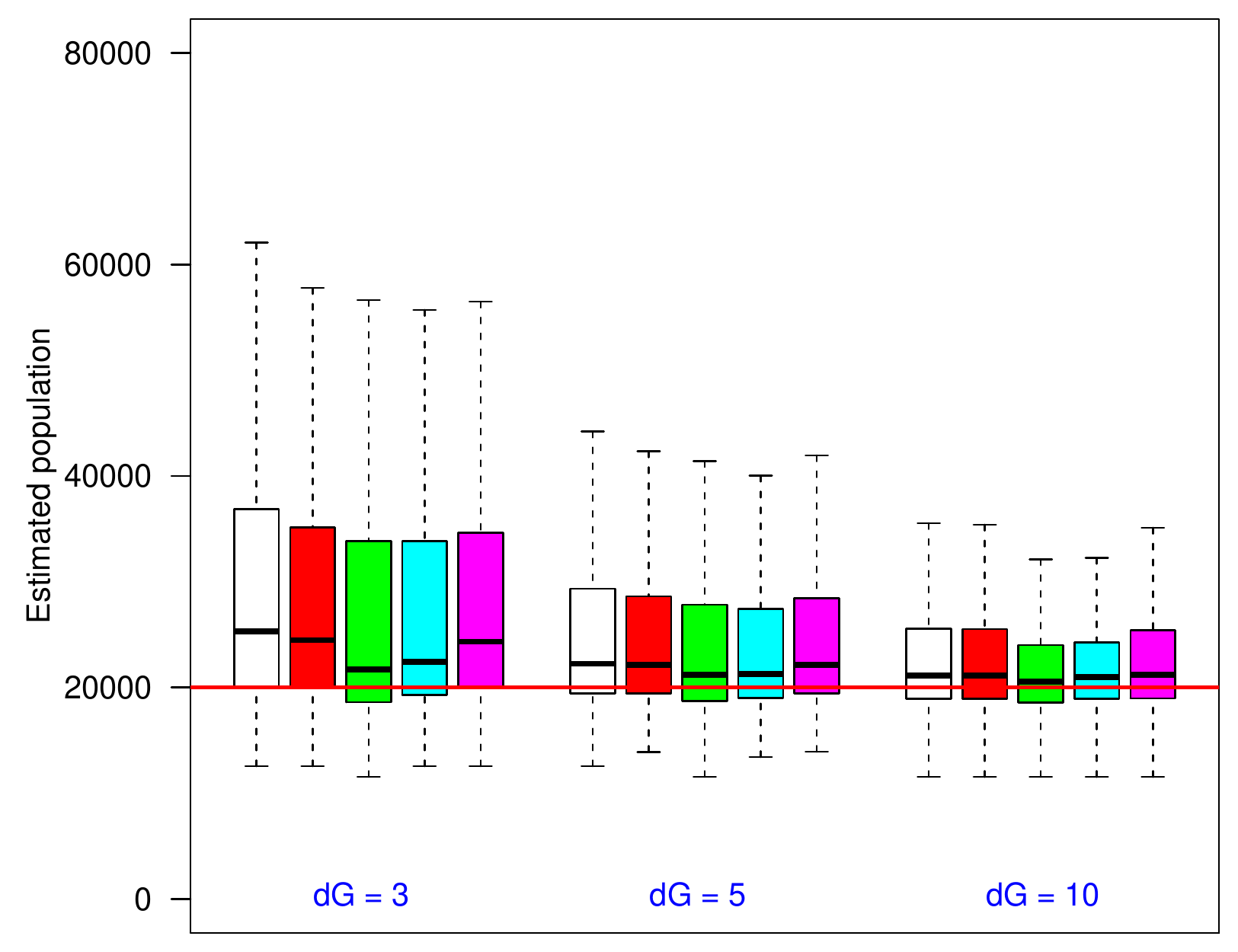}}\\[0pt]
\subcaptionbox{$n$ = $40\cdot 10^3$, $r=250$\label{1d}}{\includegraphics[width = 0.3\linewidth]{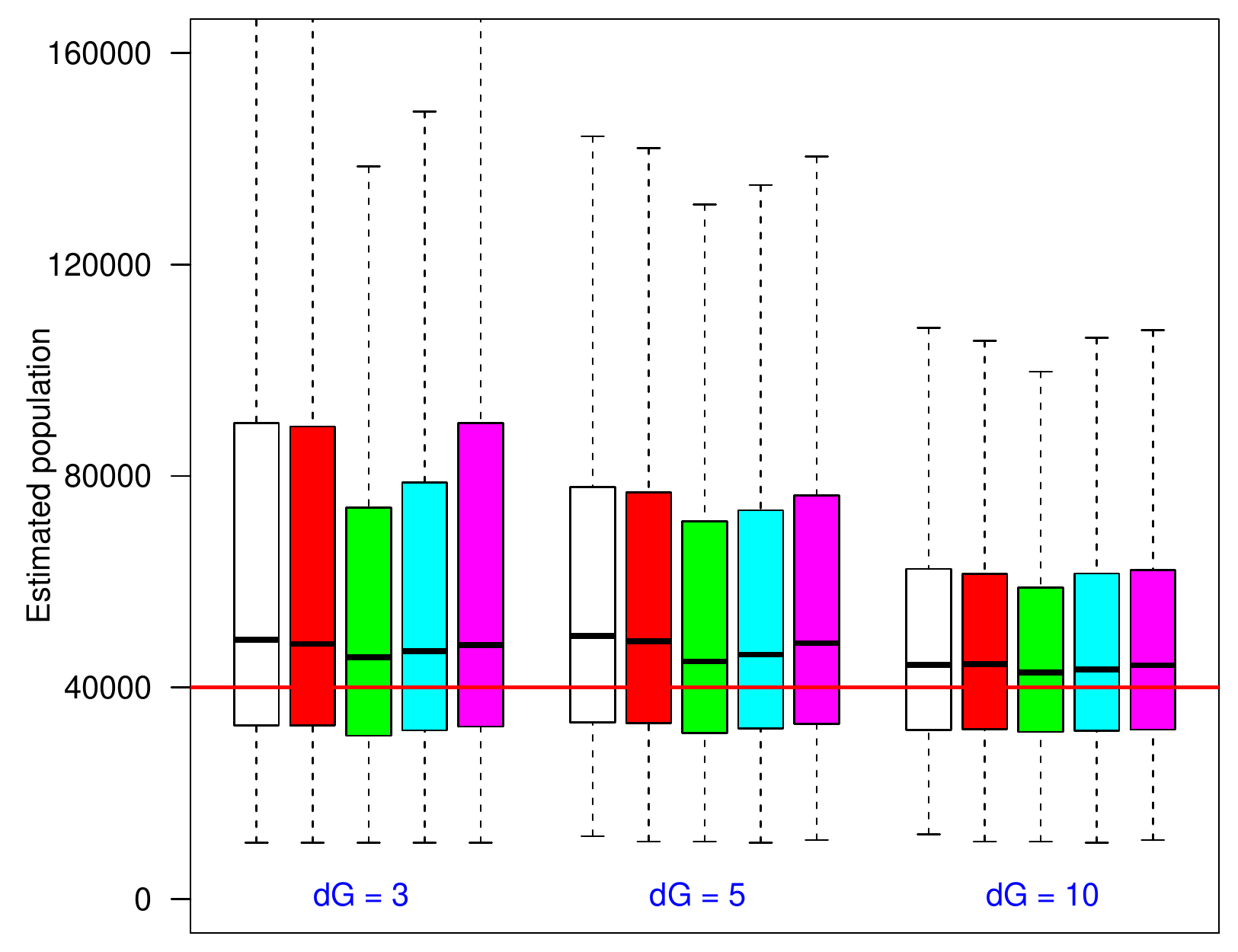}} &
\subcaptionbox{$n$ = $40\cdot 10^3$, $r=500$\label{2d}}{\includegraphics[width = 0.3\linewidth]{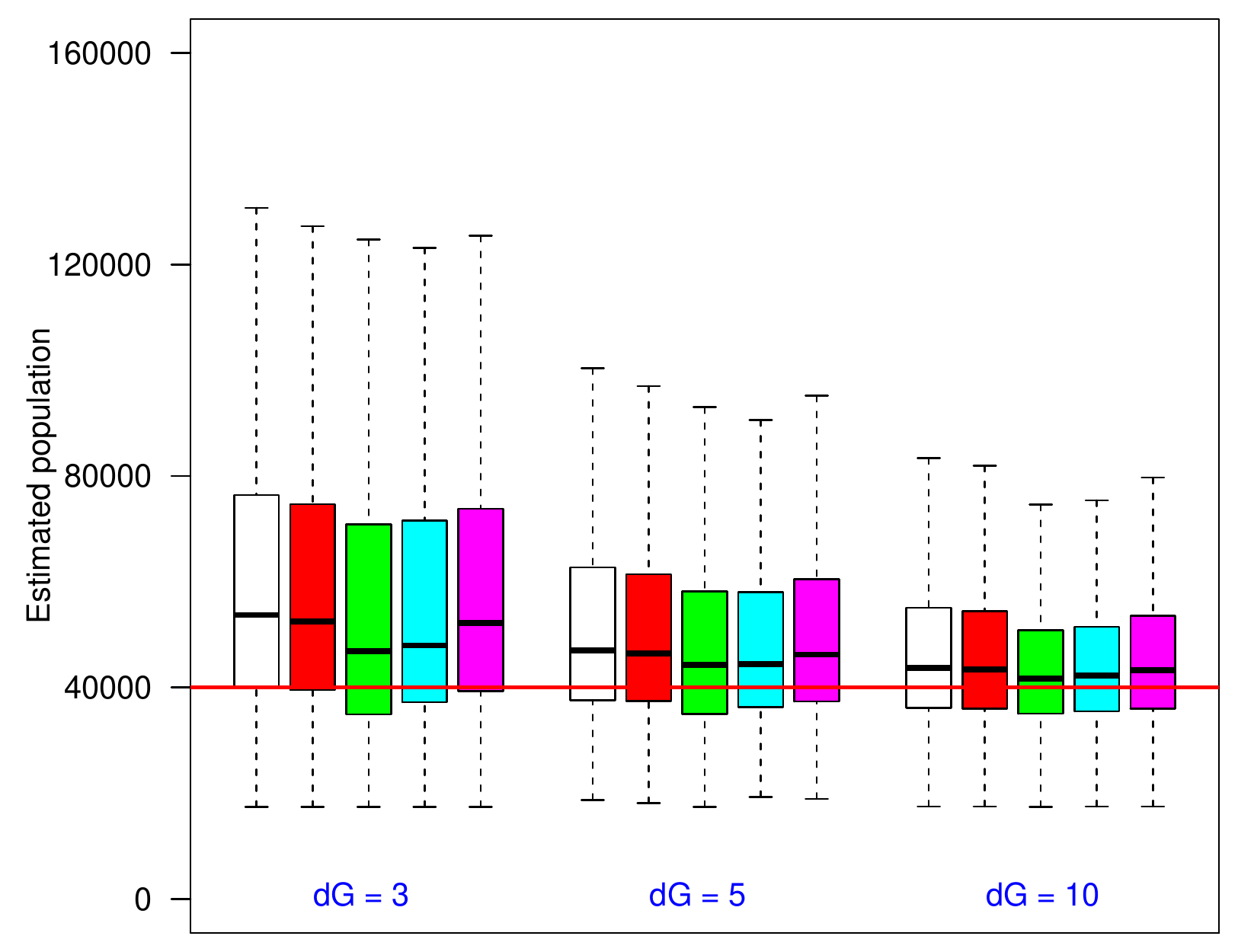}} &
\subcaptionbox{$n$ = $40\cdot 10^3$, $r=750$\label{3d}}{\includegraphics[width = 0.3\linewidth]{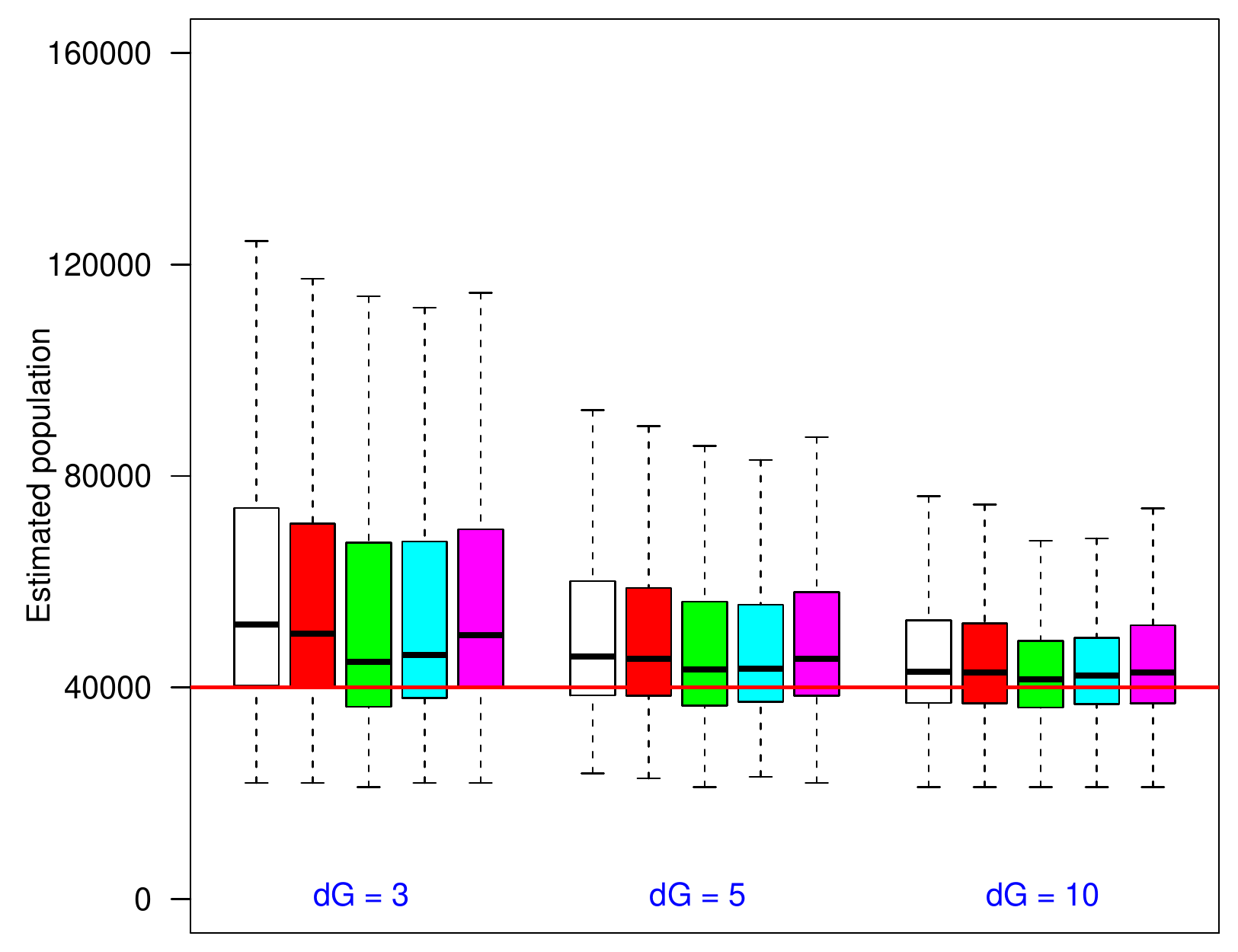}}
\end{tabular}
\caption{Estimator $n_1$ on uniform samples in populations of size $n=5\cdot 10^3$ to $40\cdot 10^3$. In each box, the thick line indicates the sample median; the top of the box is the median of the upper half of the estimated values (75\% quartile); the bottom of the box indicates the median of the lower half of the estimated values (25\% quartile; and the whiskers indicate the full range of estimated values. No (finite) outliers were removed.}
\label{results:n1}
\end{figure}

Figure \ref{results:n1} shows that as sample size increases, the medians of $n_1$ converge to the true population size.  For example, when $n=5\cdot 10^3$ and $r=250$, Exponential degree distribution graphs with $\lambda=3$ have a median $n_1$ value of 5663 (a 13.3\% offset from the true value of $n=5\cdot 10^3$).  In comparison, when $r=750$, the median for this family of graphs is 5204 (just 4.1\% offset from the true value).  As the sample size increases from $r=250$ to $r=750$, the error in the median estimate decreases by 9.2\%.  The benefit of increasing sample size diminishes as networks grow larger, however.  For example, for a network of size $n=40\cdot 10^3$, increasing the sample size from $r=250$ to $r=750$ causes the error in the median $n_1$ estimate to undergo only a 2\% change.

In addition, Figure \ref{results:n1} shows that as sample size increases, the interquartile range (IQR) of the estimates decrease. For example, when $n=5\cdot 10^3$ and $r=250$, Lognormal degree distribution graphs with $\lambda=10$ experience an interquartile range of 1950 in their $n_1$ estimates (35.9\% of the median).  In comparison, when $r=750$, the interquartile range for this family of graphs decreases to 1425 (a 26.9\% reduction).  The magnitude of this effect increases as networks grow larger.  For example, for a network of size $n=40\cdot 10^3$, increasing the sample size from $r=250$ to $r=750$ causes the interquartile range of the $n_1$ estimate to undergo a 48.6\% decrease.

\subsection{Evaluating $n_2$ on Synthetic Networks}
\label{sec:eval-n2}

The experiments in this and all subsequent section use respondent-driven samples.

\begin{assumption}
\label{def:rds-assumptions}
In all our experiments where RDS is used to generate samples, we take $|D|=7$ random seeds drawn uniformly at random from $V$.  Reflecting our experiences in the field \cite{coronado2017using}, we assumed that each subject has a 90\% chance of recruiting two subjects randomly from their ego network, and a 10\% chance of recruiting just one.  [Individuals with an ego network of size 1 will  recruit that one individual with 100\% probability, while individuals with an ego network of size 0 will recruit no one]. 
\end{assumption}

The $12$ graphs in Figure \ref{results:n2} present the performance of the $n_2$ estimator as the true population size $n$ is varied from $5\cdot 10^3$ to $40\cdot 10^3$ (vertical axis of the grid) and the size of the RDS sample is varied from $250$ to $750$ (horizontal axis of the grid).  In each of the $12$ graphs, the x-axis varies the average degree $\lambda$ from $3$ to $10$.  For each choice of $\lambda$, the medians and quartile ranges of $n_2$ are given for each of the $5$ graph families.  Each of these is determined by $900$ simulations ($30$ graphs times $30$ uniformly drawn samples in each graph).  

\begin{figure}[tbp]
\setlength{\belowcaptionskip}{12pt}
\centering
\begin{tabular}{ScScScSc}
\subcaptionbox{$n$ = $5\cdot 10^3$, $r=250$\label{1a}}{\includegraphics[width = 0.3\linewidth]{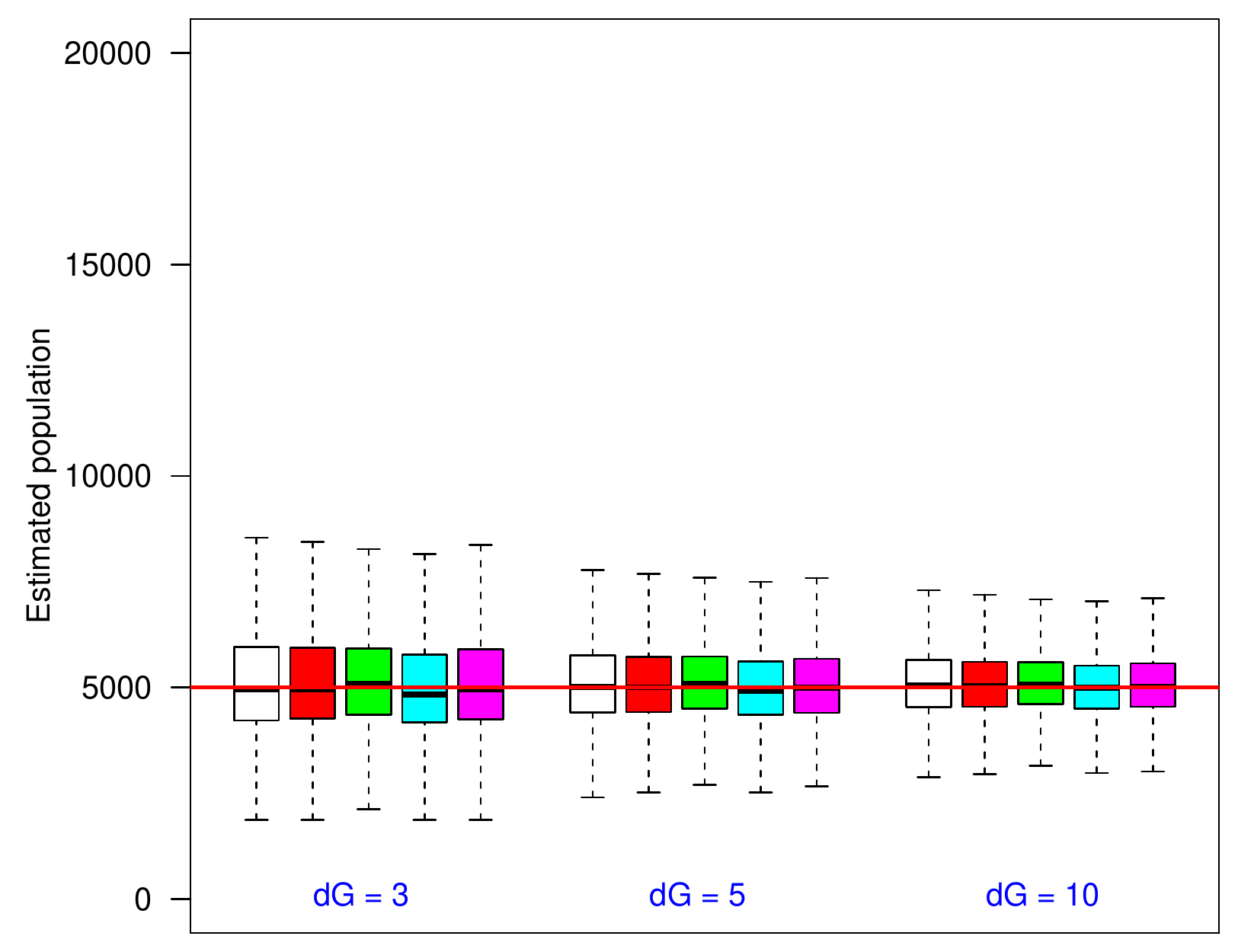}} &
\subcaptionbox{$n$ = $5\cdot 10^3$, $r=500$\label{2a}}{\includegraphics[width = 0.3\linewidth]{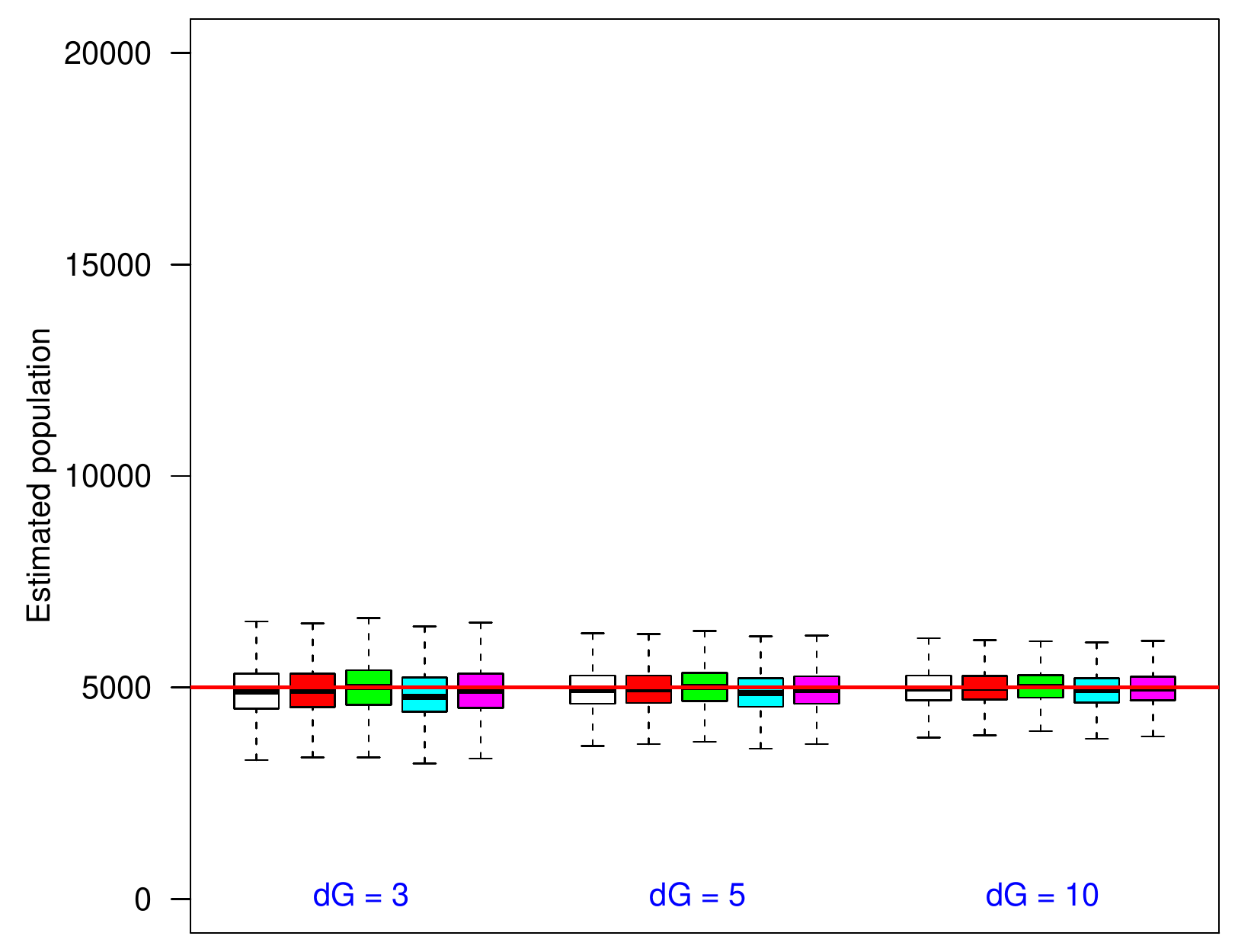}} &
\subcaptionbox{$n$ = $5\cdot 10^3$, $r=750$\label{3a}}{\includegraphics[width = 0.3\linewidth]{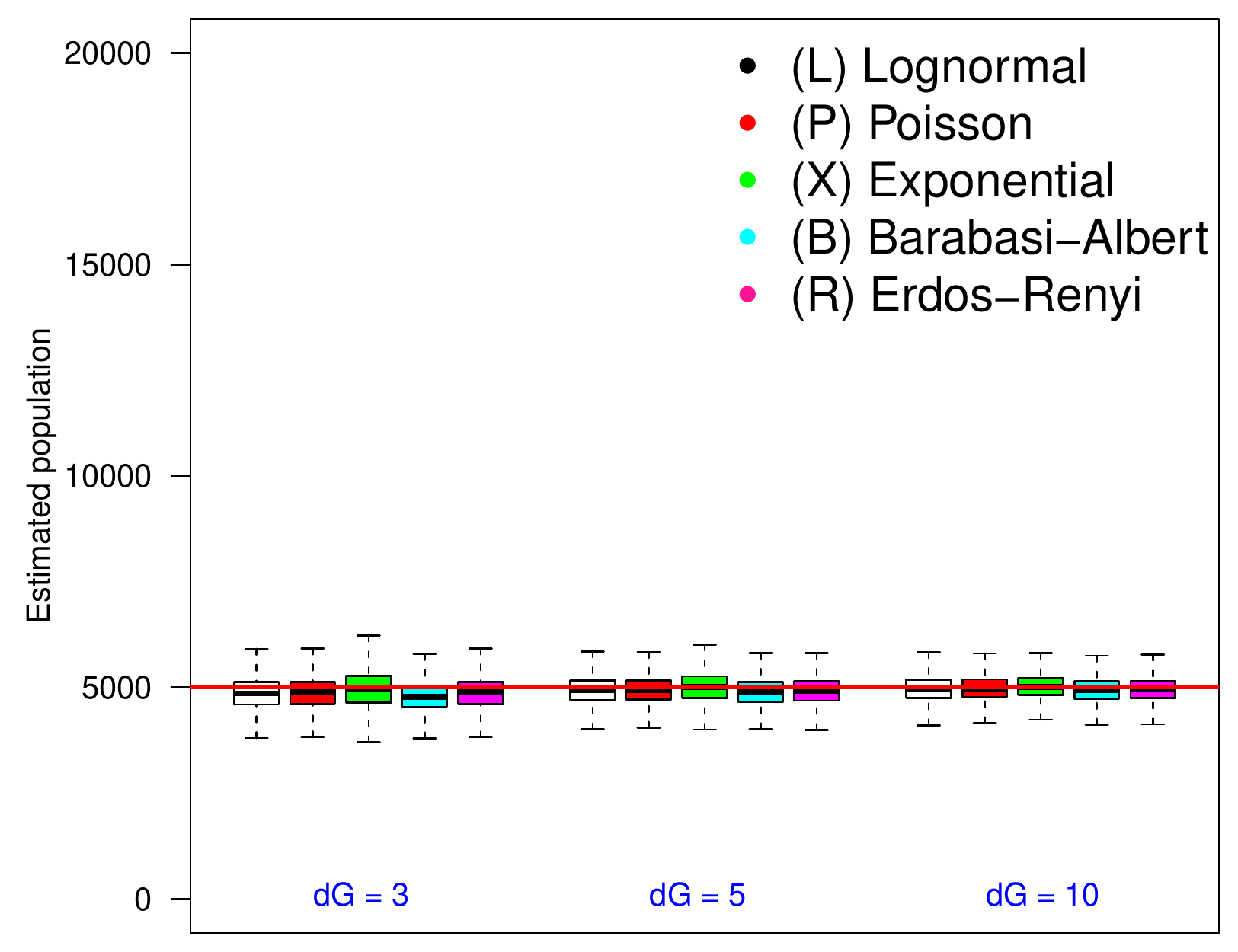}}\\[0pt]
\subcaptionbox{$n$ = $10\cdot 10^3$, $r=250$\label{1b}}{\includegraphics[width = 0.3\linewidth]{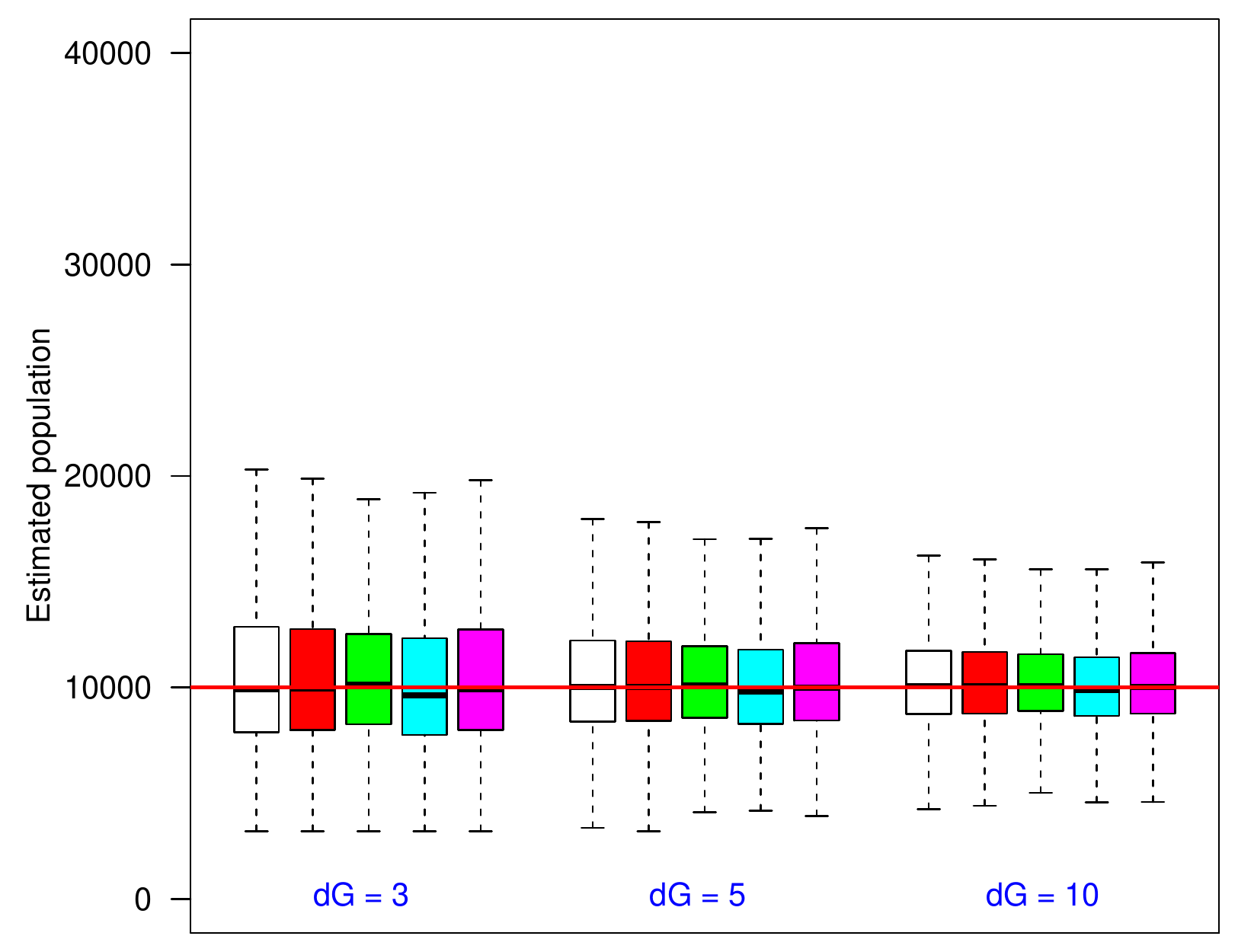}} &
\subcaptionbox{$n$ = $10\cdot 10^3$, $r=500$\label{2b}}{\includegraphics[width = 0.3\linewidth]{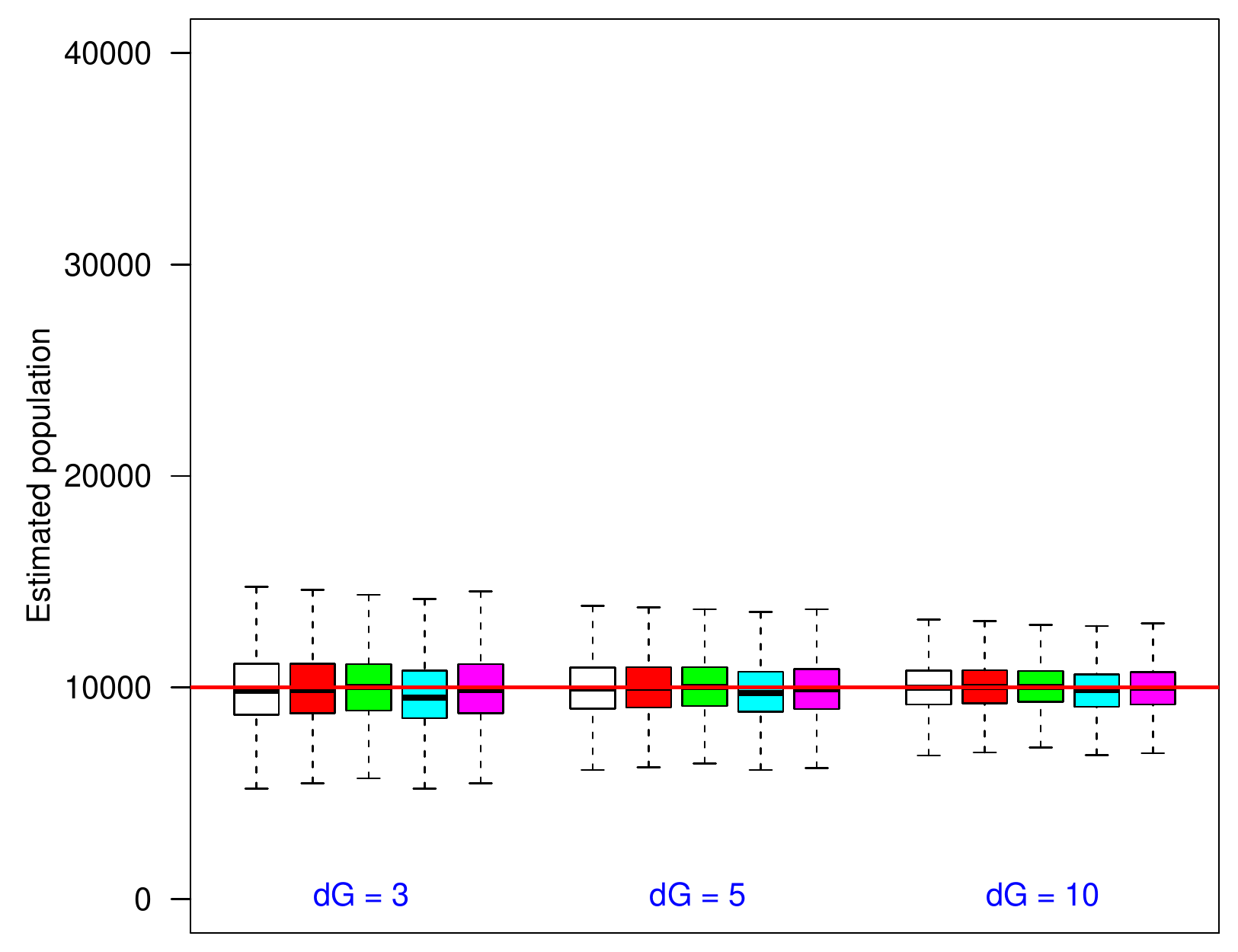}} &
\subcaptionbox{$n$ = $10\cdot 10^3$, $r=750$\label{3b}}{\includegraphics[width = 0.3\linewidth]{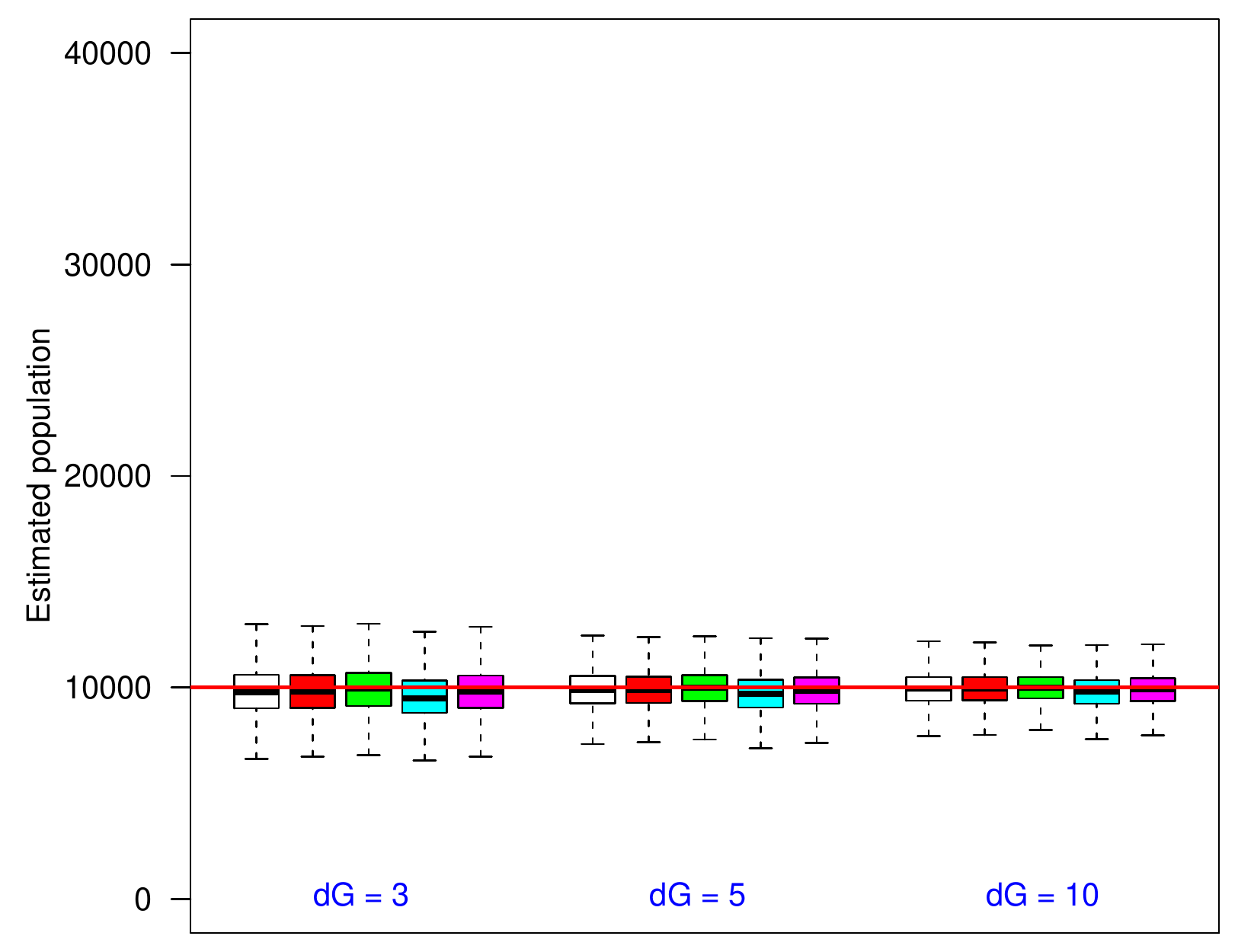}}\\
\subcaptionbox{$n$ = $20\cdot 10^3$, $r=250$\label{1c}}{\includegraphics[width = 0.3\linewidth]{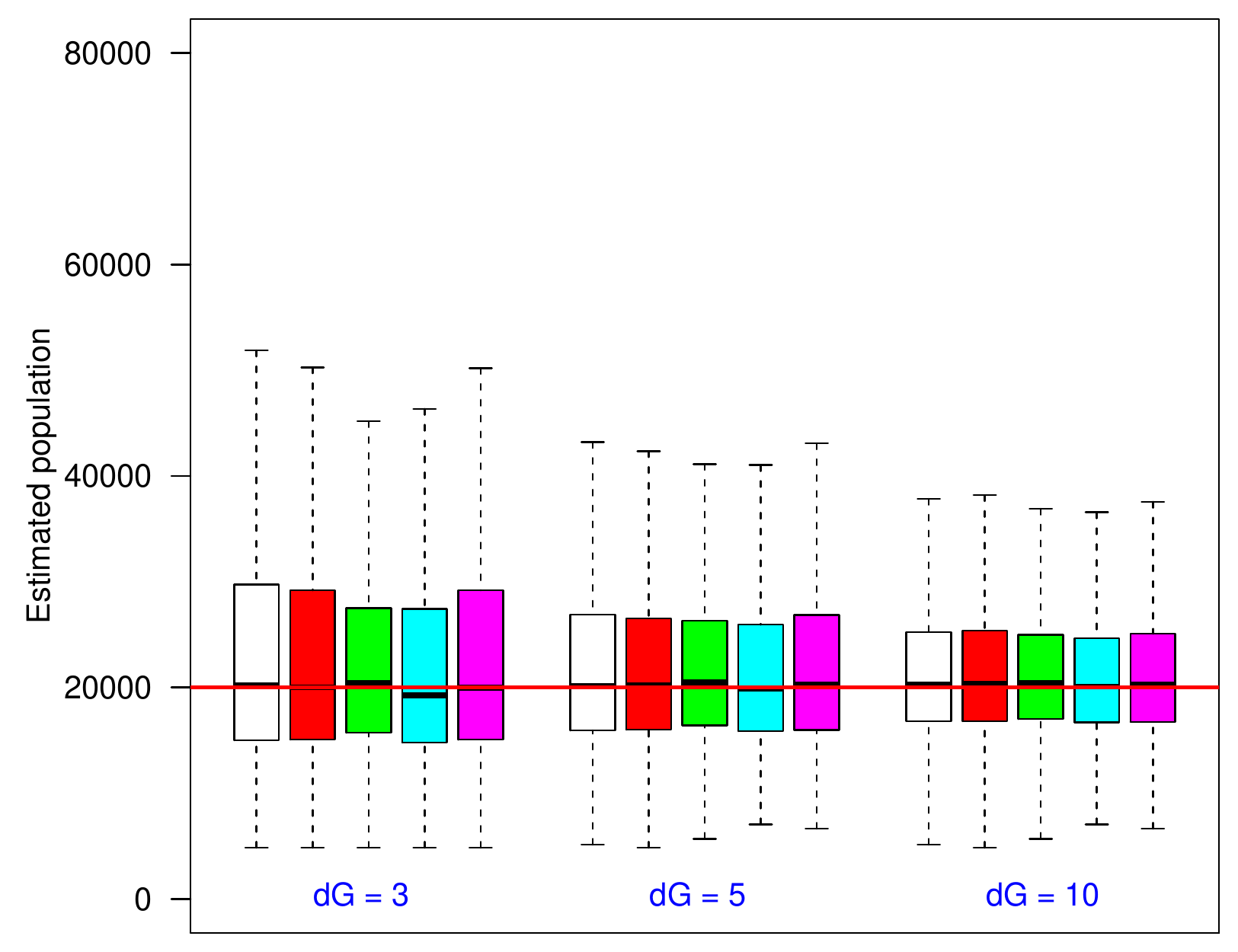}} &
\subcaptionbox{$n$ = $20\cdot 10^3$, $r=500$\label{2c}}{\includegraphics[width = 0.3\linewidth]{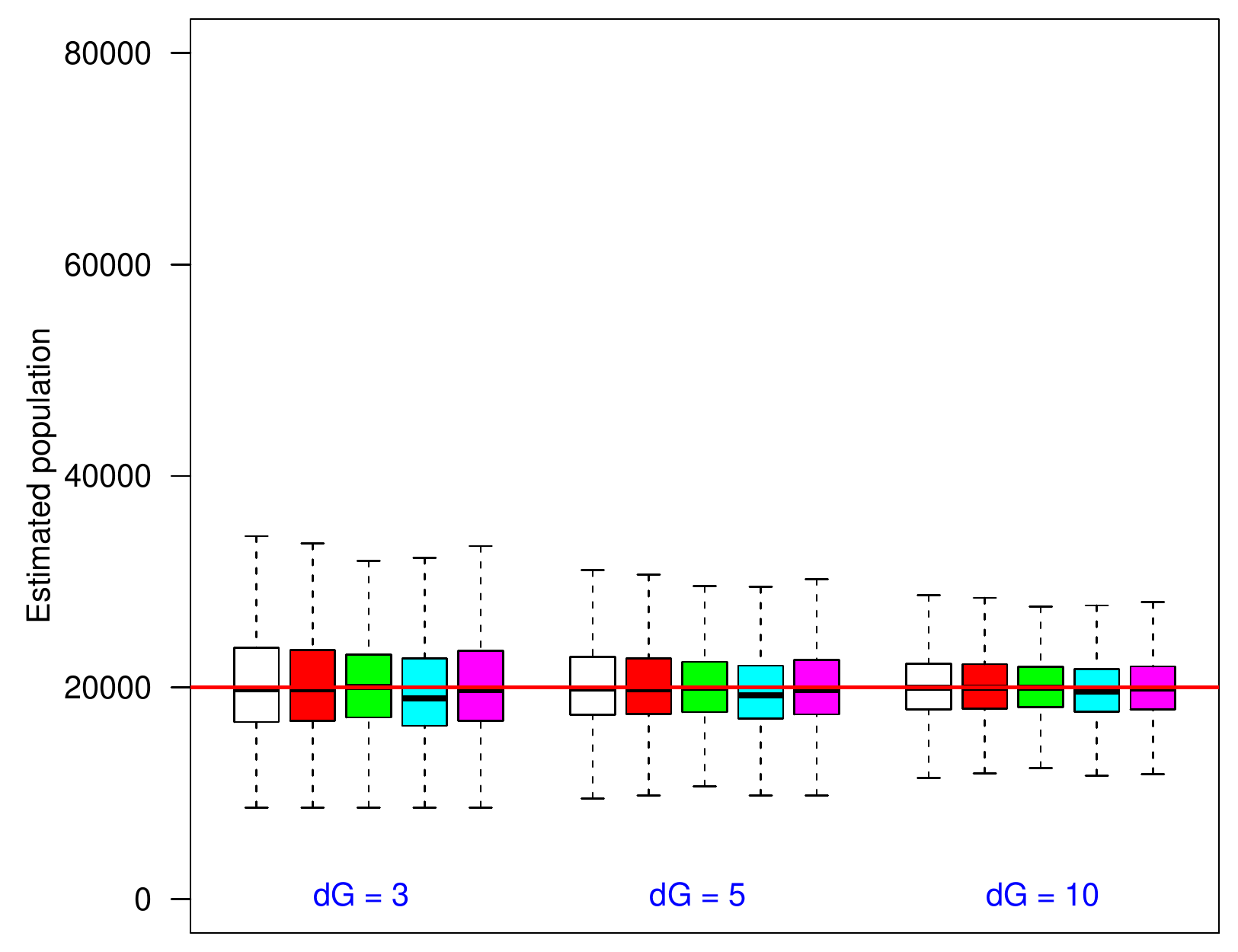}} &
\subcaptionbox{$n$ = $20\cdot 10^3$, $r=750$\label{3c}}{\includegraphics[width = 0.3\linewidth]{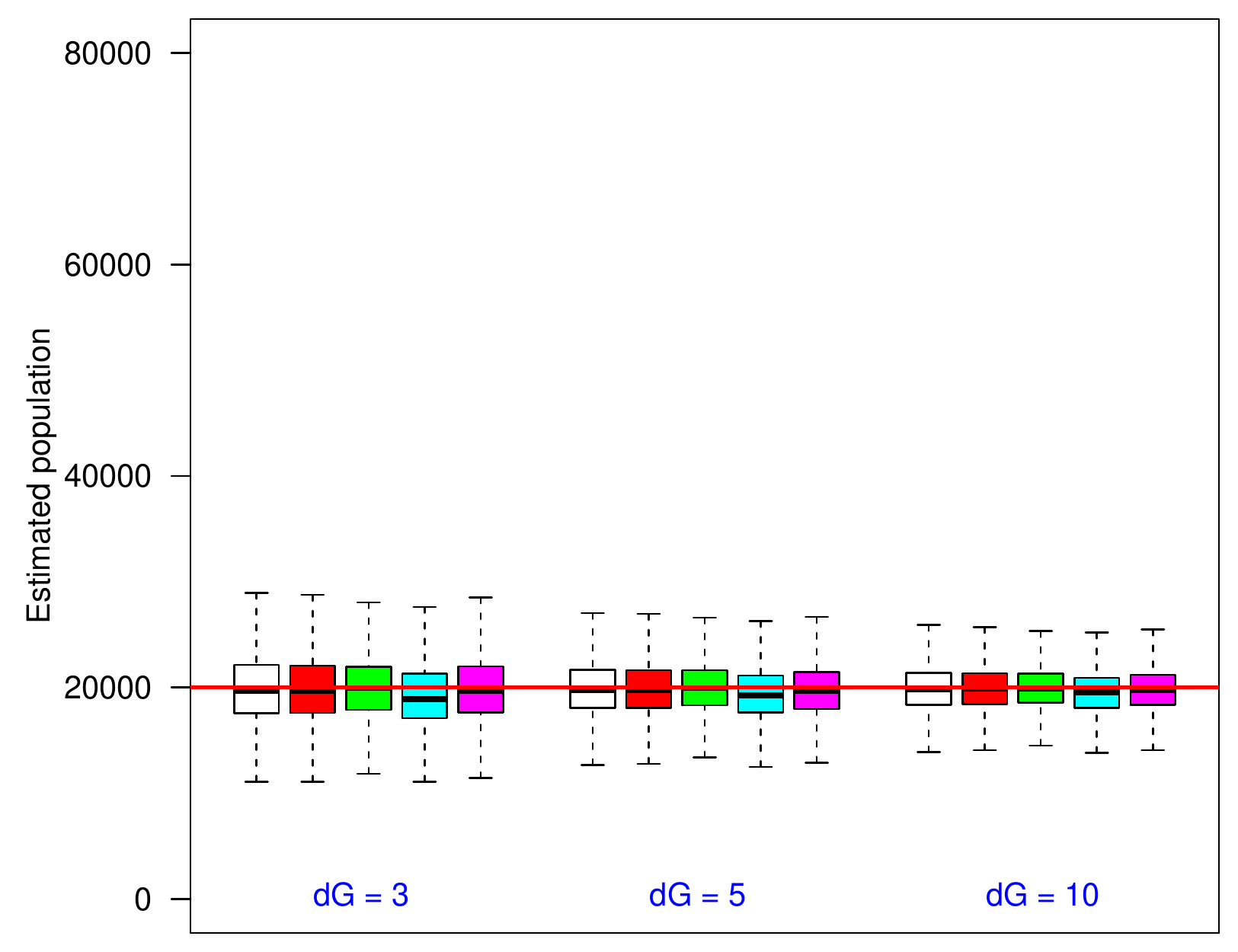}}\\[0pt]
\subcaptionbox{$n$ = $40\cdot 10^3$, $r=250$\label{1d}}{\includegraphics[width = 0.3\linewidth]{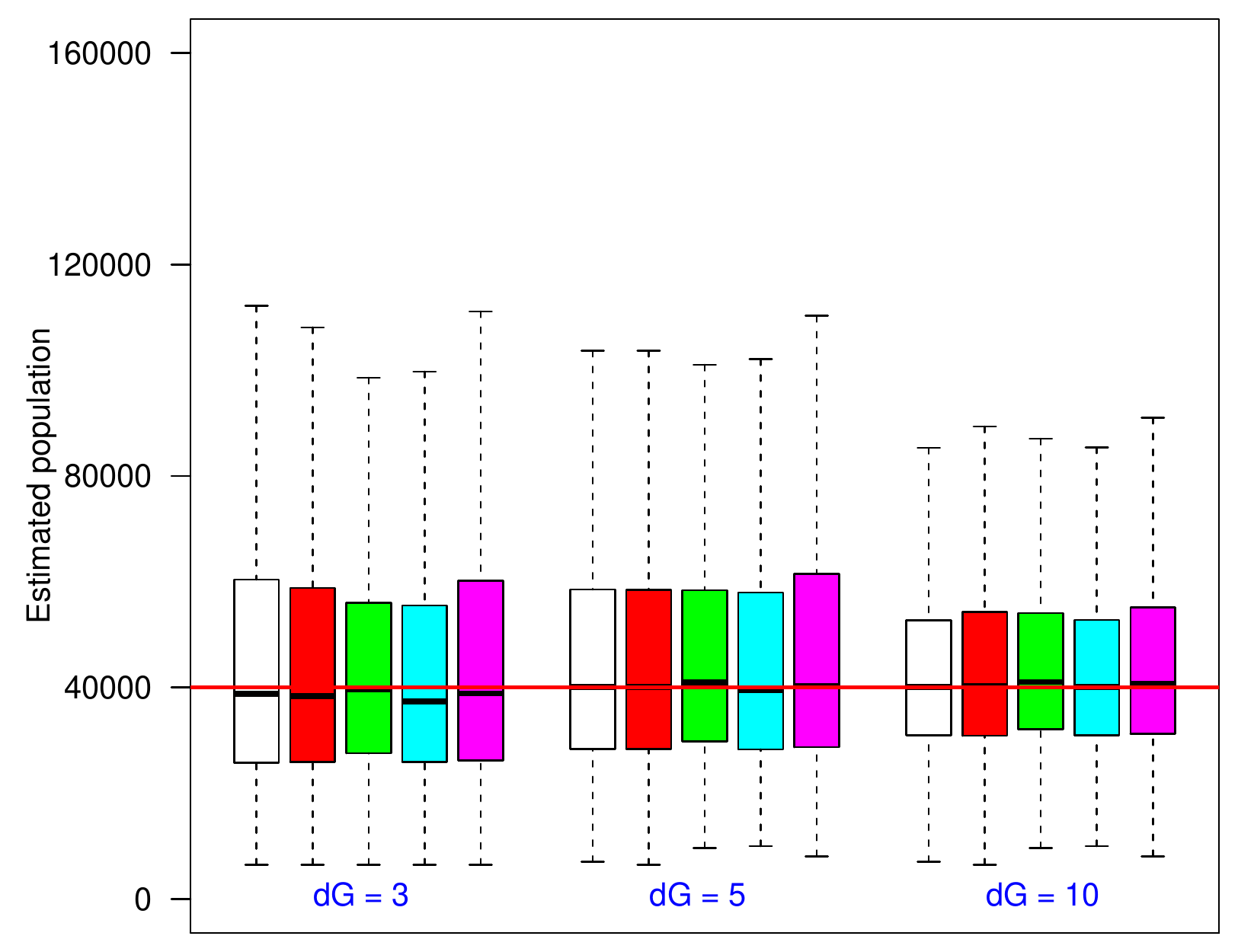}} &
\subcaptionbox{$n$ = $40\cdot 10^3$, $r=500$\label{2d}}{\includegraphics[width = 0.3\linewidth]{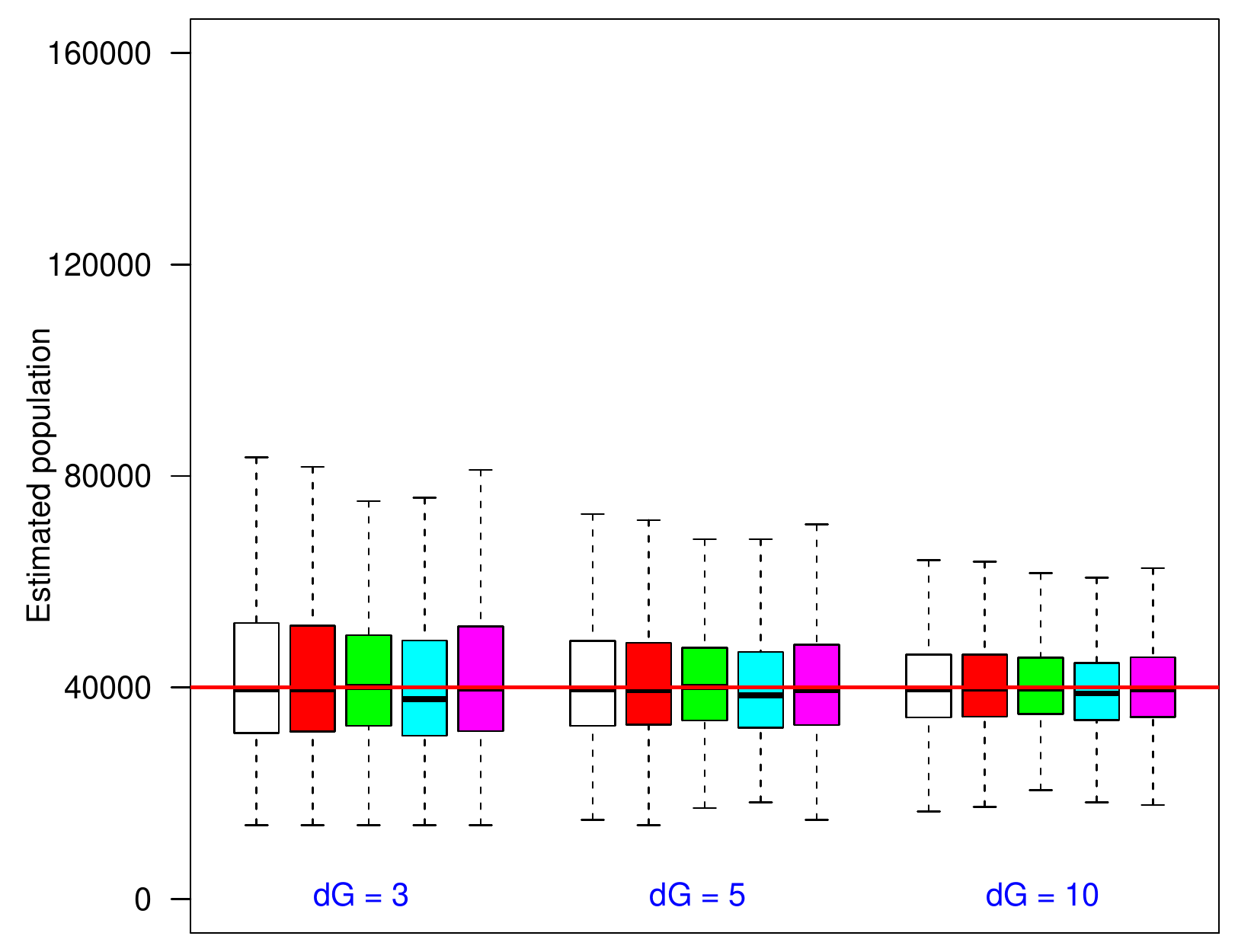}} &
\subcaptionbox{$n$ = $40\cdot 10^3$, $r=750$\label{3d}}{\includegraphics[width = 0.3\linewidth]{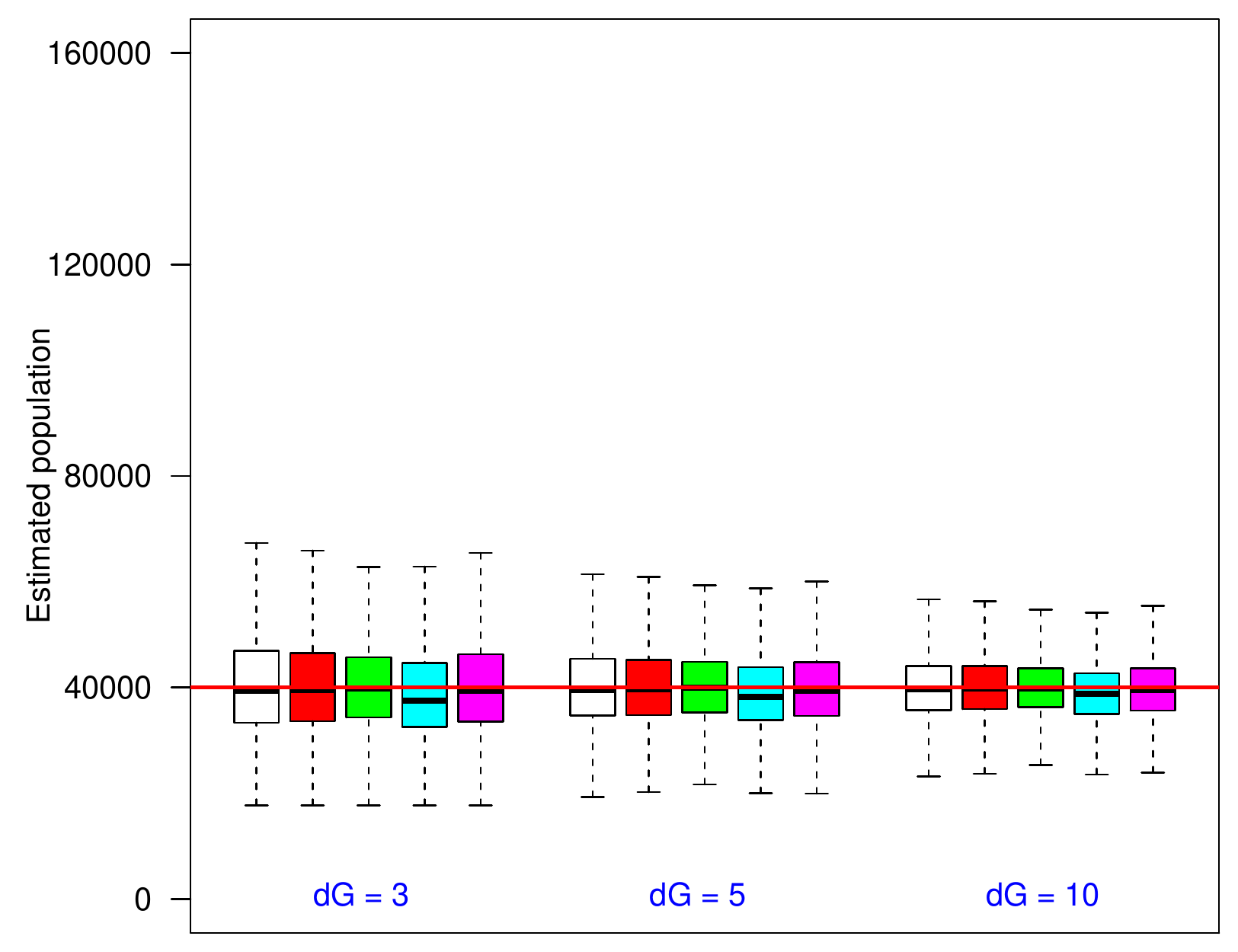}}
\end{tabular}
\setlength{\abovecaptionskip}{1cm}
\caption{Estimator $n_2$ on RDS samples in populations of size $n=5\cdot 10^3$ to $40\cdot 10^3$.  In each box, the thick line indicates the sample median; the top of the box is the median of the upper half of the estimated values (75\% quartile); the bottom of the box indicates the median of the lower half of the estimated values (25\% quartile; and the whiskers indicate the full range of estimated values. No (finite) outliers were removed.}
\label{results:n2}
\end{figure}

Figure \ref{results:n2} shows that the median of $n_2$ converges to the true population size across a range of topologies, RDS sample sizes, and overall populations. In addition, Figure \ref{results:n2} shows that as sample size increases, the interquartile difference decrease. For example, when $n=5\cdot 10^3$ and $r=250$, Poisson degree distribution graphs with $\lambda=3$ experience an interquartile range of 1676 in their $n_2$ estimates (33.8\% of the median).  In comparison, when $r=750$, the interquartile range for this family of graphs decreases to 524 (a 68.7\% reduction).  The magnitude of this effect decreases as networks grow larger, such that, for a network of size $n=40\cdot 10^3$, increasing the sample size from $r=250$ to $r=750$ causes the interquartile range of the $n_2$ estimate to undergo a 60.8\% decrease. However, the total range of estimates as a proportion of the median decreases as sample size increases, indicating decreasing sample-based variance (a key concern in RDS sampling \cite{verdery2017social}).

\subsection{Estimating Population Size via RDS Samples in the Presence of Clustering}
\label{sec:n3}
Beyond the oversampling of high degree nodes, RDS faces challenges when used in networks where network clustering is pronounced \cite{salganik_assessing_2011,verdery_new_2017}. While methods are available to assess the presence of clustering \cite{gile_diagnostics_2015}, and recent work has proposed new techniques to estimate and account for clustering from a single RDS sample \cite{verdery2017social}, the effects of this phenomenon on population size estimation from RDS samples is seldom discussed. The root of the problem lies in the fact that RDS walks necessarily sample network neighborhoods. Where neighbors show high levels of network transitivity, counts of common edges will produce high numbers of ``matches'' that appear in the denominator of both $n_1$ and $n_2$. This will bias the estimates of overall population size derived from these estimators toward underestimation of the total network size. To address this, we propose a modification of $n_2$ that discounts matching free ends discovered within a single RDS sampling tree and, for purposes of estimation, counts only those that occur across trees. The next Definition introduces formalisms necessary to make this precise.

\begin{definition}
\label{def:cross-seeds}
Let $G=(V,E)$, a let set $S \subseteq V$, and $H=(S,F)$ a subgraph on $S\subseteq V$ with edge set $F \subseteq E \cap (S \times S)$ obtained by respondent driven sampling from a set of seeds $D\subseteq S$ where $|D|>1$.  Define  the function $\gamma: S\rightarrow D$ associating each $u\in S$ with the unique seed $\gamma(u) \in D$ from which $u$ was discovered through a sequence of referrals.
For each $u\in S$, the component of $u$ is denoted 
\begin{align}
C_\gamma(u) &\coloneqq \{ v \;|\; \gamma(v) = \gamma(u) \} \subseteq S \label{def:C}
\intertext{while its complement is written $\widetilde{C}_\gamma(u) \coloneqq S \backslash C_\gamma(u)$.  Note that $C_\gamma(u) \cap \widetilde{C}_\gamma(u) = \emptyset$. For each seed $s\in D$, we define the {\em cross-seed matches} from the $C_\gamma(u)$ component (in $G$ modulo $H$) as the disjoint union (multiset)}
X(s, F, \gamma) &\coloneqq \coprod_{u\in C_\gamma(s)} \left( N(u,F) \cap \widetilde{C}_\gamma(s) \right) \subseteq V \label{def:X}
\intertext{whose cardinality is denoted}
\langle X(s, F, \gamma) \rangle  &\coloneqq \sum_{u\in C_\gamma(s)} \left| N(u,F) \cap \widetilde{C}_\gamma(s) \right|. \nonumber
\end{align}
\end{definition}

The next estimator $n_3$, provides a revised estimate $|V|$ from a respondent driven sample $S \subseteq V$, discounting matches that occr within the same RDS component.

\begin{definition}
\label{def:n3}
Given a graph $G=(V,E)$, a set $S \subseteq V$, and $H=(S,F)$ a subgraph on $S\subseteq V$ with edge set $F \subseteq E \cap (S \times S)$.  Take $D\subseteq S$ satisfying $|D|>1$ and
$$
s_1 \neq s_2 \implies C_\gamma(s_1) \cap C_\gamma(s_2) = \emptyset.
$$
Define
\begin{align}
\label{eq:n3}
n_3(S, F, D, \gamma) &\coloneqq \frac{\sum_{s\in D} \frac{d(\widetilde{C}_\gamma(s))-1}{\widetilde{d}(S)} \cdot |\widetilde{C}_\gamma(s)| \cdot \langle R(C_\gamma(s), F) \rangle}{\sum_{s\in D} \langle X(s, F, \gamma) \rangle}
\end{align}
\end{definition}

The next proposition gives sufficient conditions under which respondent-driven samples $S\subseteq V$ produce consistent estimates $n_3(T) \sim |V|$ when $|V|$ is large.

\begin{proposition}
\label{prop:n3}
For $n=1,2,\ldots$ let $G_n=(V_n,E_n)$ be a graph on $|V_n|=f(n)$ vertices obtained by configuration graph sampling via degree distribution ${\cal D}_n$, where $f(n)$ grows unboundedly.  Let $c_n \in (0,1]$, and take $S_n\subseteq V_n$ to be a subset of size $|S_n| = \lfloor c_n\cdot f(n) \rfloor$ selected using RDS sampling in $G_n$ from $|D_n|>1$ seeds.   Define the random variable
  \begin{align*}
\Delta_n &\coloneqq \frac{d(S_n)-1}{\widetilde{d}(S_n)}.
  \end{align*}
  Accepting Assumption \ref{assumption-harmonic}, if $c_n\cdot f(n)/D_n$ diverges as $n$ goes to infinity, while 
    \begin{align}
\Delta_{n}^2 \cdot c_n^2 \cdot d(V_n) \cdot \frac{|D_n|-1}{|D_n|} = \frac{{\left( d(S_n)-1 \right)}^2 \cdot c_n^2}{\widetilde{d}(S_n)} \cdot \frac{|D_n|-1}{|D_n|}\xrightarrow{\phantom{xxx}p\phantom{xxx}} \Theta_3 \label{n3:condition}
  \end{align}
for some finite constant $\Theta_3 > 0$, then $\frac{n_3(S_n, F_n, D_n, \gamma)}{f(n)}$ necessarily converges to $1$.
\end{proposition}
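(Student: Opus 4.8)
The plan is to mirror the proof of Proposition \ref{prop:n2}: treat each RDS component $C_\gamma(s)$ as its own respondent-driven sample and its complement $\widetilde C_\gamma(s)$ as a second, disjoint RDS sample inside $G_n$, establish the analogues of the $n_2$ estimates component-wise, and then sum over the seeds $s\in D_n$. Write $(S_n,F_n,\gamma_n,D_n)$ for the output of the RDS process. Two facts are reused verbatim from the proof of Proposition \ref{prop:n2}: (i) $|F_n|/|S_n|\to 1$, which follows from $c_n f(n)/D_n\to\infty$ exactly as in (\ref{eq:limit-1}); and (ii) the free-neighborhood asymptotics (\ref{eq:ratio-eq2}) together with the configuration-graph match identity behind (\ref{eq:m-fix}), which in its bipartite form reads: for a referral tree $C$ and a vertex set $A$ with which $C$ shares no tree edges, $\langle\coprod_{u\in C}(N(u,F_n)\cap A)\rangle \sim \langle R(C,F_n)\rangle\cdot\langle R(A,F_n)\rangle/(d(V_n) f(n))$.

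Step 1 (numerator). A component $C_\gamma(s)$ is itself a referral tree of size $\sim|S_n|/|D_n|\to\infty$, hence degree-biased in the same way as $S_n$; the union $\widetilde C_\gamma(s)$ of the remaining trees is too. So by the law of large numbers $d(\widetilde C_\gamma(s))\sim d(S_n)$, whence $\tfrac{d(\widetilde C_\gamma(s))-1}{\widetilde d(S_n)}\sim\Delta_n$, and applying (\ref{eq:ratio-eq2}) and linearity of expectation inside $C_\gamma(s)$ exactly as (\ref{eq:r-fix}) was derived gives $\langle R(C_\gamma(s),F_n)\rangle\sim\Delta_n\cdot|C_\gamma(s)|\cdot d(V_n)$. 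Multiplying and summing over seeds,
\[
\sum_{s\in D_n}\frac{d(\widetilde C_\gamma(s))-1}{\widetilde d(S_n)}\cdot|\widetilde C_\gamma(s)|\cdot\langle R(C_\gamma(s),F_n)\rangle \;\sim\; \Delta_n^2\cdot d(V_n)\cdot\sum_{s\in D_n}|C_\gamma(s)|\cdot|\widetilde C_\gamma(s)|.
\]

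Step 2 (denominator) and the Slutsky finish. The bipartite match identity with $C=C_\gamma(s)$ and $A=\widetilde C_\gamma(s)$ gives $\langle X(s,F_n,\gamma)\rangle\sim\langle R(C_\gamma(s),F_n)\rangle\cdot\langle R(\widetilde C_\gamma(s),F_n)\rangle/(d(V_n)f(n))$, and since $\langle R(\widetilde C_\gamma(s),F_n)\rangle=\langle R(S_n,F_n)\rangle-\langle R(C_\gamma(s),F_n)\rangle\sim\Delta_n\cdot|\widetilde C_\gamma(s)|\cdot d(V_n)$ by Step 1, this simplifies to $\langle X(s,F_n,\gamma)\rangle\sim\tfrac{\Delta_n^2 d(V_n)}{f(n)}\,|C_\gamma(s)||\widetilde C_\gamma(s)|$; summing,
\[
\sum_{s\in D_n}\langle X(s,F_n,\gamma)\rangle \;\sim\; \frac{\Delta_n^2\cdot d(V_n)}{f(n)}\cdot\sum_{s\in D_n}|C_\gamma(s)|\cdot|\widetilde C_\gamma(s)|.
\]
Hence the combinatorial factor $B_n\coloneqq\sum_{s\in D_n}|C_\gamma(s)||\widetilde C_\gamma(s)|=|S_n|^2-\sum_{s}|C_\gamma(s)|^2$ appears in the numerator and in the denominator alike, and cancels, so $n_3(S_n,F_n,D_n,\gamma)\sim f(n)=|V_n|$ formally. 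To make this rigorous, put $A_n\coloneqq B_n/|S_n|^2=1-\sum_{s}(|C_\gamma(s)|/|S_n|)^2\in\big(0,\tfrac{|D_n|-1}{|D_n|}\big]$: dividing the numerator by $f(n)^2$ and the denominator by $f(n)$ both give $\Delta_n^2 c_n^2 d(V_n)\,A_n\,(1+o_p(1))$, which by (\ref{n3:condition}) and $A_n\to\tfrac{|D_n|-1}{|D_n|}$ converges in probability to $\Theta_3>0$; thus $\dfrac{n_3(S_n,F_n,D_n,\gamma)}{f(n)}=\dfrac{\text{numerator}/f(n)^2}{\text{denominator}/f(n)}\xrightarrow{d}\dfrac{\Theta_3}{\Theta_3}=1$ by Slutsky's theorem.

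The main obstacle is twofold. First, one must justify that a single component and its complement inherit the degree-bias and free-neighborhood asymptotics of the whole sample, even though Algorithm \ref{rds-algo} grows the $|D_n|$ trees in interleaved fashion through a shared frontier $F_t$; the cleanest route is to condition on the realized vector of component sizes, under which the trees become exchangeable and each is distributed as an RDS sample of its size, so that the estimates in the proof of Proposition \ref{prop:n2} may be invoked component-wise while $\widetilde d(S_n)\sim d(V_n)$ still follows from Assumption \ref{assumption-harmonic}. Second---and this is what forces the hypothesis $|D_n|>1$ and the factor $\tfrac{|D_n|-1}{|D_n|}$ in (\ref{n3:condition})---one needs $A_n$ bounded away from $0$, i.e. no single tree asymptotically swallowing the sample; a branching-process analysis of the interleaved growth (supercritical in the parameter regime of Assumption \ref{def:rds-assumptions}, with seeds rarely stalling) shows the component sizes concentrate near $|S_n|/|D_n|$, so that $A_n\to(|D_n|-1)/|D_n|$, and making this concentration quantitative is the delicate part of the argument.
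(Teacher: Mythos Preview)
Your proposal is correct and follows essentially the same route as the paper's proof: reduce each seed's component and its complement to the $n_2$ asymptotics via (\ref{eq:ratio-eq2}), (\ref{eq:r-fix}), and the configuration-graph match identity, sum over seeds, and finish with Slutsky. The paper organizes the argument slightly differently: it invokes component-size concentration at the outset---asserting $|C_\gamma(s)|\sim|S_n|/|D_n|$ and $d(\widetilde C_\gamma(s))\sim d(S_n)$ directly from the parallel-recruitment structure of Algorithm~\ref{rds-algo}---and then substitutes so that every term immediately reduces to the global quantities $R_n^*$, $M_n^*$ from Proposition~\ref{prop:n2}; your version carries the combinatorial factor $B_n=\sum_s|C_\gamma(s)||\widetilde C_\gamma(s)|$ through numerator and denominator, observes the formal cancellation, and only then invokes concentration to identify the common Slutsky limit as $\Theta_3$. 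Your packaging makes more visible \emph{why} the hypothesis carries the factor $(|D_n|-1)/|D_n|$ and why $|D_n|>1$ is needed (to keep $A_n$ bounded away from $0$), whereas the paper's ordering is terser but hides that dependency inside the early substitution. The two ``obstacles'' you flag---inheritance of degree bias by individual trees, and concentration of component sizes---are precisely the steps the paper dispatches in one line each (equations~(\ref{eq:compsize})--(\ref{eq:degcomp})) without further justification, so your assessment that these are the delicate points is accurate.
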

\begin{proof}
Since each seed $s\in D_n$ is chosen uniformly at random, and RDS-CAPTURE recruits  from all seeds {\em in parallel}, and $|S_n| = \lfloor c_n \cdot f(n) \rfloor$ diverges, for random $s \in D_n$, we know that
\begin{align}
|C_\gamma(s)| &\xrightarrow{\phantom{xxx}p\phantom{xxx}} \frac{1}{|D_n|}\cdot |S_n| = \frac{c_n \cdot f(n)}{|D_n|}\label{eq:compsize}\\[0.1in]
|\widetilde{C}_\gamma(s)| &\xrightarrow{\phantom{xxx}p\phantom{xxx}}\frac{|D_n|-1}{|D_n|}\cdot |S_n| = \frac{|D_n|-1}{|D_n|}\cdot c_n \cdot f(n)\label{eq:complement-component}\\[0.1in]
d({C}_\gamma(s)), \;\; d(\widetilde{C}_\gamma(s)) &\xrightarrow{\phantom{xxx}p\phantom{xxx}} d(S_n)\label{eq:degcomp}
\intertext{Combining (\ref{eq:compsize}) and (\ref{eq:degcomp}), we conclude}
\langle R(C_\gamma(s), F_n) \rangle &\xrightarrow{\phantom{xxx}p\phantom{xxx}} \frac{\langle R(S_n, F_n)\rangle}{|D_n|}.\label{eq:reports-component}
\intertext{Sufficient reasoning about the configuration graph construction process tells us}
\langle X(s, F_n, \gamma) \rangle &\xrightarrow{\phantom{xxx}p\phantom{xxx}} \frac{1}{|D_n|} \cdot \langle M(S_n,F_n) \rangle\cdot \frac{|D_n|-1}{|D_n|}.
\end{align}
Define the following random variables, closely related to (\ref{def:rv1-prop2}) and (\ref{def:rv2-prop2}) of Proposition \ref{prop:n2}:
  \begin{align*}
R_n^{\circ} &\coloneqq \sum_{s\in D_n} \frac{d(\widetilde{C}_\gamma(s))-1}{\widetilde{d}(S)} \cdot |\widetilde{C}_\gamma(s)| \cdot \langle R(C_\gamma(s), F_n) \rangle \;/\; f(n)\\[0.1in]
M_n^{\circ} &\coloneqq \sum_{s\in D_n} \langle X(s, F_n, \gamma) \rangle \;/\; f(n).
  \end{align*}
As $n$ tends to infinity
\begin{align*}
R_n^{\circ} &\xrightarrow{\phantom{xxx}p\phantom{xxx}}  \frac{d(S_n)-1}{\widetilde{d}(S)} \left( \frac{|D_n|-1}{|D_n|} \cdot c_n \cdot f(n) \right) \cdot R_n^*(S_n, F_n)\\
M_n^{\circ} &\xrightarrow{\phantom{xxx}p\phantom{xxx}} \frac{|D_n|-1}{|D_n|} \cdot M_n^*(S_n, F_n)
\intertext{where $R_n^*(S_n, F_n)\xrightarrow{\phantom{xxx}p\phantom{xxx}} \Delta_n \cdot c_n \cdot d(V_n)$ as noted in (\ref{def:rv1-prop2}), while $M_n^*(S_n, F_n)\xrightarrow{\phantom{xxx}p\phantom{xxx}} \Delta_n^2 \cdot c_n^2 \cdot d(V_n)$ as noted in (\ref{def:rv2-prop2}).  Thus}
R_n^{\circ} &\xrightarrow{\phantom{xxx}p\phantom{xxx}} \Delta_n^2 \cdot c_n^2 \cdot d(V_n) \cdot \frac{|D_n|-1}{|D_n|} \cdot f(n) = \Theta_3 \cdot f(n)\\
M_n^{\circ} &\xrightarrow{\phantom{xxx}p\phantom{xxx}} \Delta_n^2 \cdot c_n^2 \cdot d(V_n) \cdot \frac{|D_n|-1}{|D_n|} = \Theta_3
\end{align*}
By Slutsky’s theorem \cite{slutsky1925uber}, it follows that
\begin{eqnarray}
\frac{n_3(S_n, F_n, D_n, \gamma)}{f(n)} = \frac{\frac{1}{f(n)}\cdot R_n^{\circ}}{M_n^{\circ}} &\xrightarrow{\phantom{xxx}d\phantom{xxx}} \frac{\plim_{n\rightarrow \infty} \frac{1}{f(n)}\cdot R_n^{\circ}}{\plim_{n\rightarrow \infty} M_n^{\circ}} = \frac{\Theta_3}{\Theta_3} = 1
\end{eqnarray}
\end{proof}

\subsection{Evaluating $n_3$ on Synthetic Networks}
\label{sec:eval-n3}

Prior to examining the performance of $n_3$ on empirical networks, we first look at its performance on the synthetic networks used to evaluate $n_1$ and $n_2$. The experiments shown in Figure \ref{results:n3} follow the framework described in Section \ref{sec:exp-framework} and use respondent-driven samples, each obtained via RDS process operating as specified in Assumption \ref{def:rds-assumptions}.  

The $12$ graphs in Figure \ref{results:n3} present the performance of the $n_3$ estimator as the true population size $n$ is varied from $5\cdot 10^3$ to $40\cdot 10^3$ (vertical axis of the grid) and the size of the RDS sample is varied from $250$ to $750$ (horizontal axis of the grid).  In each of the $12$ graphs, the x-axis varies the average degree $\lambda$ from $3$ to $10$.  For each choice of $\lambda$, the medians and quartile ranges of $n_3$ are given for each of the $5$ graph families.  Each of these is determined by $900$ simulations ($30$ graphs times $30$ uniformly drawn samples in each graph).

\begin{figure}[tbp]
\setlength{\belowcaptionskip}{12pt}
\centering
\begin{tabular}{ScScScSc}
\subcaptionbox{$n$ = $5\cdot 10^3$, $r=250$\label{1a}}{\includegraphics[width = 0.3\linewidth]{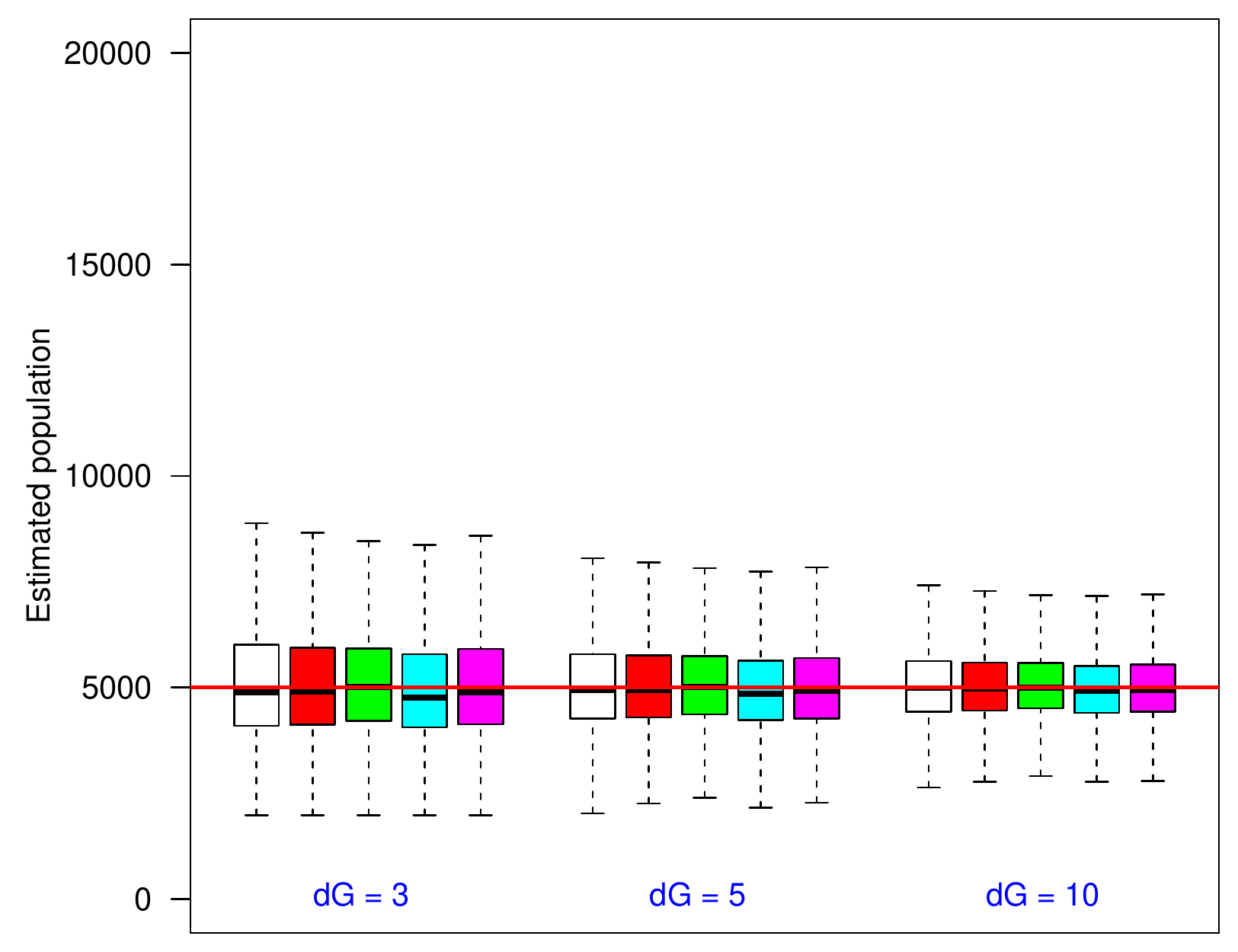}} &
\subcaptionbox{$n$ = $5\cdot 10^3$, $r=500$\label{2a}}{\includegraphics[width = 0.3\linewidth]{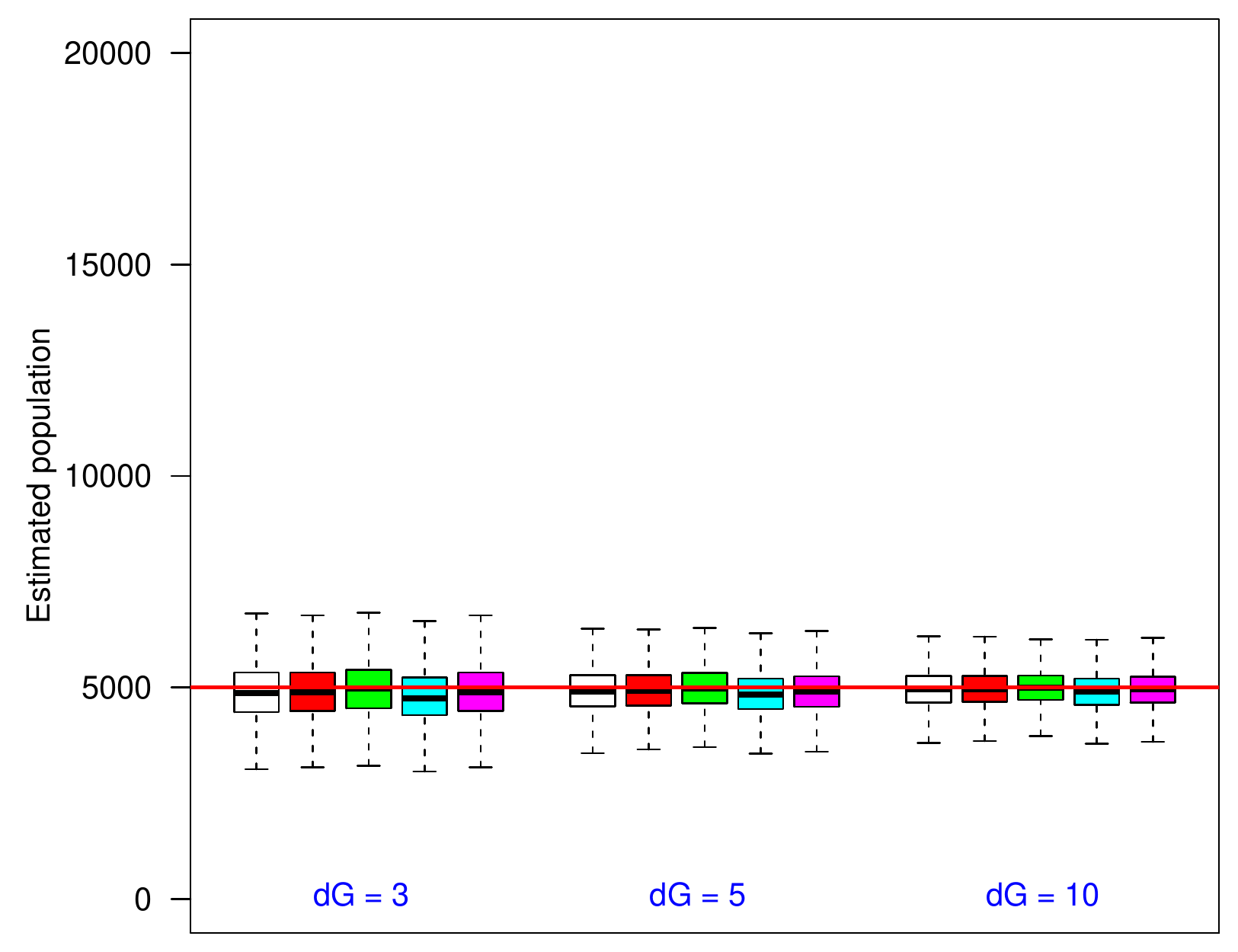}} &
\subcaptionbox{$n$ = $5\cdot 10^3$, $r=750$\label{3a}}{\includegraphics[width = 0.3\linewidth]{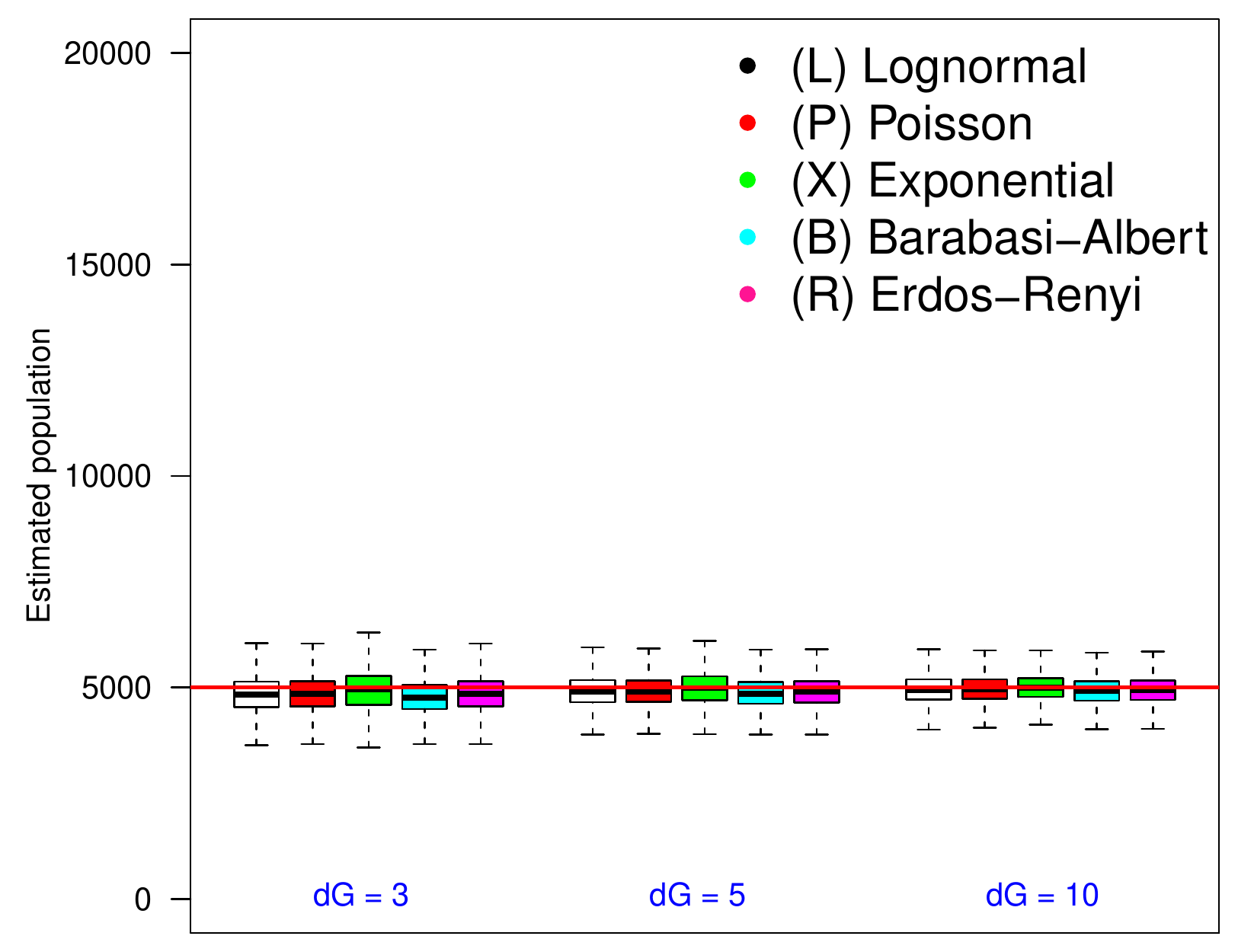}}\\[0pt]
\subcaptionbox{$n$ = $10\cdot 10^3$, $r=250$\label{1b}}{\includegraphics[width = 0.3\linewidth]{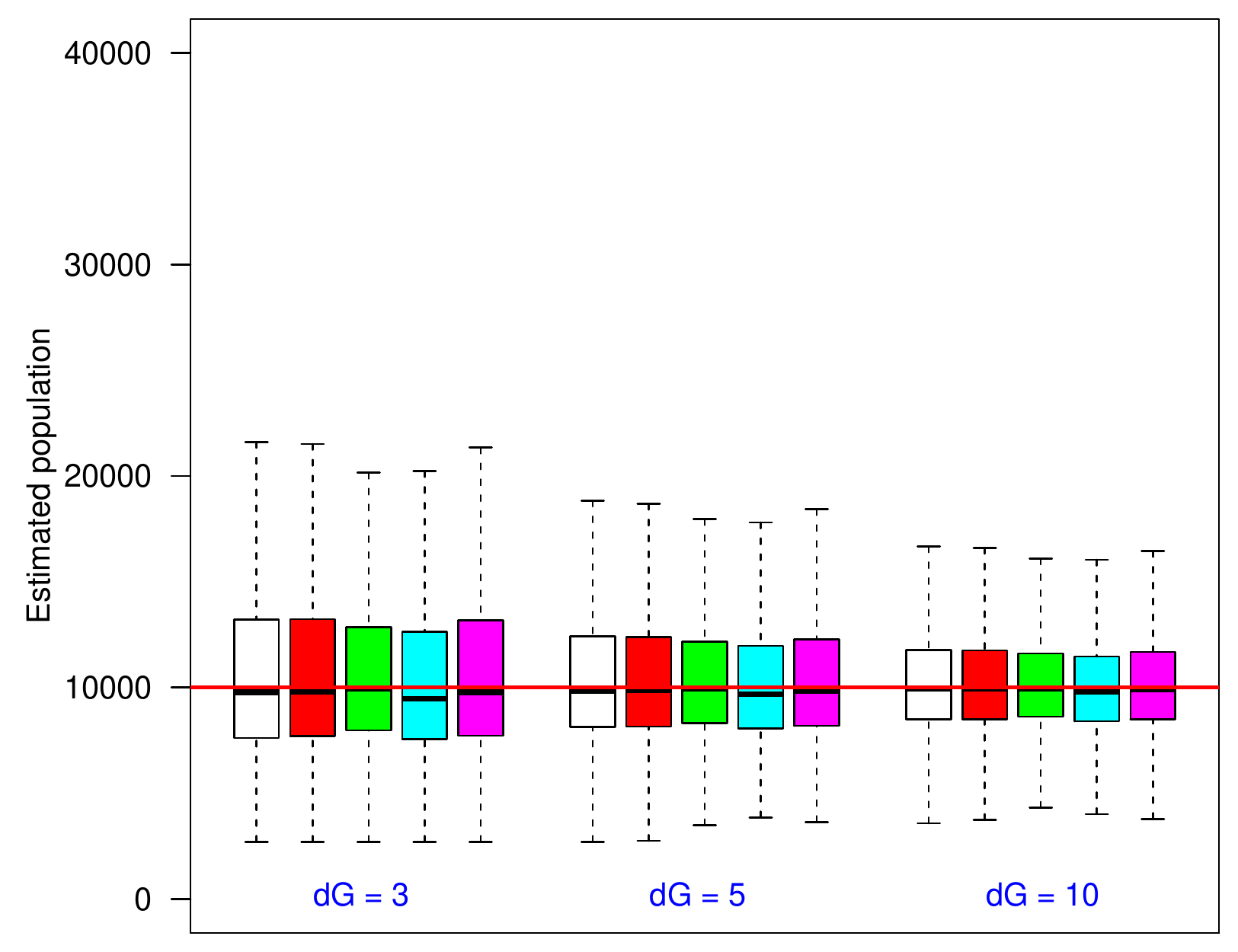}} &
\subcaptionbox{$n$ = $10\cdot 10^3$, $r=500$\label{2b}}{\includegraphics[width = 0.3\linewidth]{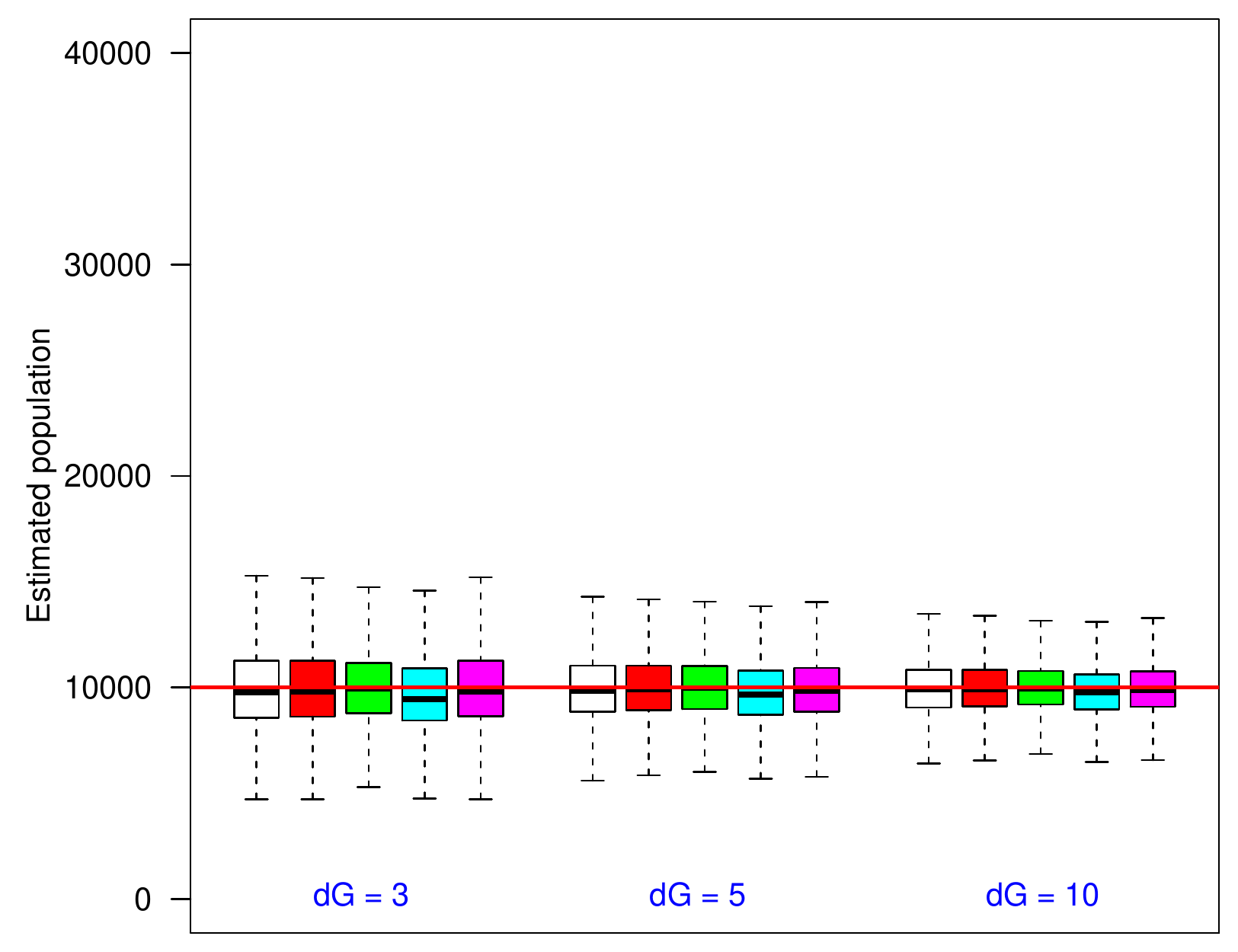}} &
\subcaptionbox{$n$ = $10\cdot 10^3$, $r=750$\label{3b}}{\includegraphics[width = 0.3\linewidth]{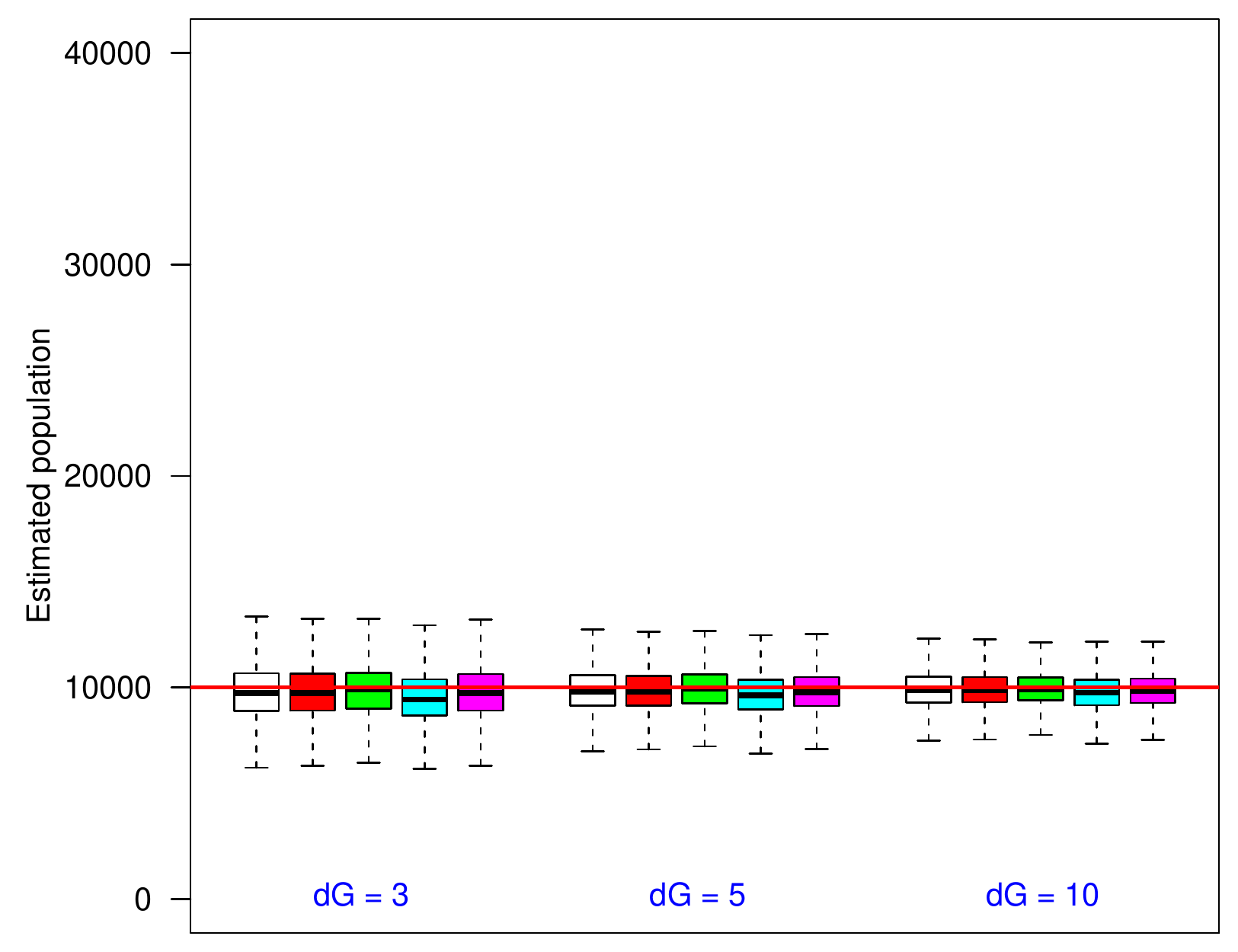}}\\
\subcaptionbox{$n$ = $20\cdot 10^3$, $r=250$\label{1c}}{\includegraphics[width = 0.3\linewidth]{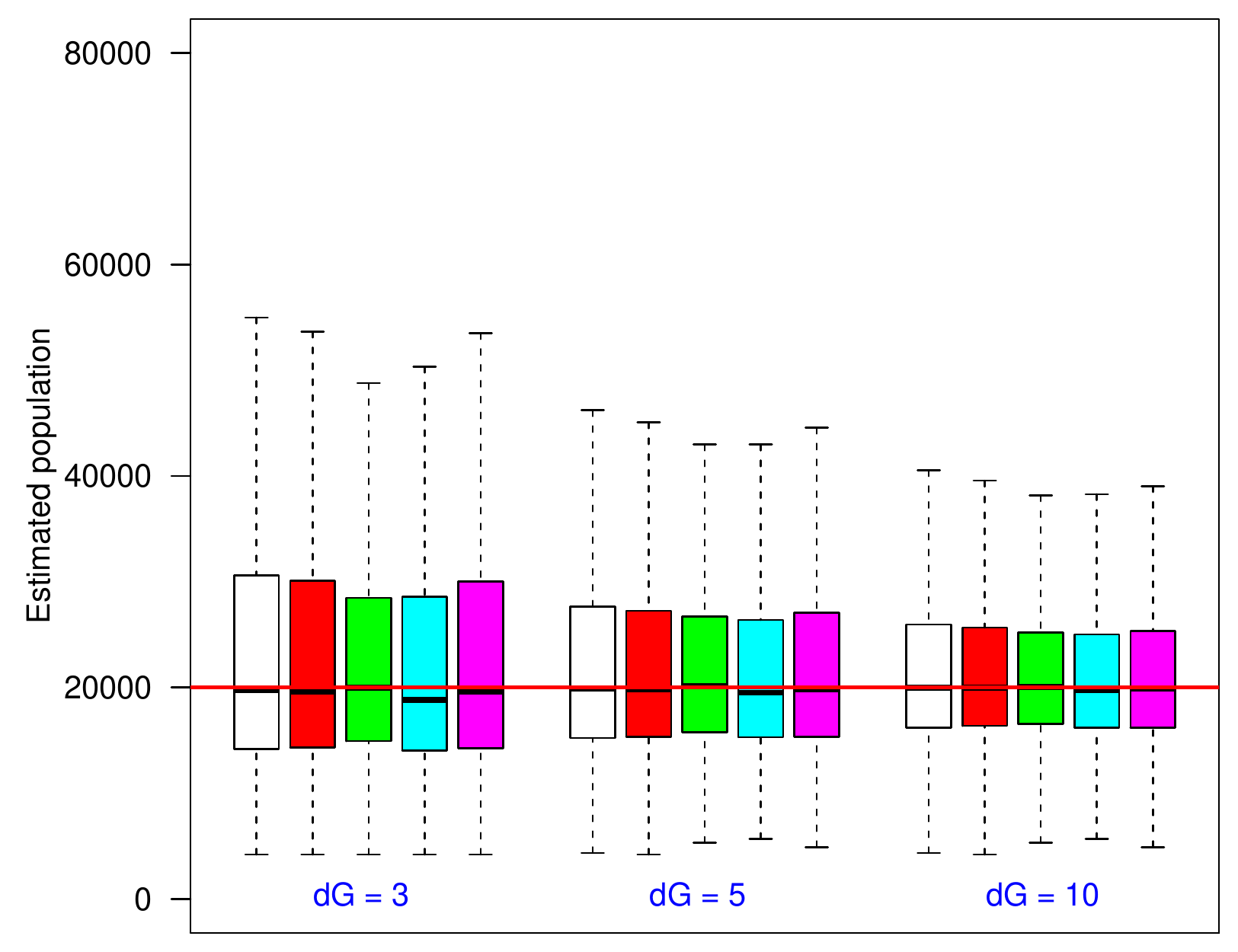}} &
\subcaptionbox{$n$ = $20\cdot 10^3$, $r=500$\label{2c}}{\includegraphics[width = 0.3\linewidth]{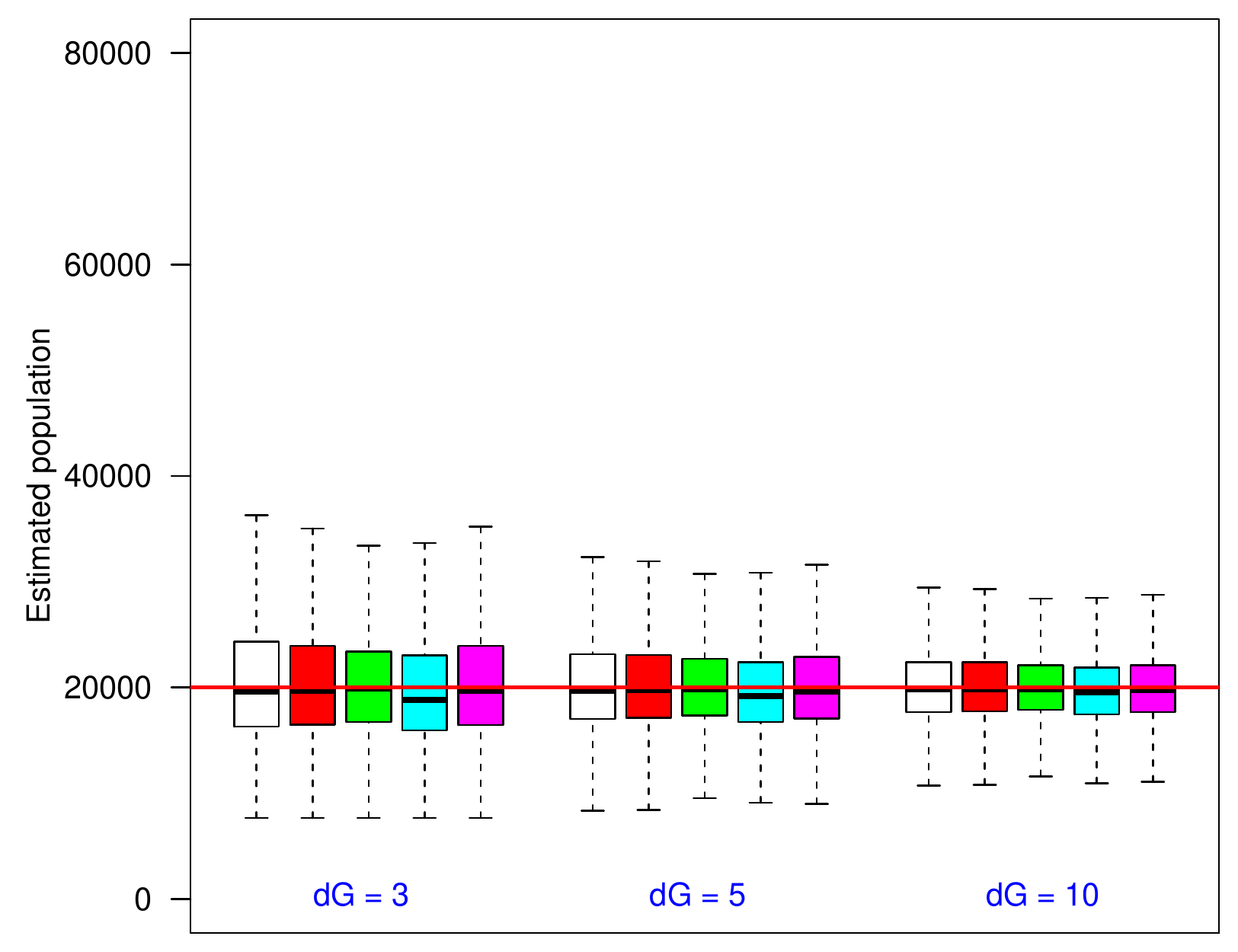}} &
\subcaptionbox{$n$ = $20\cdot 10^3$, $r=750$\label{3c}}{\includegraphics[width = 0.3\linewidth]{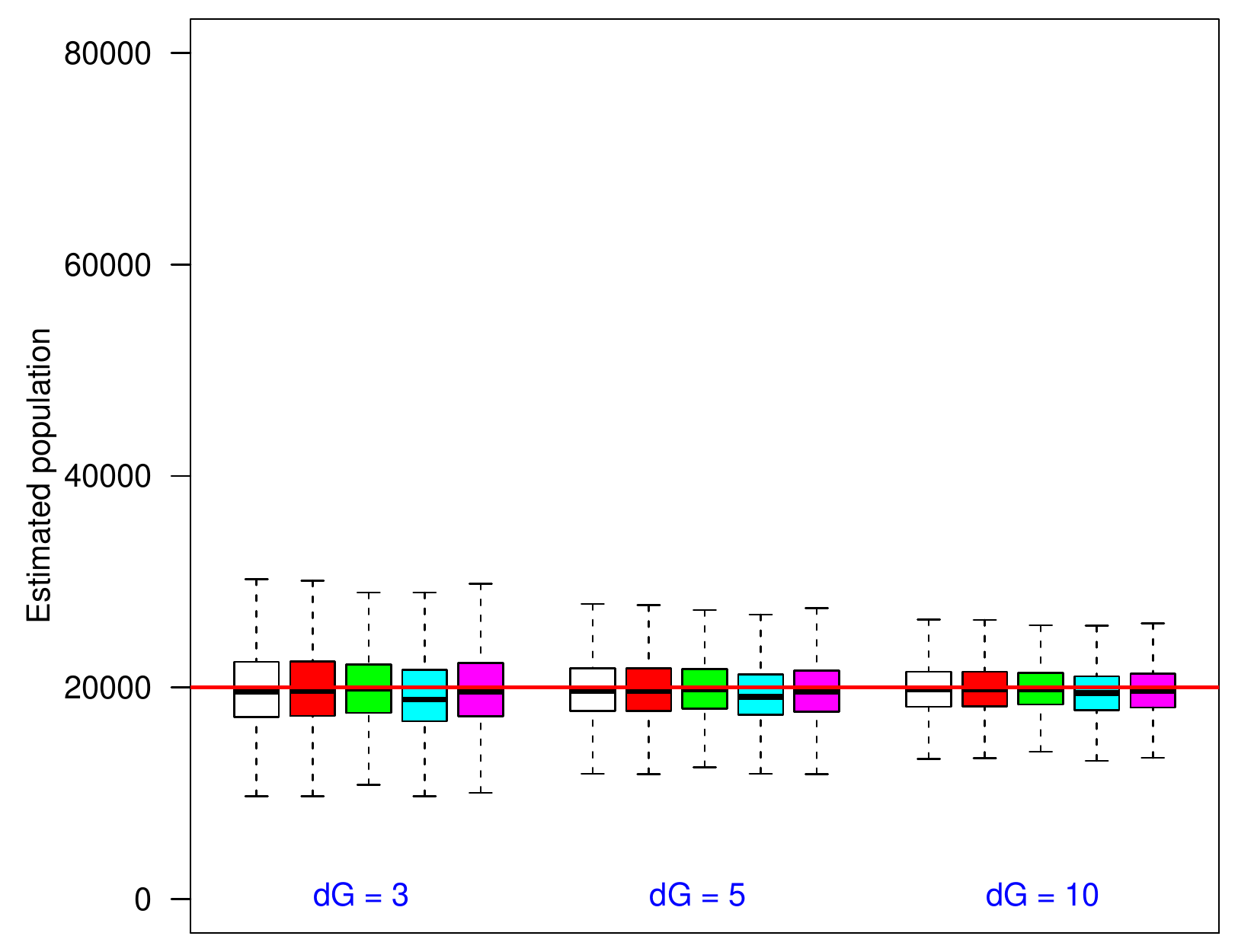}}\\[0pt]
\subcaptionbox{$n$ = $40\cdot 10^3$, $r=250$\label{1d}}{\includegraphics[width = 0.3\linewidth]{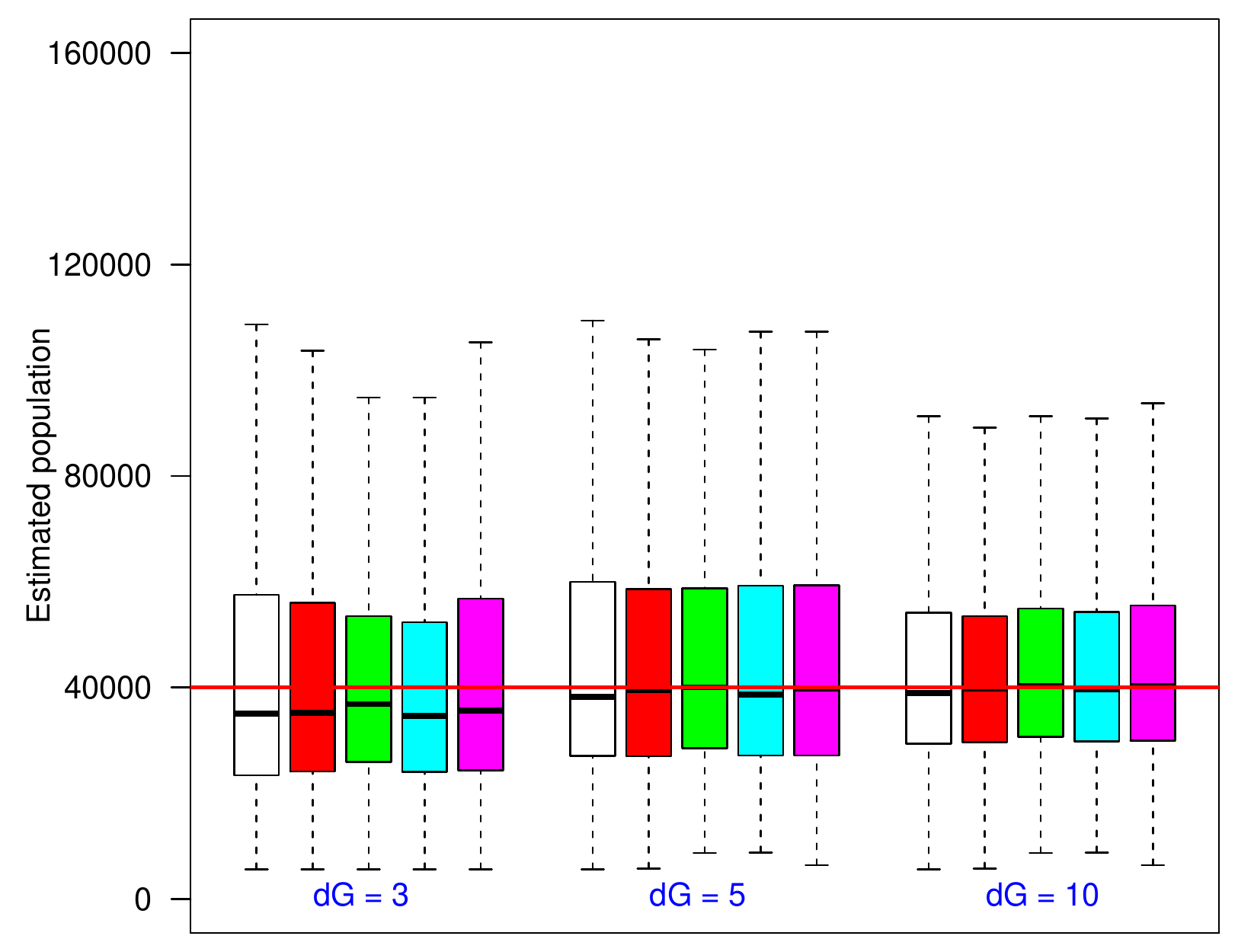}} &
\subcaptionbox{$n$ = $40\cdot 10^3$, $r=500$\label{2d}}{\includegraphics[width = 0.3\linewidth]{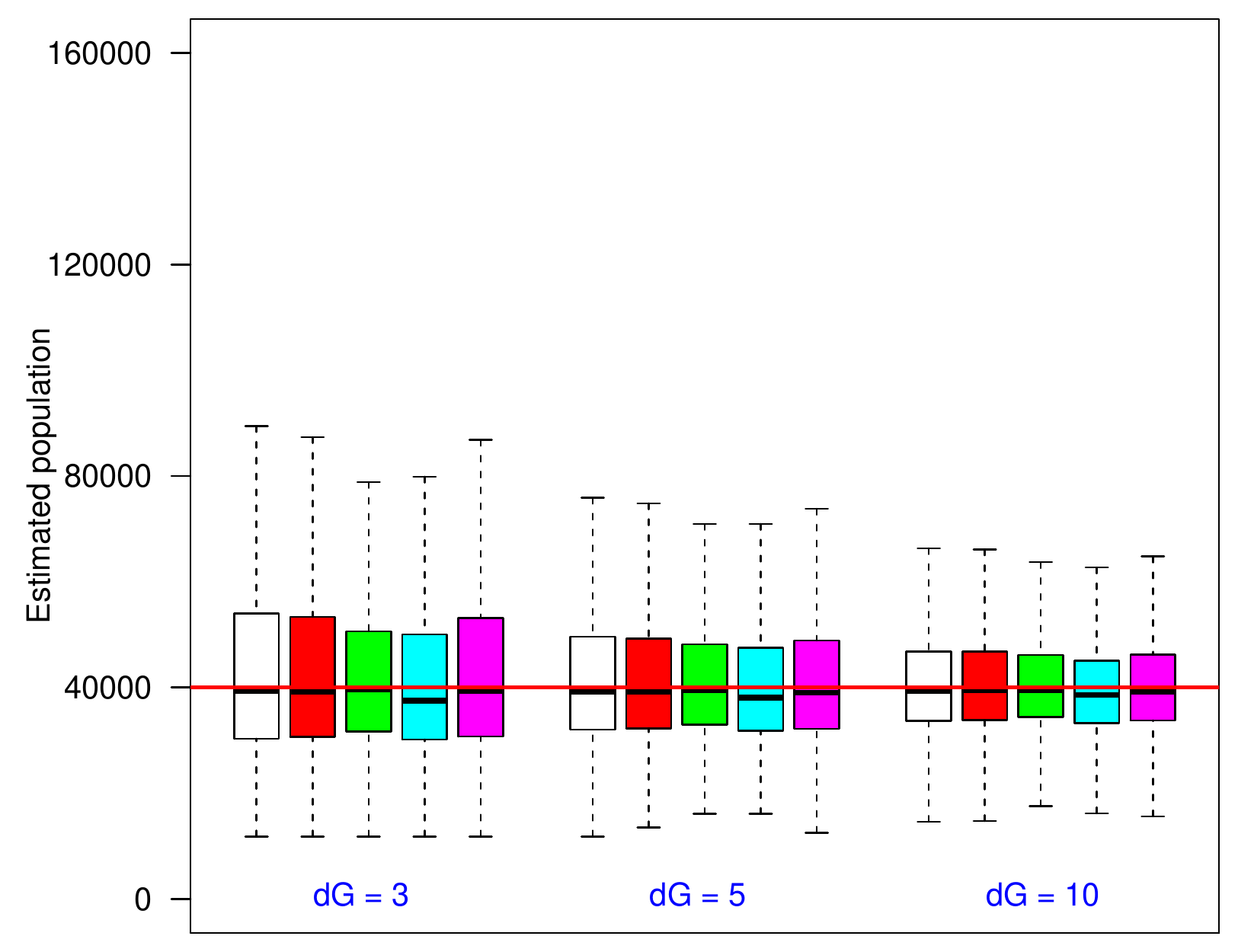}} &
\subcaptionbox{$n$ = $40\cdot 10^3$, $r=750$\label{3d}}{\includegraphics[width = 0.3\linewidth]{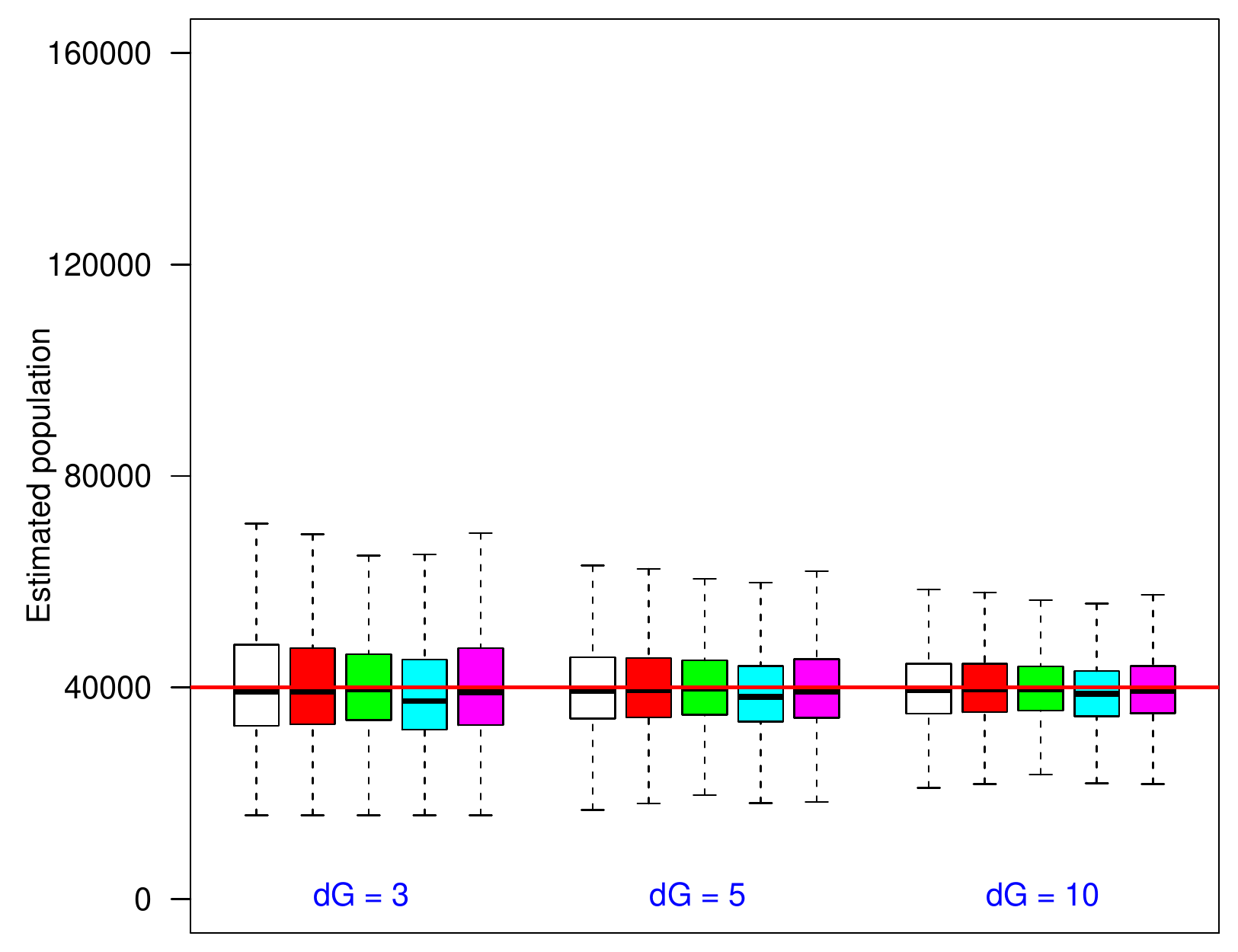}}
\end{tabular}
\setlength{\abovecaptionskip}{1cm}
\caption{Estimator $n_3$ on RDS samples in populations of size $n=5\cdot 10^3$ to $40\cdot 10^3$. In each box, the thick line indicates the sample median; the top of the box is the median of the upper half of the estimated values (75\% quartile); the bottom of the box indicates the median of the lower half of the estimated values (25\% quartile; and the whiskers indicate the full range of estimated values. No (finite) outliers were removed.}
\label{results:n3}
\end{figure}

Figure \ref{results:n3} shows that the median of $n_3$ converge to the true population size, much like the performance of $n_2$. In all the networks, the medians of $n_3$ are all very close to the their true network populations, regardless the sample sizes, population sizes and types of graph. In addition, Figure \ref{results:n3} shows that as sample size increases, the interquartile ranges of the estimates decrease. For example, when $n=5\cdot 10^3$ and $r=250$, Lognormal degree distribution graphs with $\lambda=3$ experience a interquartile range of 1915 in their $n_3$ estimates (39.1\% of the median).  In comparison, when $r=750$, the interquartile range for this family of graphs decreases to 604 (a 68.5\% reduction).  The magnitude of this effect decreases as networks grow larger.  For example for a network of size $n=40\cdot 10^3$, increasing the sample size from $r=250$ to $r=750$ causes the interquartile range of the $n_3$ estimate to undergo a (still sizable) 55.0\% decrease.


\section{Estimating Population Size while Ensuring Anonymity}
\label{sec:anonymity}

Significant obstacles arise in the direct application of estimators $n_1, n_2, n_3$ (see (\ref{eq:n1}), (\ref{eq:n2}), and (\ref{eq:n3}), respectively).  In many circumstances where RDS is used, researchers are often required to measure the sizes of stigmatized networked populations (e.g. people who inject drugs, sex workers, individuals engaged in specific types of illegal activity, etc.) and within social communities that naturally seek to remain ``unidentified''.  In these circumstances, the membership of sets $S$ and $R(S,F)$ is often not explicitly knowable because individuals are reluctant to unambiguously identify themselves or their social network peers.  

To formalize and accommodate notions of privacy required under such circumstances within the estimation procedures described above, we assume that each individual in $V = \{v_1, v_2, \ldots, v_{|V|}\}$  has a unique ID; for simplicity we take the ID of $v_i\in V$ to be the integer $i$ (for $i=1, \ldots, |V|$).   
Towards ensuring anonymity, we imagine a {\em hashing} \cite{CARTER1979} function $\psi:V\rightarrow \Omega$ that assigns each individual's ID to a code in $\Omega$.  We thus follow the general framework of Privatized Network Sampling (PNS) design \cite{fellowsThesis}, mimicking the hash functions of telefunken-type \cite{TELEFUNKEN2012}.  

By taking $\psi$ to be a random (not necessarily 1-to-1) function that is difficult to invert, subjects are convinced that disclosing the hash code of an individual does not unambiguously identify the individual themselves, and so preserves their privacy.

\begin{assumption}
\label{def-hash-assumptions}
Suppose $V$ is a set of individuals obtained via RDS referral tree $F$.  While each $v_i\in V$ is unwilling to disclose their own ID $i$, and is secretive about the IDs of their peers $\{ j \;|\; v_j\in N(v_i, \emptyset)\}$, they are readily willing to reveal (a) the own hash code $\psi(v_i)$; (b) the (multiset of) hash codes of their peers (outside the referral tree $F$):
\begin{eqnarray}
\label{def:N-hash}
N_u^\psi(S, F) &\coloneqq& \coprod\limits_{\substack{v \in N(u)\\ (u,v) \not\in F}} \{ \psi(v) \} \subseteq \Omega
  \end{eqnarray}
and (c) their own network size $d(v_i)=\langle N_u^\psi(S, F) \rangle$, excluding the referral tree $F$.
\end{assumption}

\begin{assumption}
\label{def-hash-assumptions}
To simplify our analysis, throughout what follows, we will assume $\psi$ is a function chosen uniformly at random from the space of all functions from $V\rightarrow \Omega$.  We will refer to such a $\psi$ as a ``random hash function'' from $V$ to $\Omega$.  The action of $\psi$ on the $V$ is illustrated in Figure \ref{fig:psi-concept}.
\end{assumption}

\begin{figure}[tbp]
\setlength{\belowcaptionskip}{12pt}
\centering
\begin{tabular}{Sc}
\includegraphics[width = 3.0in]{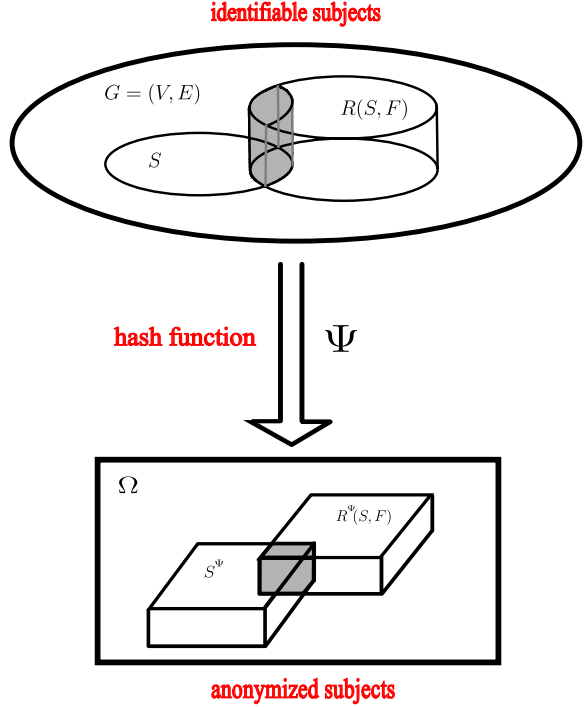}
\end{tabular}
\caption{The action of $\psi$ on $V$}
\label{fig:psi-concept}
\end{figure}

In practice, $\psi(v)$ might be an obtained by amalgamating a well-defined tuple of characteristics of $v$ which are known to $v$'s friends (e.g. $v$'s gender, phone number, hair color, approximate age, racial category, etc.) and then encoding this using a cryptographic function.  A related coding technique was used in our earlier work on estimating the size of the methamphetamine using population in New York City, where it was referred to as the ``telefunken'' code \cite{TELEFUNKEN2012}.  

\subsection{Revised Estimators for use with Privatized Network Sampling (PNS) design}
\label{sec:n2-n3-psi}

We begin by ``lifting'' the terms introduced in the earlier Definition \ref{def:basics}, to the PNS framework \cite{fellowsThesis}.

\begin{definition}
\label{def:basics-hashing}
Let $G=(V,E)$ be a graph, and $\psi:V\rightarrow \Omega$ a random hash function.  Let $H=(S,F)$ be a subgraph on $S\subseteq V$ with edge set $F \subseteq E \cap (S \times S)$.  The (multiset of) hash codes of the subjects is
   \begin{align}
   S^\psi &\coloneqq \{ \psi(v) \;|\; v \in S \} \subseteq \Omega \label{def:S-hash}
\intertext{The {\em $\psi$-free ends} of $S$ (in $G$ modulo $H$) are taken to be the disjoint union (multiset)}
R^\psi(S,F) &\coloneqq \coprod_{u\in S} N^\psi(u,F) \subseteq \Omega \label{def:R-hash}
\intertext{and the {\em $\psi$-matches} of (in $G$ modulo $H$) are taken to be the disjoint union (multiset)}
M^\psi(S,F) &\coloneqq \coprod_{u\in S} \left( N^\psi(u,F) \cap S^\psi \right) \subseteq \Omega. \label{def:M-hash}
\intertext{We denote their respective multiset cardinalities as}
 \langle R^\psi(S,F) \rangle &\coloneqq \sum_{u\in S} \left| N^\psi(u,F) \right| \nonumber\\
 \langle M^\psi(S,F) \rangle &\coloneqq \sum_{u\in S} \left| N^\psi(u,F) \cap S^\psi \right| \nonumber.
  \end{align}
  The reader may wish to compare expressions (\ref{def:N-hash}), (\ref{def:R-hash}), and (\ref{def:M-hash}) with the non-hashed analogues in Definition \ref{def:basics}'s expressions (\ref{def:N}), (\ref{def:R}), and (\ref{def:M}).
\end{definition}

The next Lemma is foundational and justifies the proposed revised estimates $n_1^\psi$, $n_2^\psi$, and $n_3^\psi$, which will be presented subsequently.

\begin{lemma}
Let $G=(V,E)$ a graph with $|V|=n'$, sampled from the space of all $n'$-vertex graphs by configuration sampling with respect to degree distribution ${\cal D}$.  Let $S\subseteq V$ be an RDS sample collected as a subgraph $H=(S,F)$ be  with edge set $F \subseteq E \cap (S \times S)$.   Let $c \coloneqq |S|/|V|$, where $c\ll 1$. Accepting Assumption~\ref{assumption-harmonic}, take $\psi:V\rightarrow \Omega$ to be a random hash function.
\begin{enumerate}
\item Suppose $u \in S$ reports its own code $x\coloneqq \psi(u)$, the code $y\coloneqq \psi(v)$ of one of its neighbors $v \in N_u(S,F)$.  If $w \in \psi^{-1}(y) \cap S$ is selected uniformly at random, and $w$ has degree $d(w)$, then
\begin{align*}
Prob(w=v) &= \frac{1}{\frac{n'-1}{|\Omega|}\frac{\widetilde{d}(S)}{(d(w) - 1)}+1} 
\end{align*}
\item  For each  code $y \in \Omega$, over the space of all random hash functions,
$$ E[\langle M^\psi(S,F) \rangle] = \hat{m}(y,n')$$
where
\begin{align*}
\hat{m}(y,n') &\coloneqq \sum_{w \in \psi^{-1}(y) \cap S} \frac{1}{\frac{n'-1}{|\Omega|}\frac{\widetilde{d}(S)}{(d(w) - 1)}+1} \\
\hat{m}(n') &\coloneqq \sum_{y \in M^\psi(S,F)} \hat{m}(y,n')
  \end{align*}
\end{enumerate}
\end{lemma}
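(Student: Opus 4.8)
The plan is to handle the two parts in sequence, with essentially all of the content residing in Part~1 and Part~2 following by linearity of expectation. First I would observe that the asserted value of $Prob(w=v)$ depends on $w$ only through $d(w)$, so it suffices to fix a single ``match event'': a reporter $u\in S$, the code $y=\psi(v)$ it reads off one free neighbor $v\in N_u(S,F)$, and a fixed candidate node $w\in\psi^{-1}(y)\cap S$ of degree $d(w)$, and to compute the posterior probability $Prob(w=v)$ that this candidate is the genuine neighbor underlying the report. This is a Bayesian calculation whose prior comes from the configuration model and whose likelihood comes from the random hash. For the prior (ignoring $\psi$ entirely): the half-edge of $u$ leading to $v$ is, in the configuration model, matched uniformly at random among the free half-edges not belonging to $u$; there are $\approx (n'-1)\,d(V)$ of these --- the half-edges consumed by the RDS referral process, which yields a forest (as noted before Algorithm~\ref{rds-algo}), and by the rest of $S$ number only $O(|S|)=O(c\,n')$ and so are negligible because $c\ll 1$ --- of which exactly $d(w)-1$ belong to $w$, the ``$-1$'' because $w$, being RDS-recruited, has spent one half-edge on its recruiter (for the $|D|$ seeds this correction is lower order since $|D|/|S|\to 0$). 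Hence the prior is $Prob(w=v)\approx \frac{d(w)-1}{(n'-1)\,d(V)}$, and by Assumption~\ref{assumption-harmonic} I may replace $d(V)$ by $\widetilde d(S)$.

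Next I would apply Bayes using the hash likelihoods. Since $\psi$ is chosen uniformly from all functions $V\to\Omega$, the value $\psi(w)$ is independent of the graph and of $\psi(v)$, so $Prob(\psi(w)=y\mid w=v)=1$ while $Prob(\psi(w)=y\mid w\neq v)=1/|\Omega|$. Writing $p\coloneqq\frac{d(w)-1}{(n'-1)\,\widetilde d(S)}$ for the prior, Bayes gives
\[
Prob\bigl(w=v\mid \psi(w)=y\bigr)=\frac{p}{p+\tfrac{1}{|\Omega|}(1-p)},
\]
and discarding the $O(c)$ prior mass inside the factor $(1-p)$ --- again using $c\ll1$ --- collapses this to $\bigl(\tfrac{n'-1}{|\Omega|}\cdot\tfrac{\widetilde d(S)}{d(w)-1}+1\bigr)^{-1}$, the claimed expression. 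As a consistency check: this tends to $1$ as $|\Omega|\to\infty$, i.e.\ when the hash is fine-grained and collisions are rare, and it is increasing in $d(w)$, reflecting the degree-biased exposure of a node's neighbors under the configuration model.

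For Part~2 I would use linearity of expectation over matched incidences. Expanding the multiset sum, $\langle M^\psi(S,F)\rangle=\sum_{u\in S}\bigl|N^\psi(u,F)\cap S^\psi\bigr|$; grouping the incidences by the sample node $w$ carrying the matched code $y=\psi(w)$, the expected contribution of $w$ --- the expected number of reporters whose free-neighbor code genuinely resolves to $w$ --- reduces, in the $c\ll 1$ regime, to the single Part~1 probability $Prob(w=v)$. Summing over $w\in\psi^{-1}(y)\cap S$ yields $\hat m(y,n')$ for each code $y$, and summing over the codes appearing in $M^\psi(S,F)$ yields $\hat m(n')$, as claimed.

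The step I expect to be the main obstacle is making the configuration-model prior $\frac{d(w)-1}{(n'-1)\,d(V)}$ rigorous. This rests on three points: (i) the ``$d(w)-1$ free half-edges'' bookkeeping, which uses the forest structure of RDS referrals together with Assumption~\ref{assumption-harmonic}; (ii) showing that the half-edges consumed by the referral forest and by $S$ are negligible against the $\Theta(n'\,d(V))$ total, which is precisely where the hypothesis $c\ll1$ enters; and (iii) controlling parallel edges and self-loops, which configuration graphs admit --- these can be absorbed into the multiset formalism of Definition~\ref{def:basics} or shown to carry a vanishing fraction of incidences. A secondary subtlety in Part~2 is that the multiset intersections in $M^\psi$ use $\min$ (Definition~\ref{def:multiset-operations}) rather than a plain sum, so one must note that code collisions --- two free neighbors of a single $u$ sharing a code, or two sample nodes sharing a code --- are $O(c)$ events that do not disturb the leading-order expectation.
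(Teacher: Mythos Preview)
Your argument is correct and is essentially the paper's own proof in Bayesian dress: the paper counts free half-edges directly among the vertices in $\psi^{-1}(y)$ (using $E[|\psi^{-1}(y)\setminus\{w\}|]=(n'-1)/|\Omega|$ and $d(w)-1$ free stubs at $w$) and takes the ratio, while you reach the identical expression by forming the configuration-model prior $p=(d(w)-1)/\bigl((n'-1)\,\widetilde d(S)\bigr)$ and conditioning on the hash collision via $Prob(\psi(w)=y\mid w\neq v)=1/|\Omega|$; Part~2 is in both cases linearity of expectation. One small slip: your simplification $(1-p)\approx 1$ is valid because $p=O(1/n')$, not because $c\ll 1$ --- the $c\ll 1$ hypothesis is only needed (and you already used it correctly) to neglect the $O(|S|)$ half-edges consumed by the referral forest.
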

\begin{proof}
(1) Because $\psi$ is a random function, for any $z\in \Omega$  
\begin{align*}
E[ |\psi^{-1}(z)| ] &= \frac{n'}{|\Omega|}. 
\end{align*}
The expected total number of free ends incident to some vertex in the set $\psi^{-1}(y) \backslash \{w\}$ is 
$$
\frac{(n'-1)(1-c)}{|\Omega|} \cdot \widetilde{d}(S) + \frac{(n'-1)c}{|\Omega|} \cdot \left(\widetilde{d}(S) - 1\right)
$$
and since $w\in S$, the expected number of free ends incident to $w$ is $d(w) - 1$.
So 
\begin{align*}
Prob(w=v) &= \frac{d(w)-1}{\frac{(n'-1)(1-c)}{|\Omega|} \cdot \widetilde{d}(S) + \frac{(n'-1)c}{|\Omega|} \cdot \left(\widetilde{d}(S) - 1\right) + (d(w)-1)}
\end{align*}
dividing through by $d(w)-1$, and considering $c \sim 0$, the Lemma is proved.
Assertion (2) follows from (1) by linearity of expectation.
\end{proof}

\begin{definition}
\label{def:n2-hashing}
Given a graph $G=(V,E)$, and $\psi:V\rightarrow \Omega$ a random hash function.  Fix $S \subseteq V$, and $H=(S,F)$ a subgraph on $S\subseteq V$ with edge set $F \subseteq E \cap (S \times S)$.  We define
\begin{align}
\label{eq:n2-hashing}
n_2^\psi(S,F) &\coloneqq RootOf\left[ f_2^\psi(n',S,F) - n' = 0,\;\; n' \right]
\intertext{where}
f_2^\psi(n',S,F) &\coloneqq \frac{\frac{d(S)-1}{\widetilde{d}(S)} \cdot \langle S^\psi \rangle\cdot \langle R^\psi(S,F) \rangle }{\hat{m}(n')} \nonumber
\end{align}
\end{definition}

\begin{definition}
\label{def:cross-seeds-hashing}
Given a graph $G=(V,E)$, a set $S \subseteq V$, and $H=(S,F)$ a subgraph on $S\subseteq V$ with edge set $F \subseteq E \cap (S \times S)$.  Let $D\subseteq S$ satisfying $|D|>1$ and
$$
s_1 \neq s_2 \implies C_\gamma(s_1) \cap C_\gamma(s_2) = \emptyset.
$$ 
Take $\gamma: S\rightarrow D$ as described in Definition \ref{def:cross-seeds}.  The (multiset of) hash codes of vertices in the component of $u$ are denoted
\begin{align}
C_\gamma^\psi(u) &\coloneqq \{ \psi(v) \;|\; v \in C_\gamma(u) \} \subseteq S^\psi \label{def:C-hash}
\intertext{while the codes of the complement set (inside $S$) is written as} 
\widetilde{C}_\gamma^\psi(u) &\coloneqq \{ \psi(v) \;|\; v \in \widetilde{C}_\gamma(u) \} \subseteq S^\psi. \nonumber
\intertext{Note that $C_\gamma^\psi(u) \cap \widetilde{C}_\gamma^\psi(u)$ may be non-empty. For each seed $s\in D$, we define the {\em cross-seed $\psi$-matches} from $C_\gamma^\psi(s)$ in $G$ modulo $H$ as the disjoint union (multiset)}
X^\psi(s, F, \gamma) &\coloneqq \coprod_{u\in C_\gamma(s)} \left( N^\psi(u,F) \cap \widetilde{C}_\gamma^\psi(s) \right) \subseteq \Omega. \label{def:X-hash}\\
\intertext{The reader may wish to compare expressions (\ref{def:C-hash}) and (\ref{def:X-hash}) with the non-hashed analogues in Definition \ref{def:cross-seeds}'s expressions (\ref{def:C}) and (\ref{def:X}).  We also define}
\tilde{x}(y,s,\gamma,n') &\coloneqq \sum_{w \in \psi^{-1}(y) \cap \widetilde{C}_\gamma(s)} \frac{1}{\frac{n'-1}{|\Omega|}\frac{\widetilde{d}(S)}{(d(w) - 1)}+1} \nonumber \\
\hat{x}(s,F,\gamma,n') &\coloneqq \sum_{y \in X^\psi(s,F,\gamma)} \tilde{x}(y,s,\gamma,n') \nonumber
  \end{align}
\end{definition}

\begin{definition}
\label{def:n3-psi}
Given a graph $G=(V,E)$, a set $S \subseteq V$, and $H=(S,F)$ a subgraph on $S\subseteq V$ with edge set $F \subseteq E \cap (S \times S)$.  We define
\begin{align}
\label{eq:n3-hashing}
n_3^\psi(S,F) &\coloneqq RootOf\left[ f_3^\psi(n',S,F, D, \gamma) - n' = 0,\;\; n' \right]
\intertext{where}
f_3^\psi(n', S, F, D, \gamma) &\coloneqq \frac{\sum_{s\in D} \frac{d(\widetilde{C}_\gamma(s))-1}{\widetilde{d}(S)} \cdot \langle \widetilde{C}_\gamma^\psi(s) \rangle \cdot \langle R^\psi(C_\gamma(s), F) \rangle}{\sum_{s\in D} \hat{x}(s,F,\gamma,n')} \nonumber
\end{align}
\end{definition}

\subsection{Evaluating $n^{\psi}_2$ on Synthetic Networks}
\label{sec:eval-n2-psi}

The experiments discussed here follow the framework used in prior experiments described above. Samples are derived using the RDS process operating as specified in Assumption \ref{def:rds-assumptions}.  The hash space size used for the encoding of each agent's identity was varied from $|\Omega|=2\cdot 10^3$ to $256 \cdot 10^3$.

The $12$ graphs in Figure \ref{results:n2_psi} present the performance of the $n^{\psi}_2$ estimator as the true population size $n$ is varied from $5\cdot 10^3$ to $40\cdot 10^3$ (vertical axis of the grid), the sample size is fixed to $r = 500$ and the hash space size was varied from $|\Omega|=2\cdot 10^3$ to $256 \cdot 10^3$ (horizontal axis of the grid).  In each of the $12$ graphs, the x-axis varies the average degree $\lambda$ from $3$ to $10$.  For each choice of $\lambda$, the medians and quartile ranges of $n^{\psi}_2$ are given for each of the $5$ graph families.  Each of these is determined by $900$ simulations ($30$ graphs times $30$ uniformly drawn samples in each graph).

\begin{figure}[p]
\setlength{\belowcaptionskip}{12pt}
\centering
{\small 
\begin{tabular}{ScScScSc}
\subcaptionbox{$n$ = $5\cdot 10^3$, $|\Omega| = 2\cdot 10^3$\label{1a}}{\includegraphics[width = 0.3\linewidth]{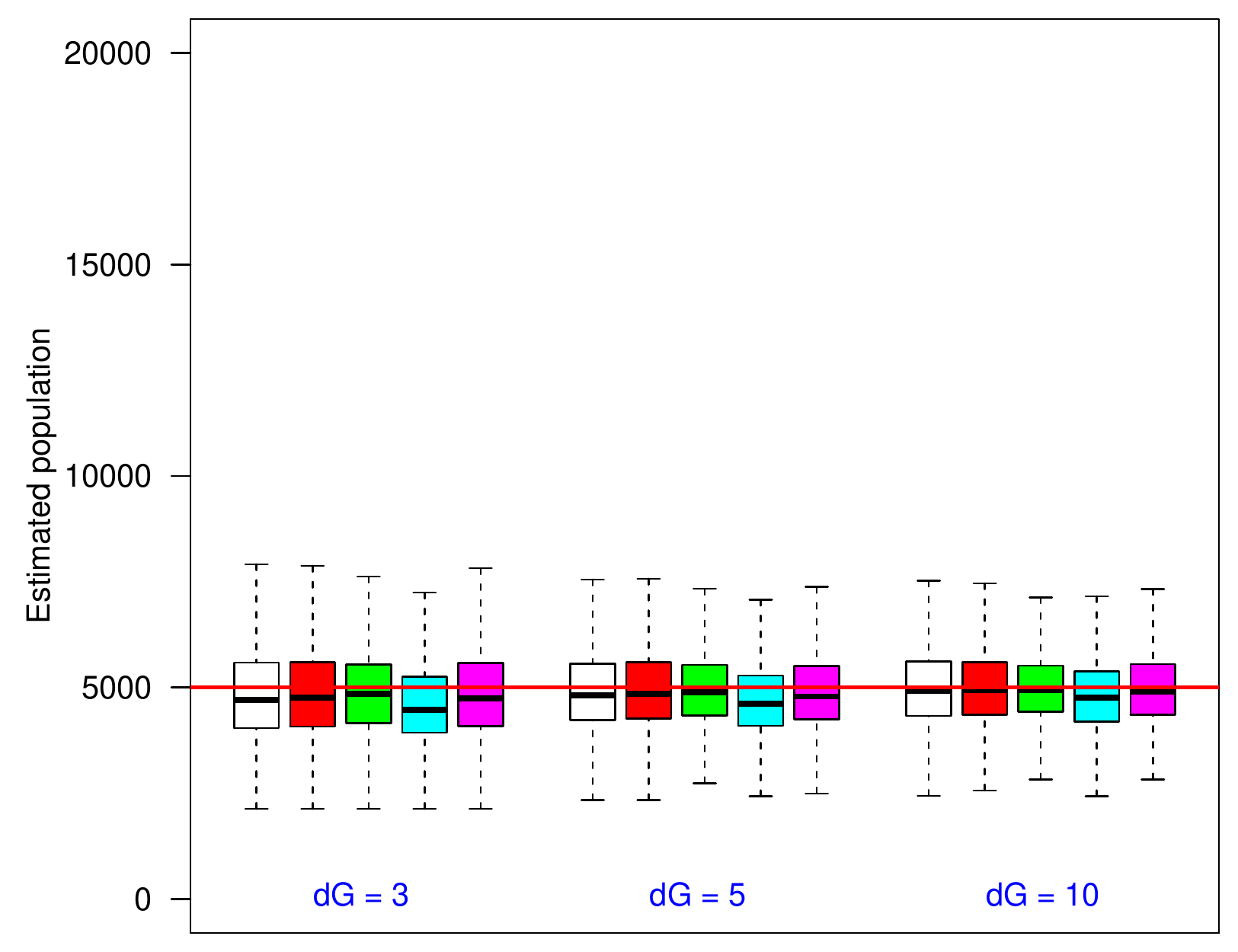}} &
\subcaptionbox{$n$ = $5\cdot 10^3$, $|\Omega| = 32\cdot 10^3$\label{2a}}{\includegraphics[width = 0.3\linewidth]{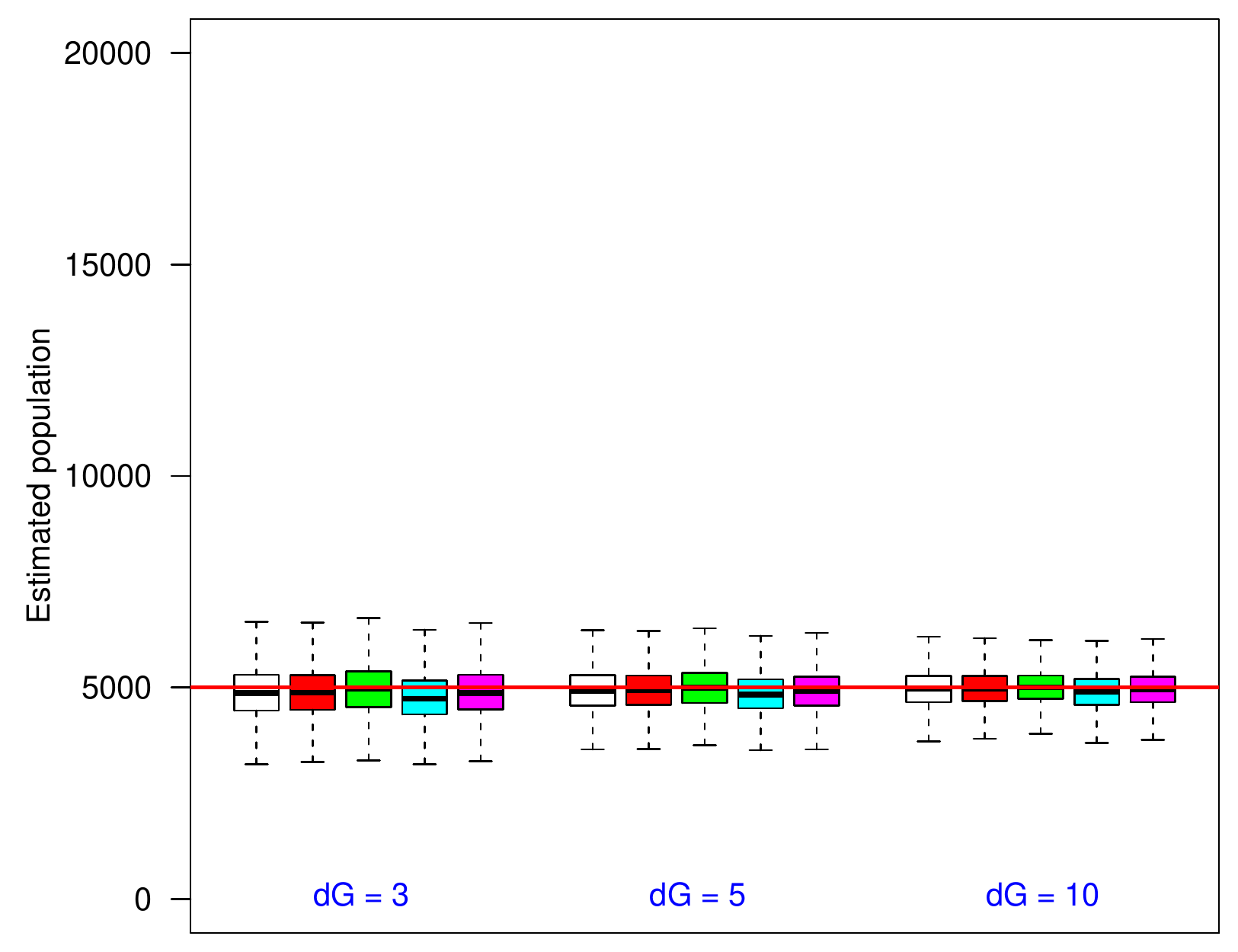}} &
\subcaptionbox{$n$ = $5\cdot 10^3$, $H = 256 \cdot 10^3$\label{3a}}{\includegraphics[width = 0.3\linewidth]{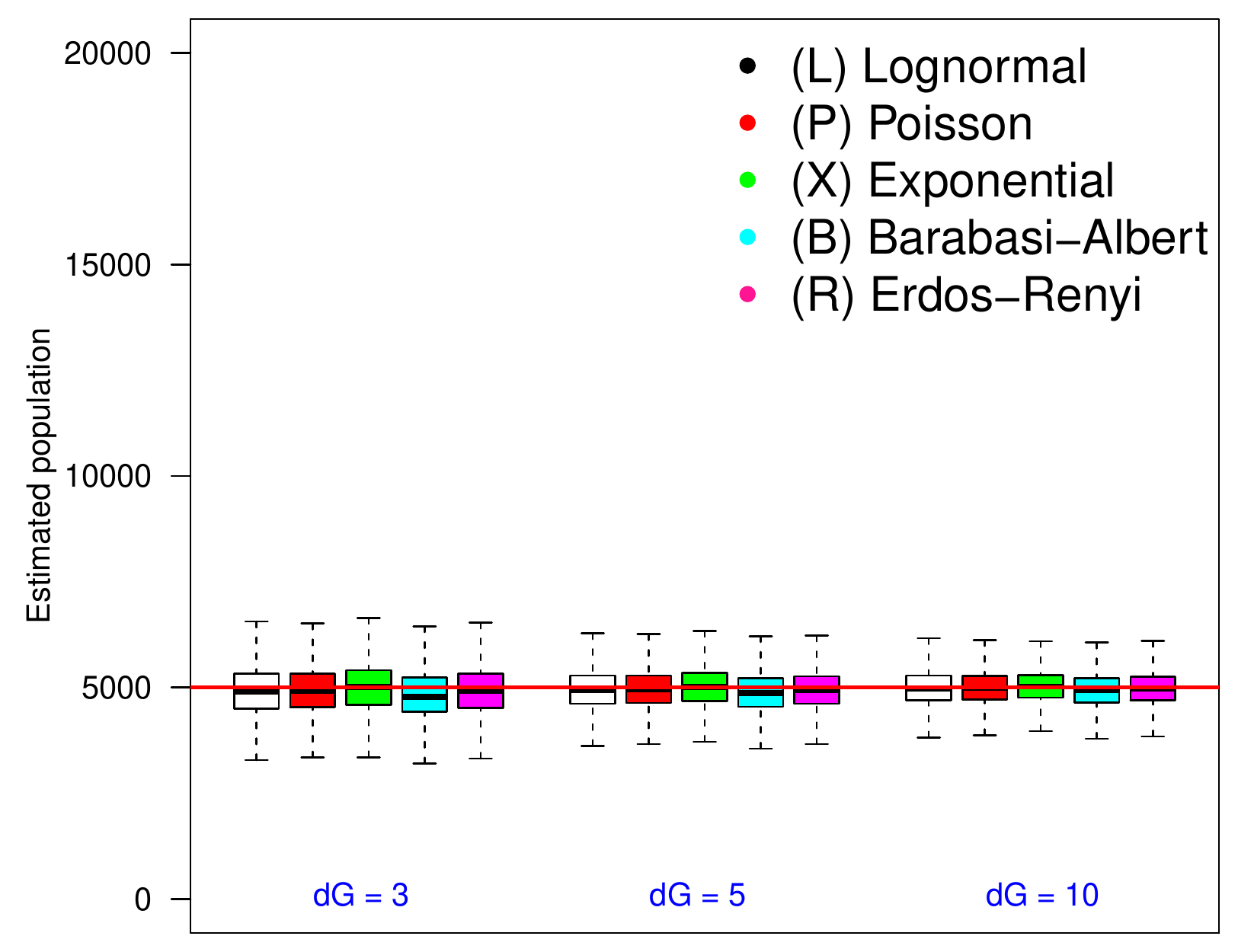}}\\
\subcaptionbox{$n$ = $10\cdot 10^3$, $|\Omega| = 2\cdot 10^3$\label{1b}}{\includegraphics[width = 0.3\linewidth]{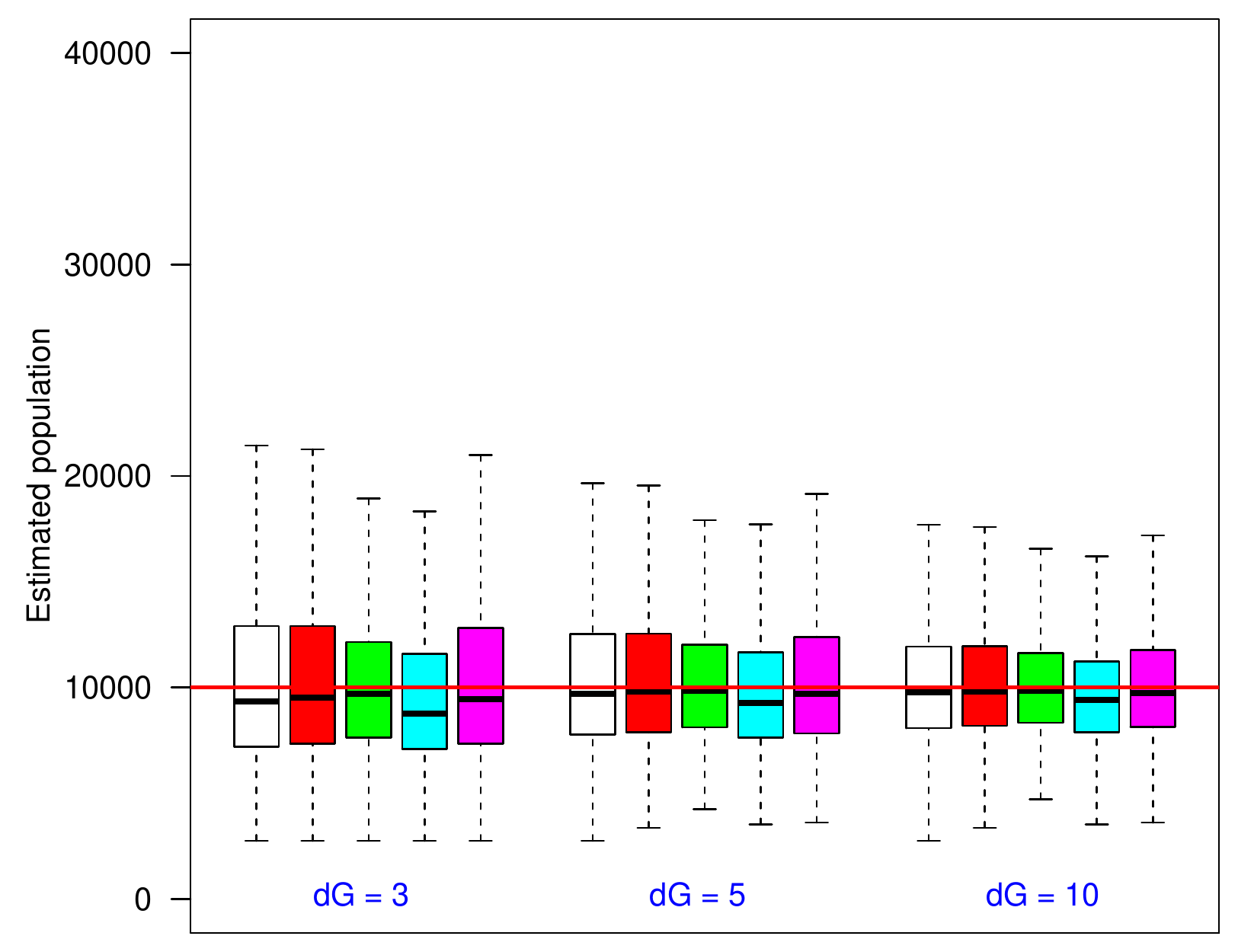}} &
\subcaptionbox{$n$ = $10\cdot 10^3$, $|\Omega| = 32\cdot 10^3$\label{2b}}{\includegraphics[width = 0.3\linewidth]{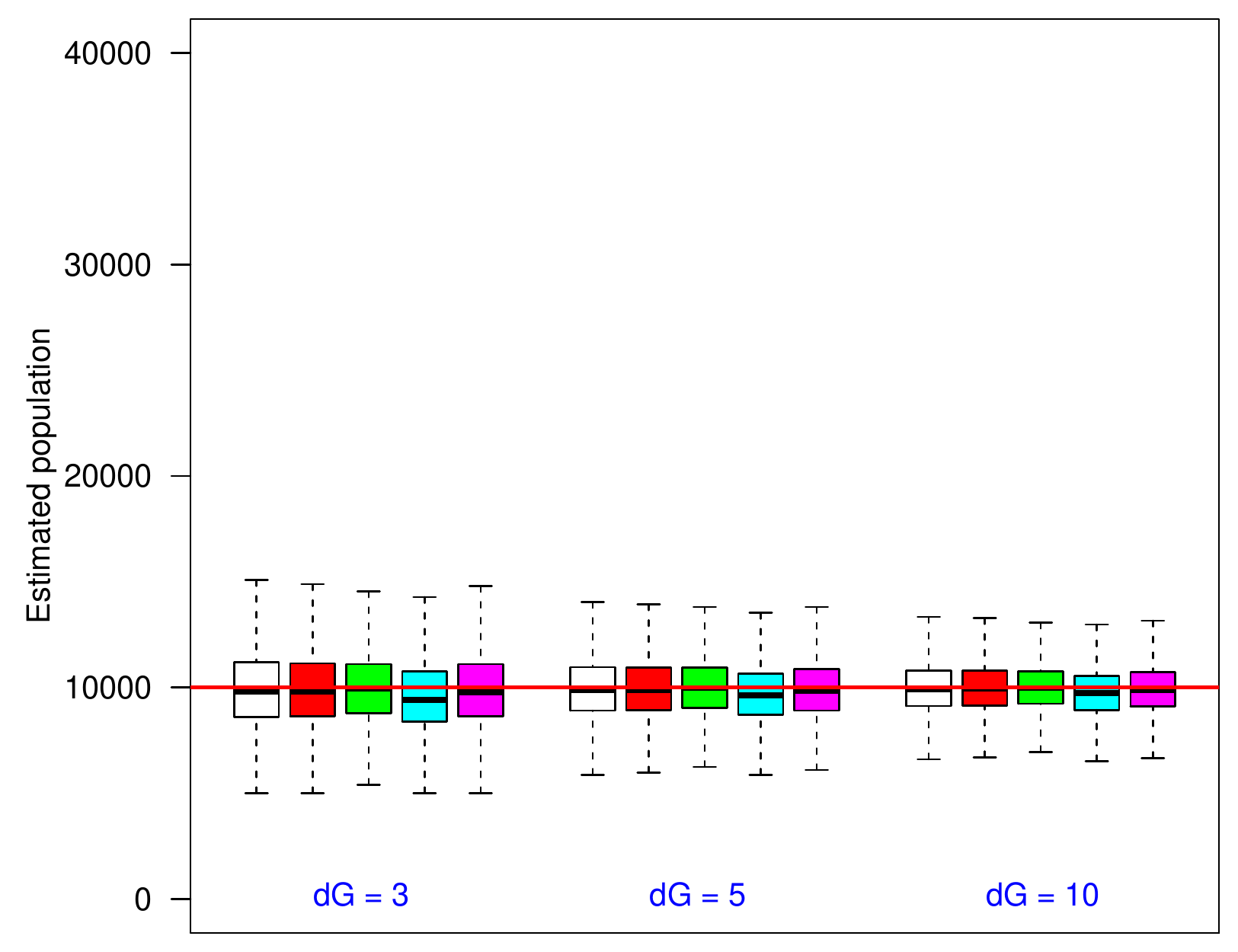}} &
\subcaptionbox{$n$ = $10\cdot 10^3$, $|\Omega| = 256 \cdot 10^3$\label{3b}}{\includegraphics[width = 0.3\linewidth]{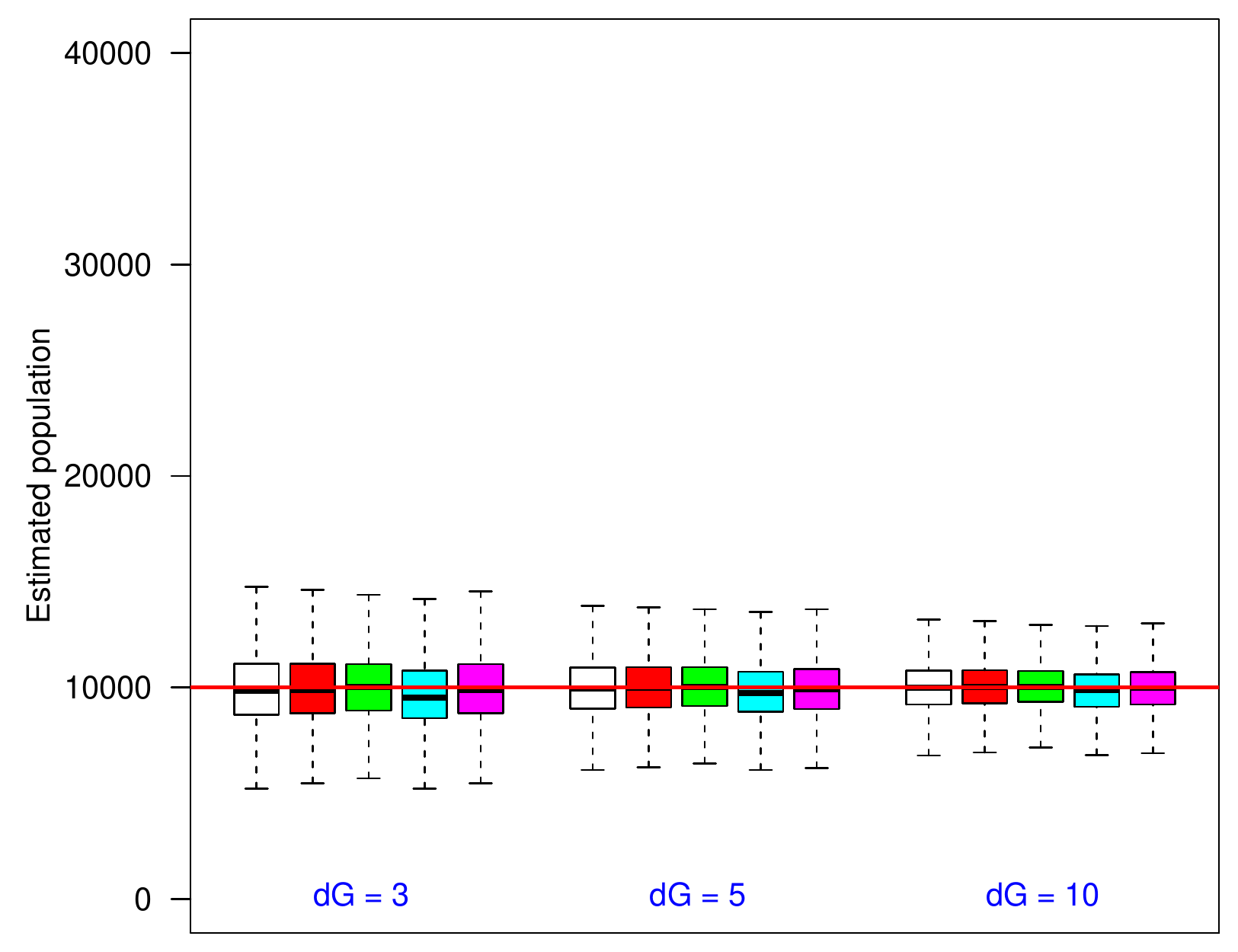}}\\
\subcaptionbox{$n$ = $20\cdot 10^3$, $|\Omega| = 2\cdot 10^3$\label{1c}}{\includegraphics[width = 0.3\linewidth]{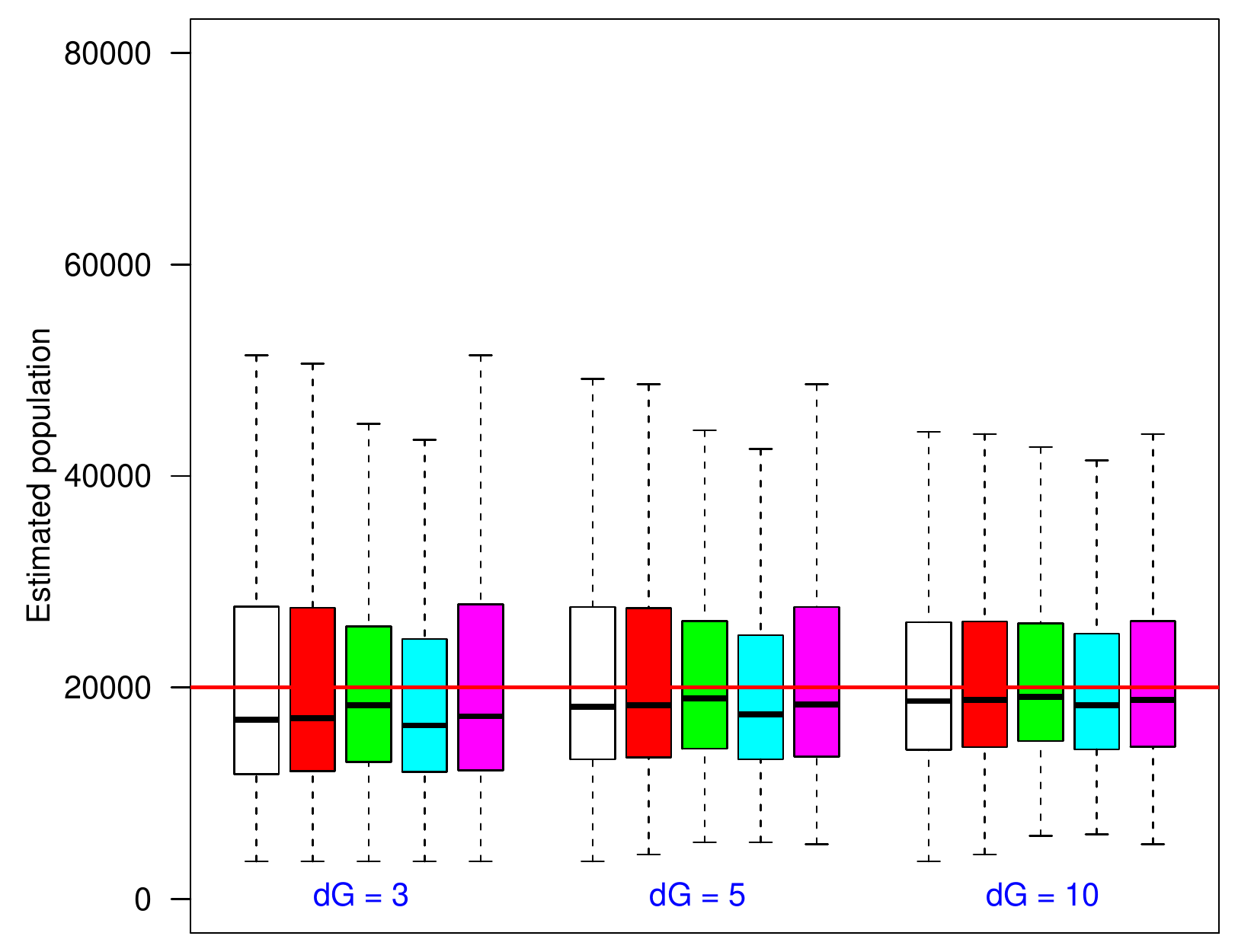}} &
\subcaptionbox{$n$ = $20\cdot 10^3$, $|\Omega| = 32\cdot 10^3$\label{2c}}{\includegraphics[width = 0.3\linewidth]{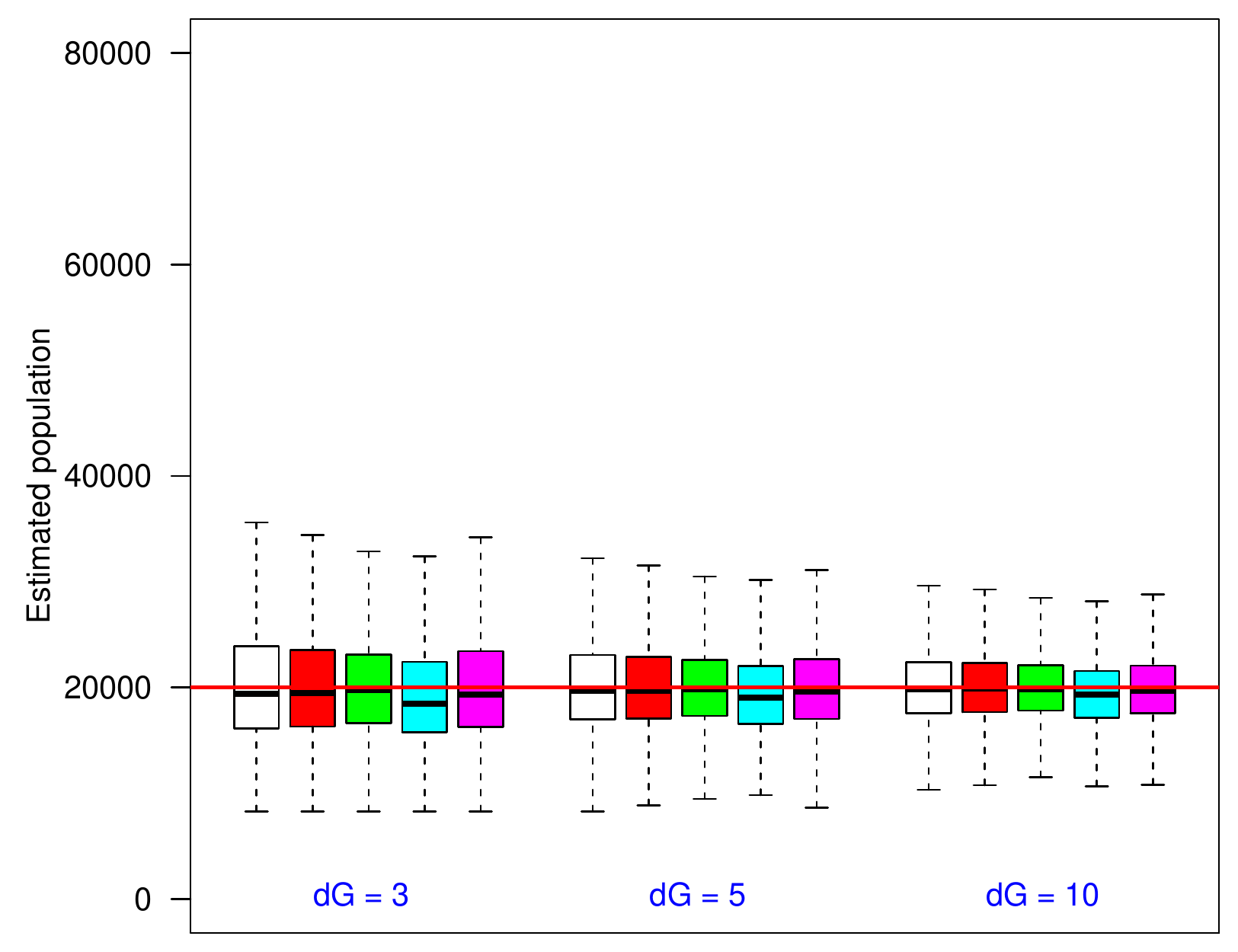}} &
\subcaptionbox{$n$ = $20\cdot 10^3$, $|\Omega| = 256 \cdot 10^3$\label{3c}}{\includegraphics[width = 0.3\linewidth]{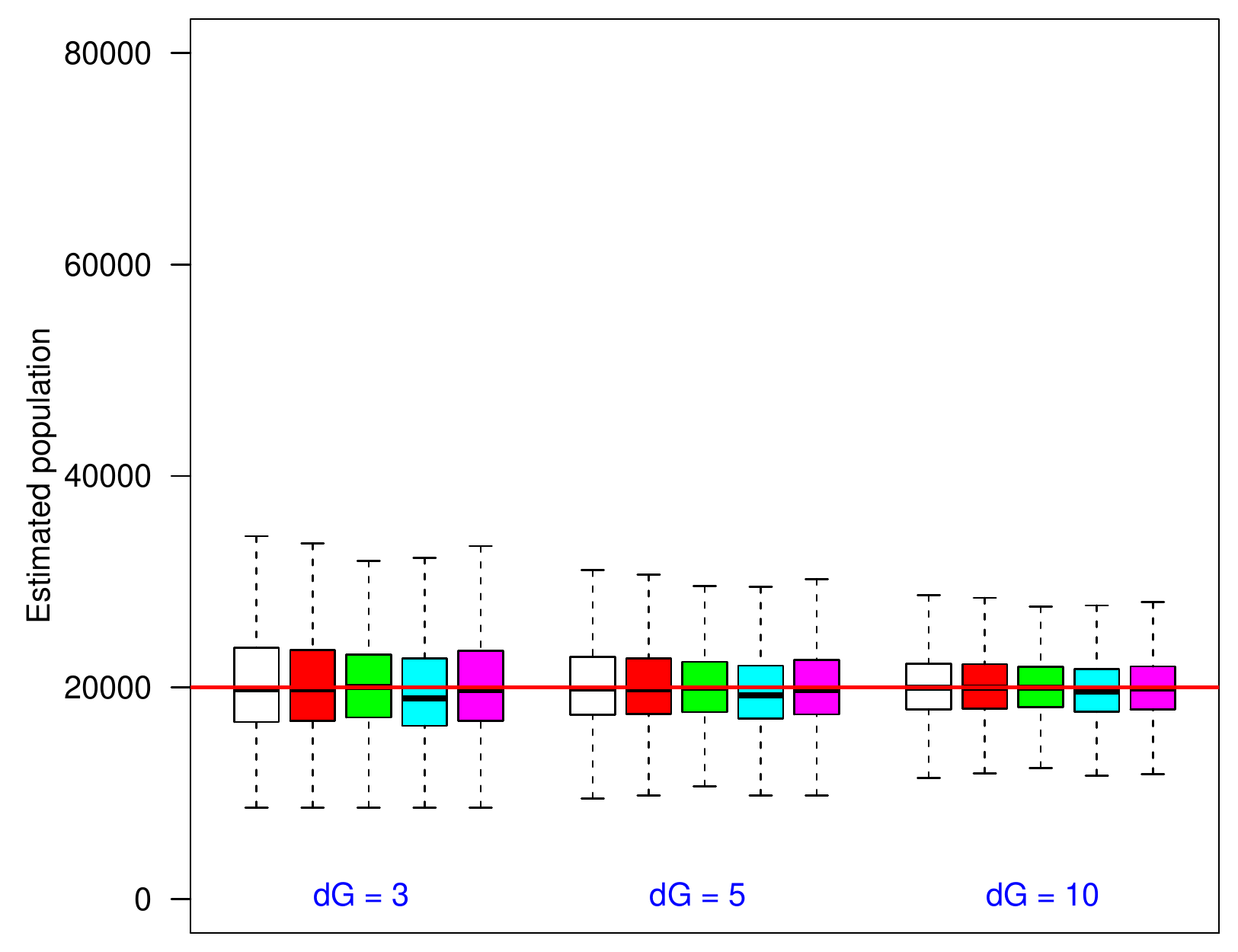}}\\[0pt]
\subcaptionbox{$n$ = $40\cdot 10^3$, $|\Omega| = 2\cdot 10^3$\label{1d}}{\includegraphics[width = 0.3\linewidth]{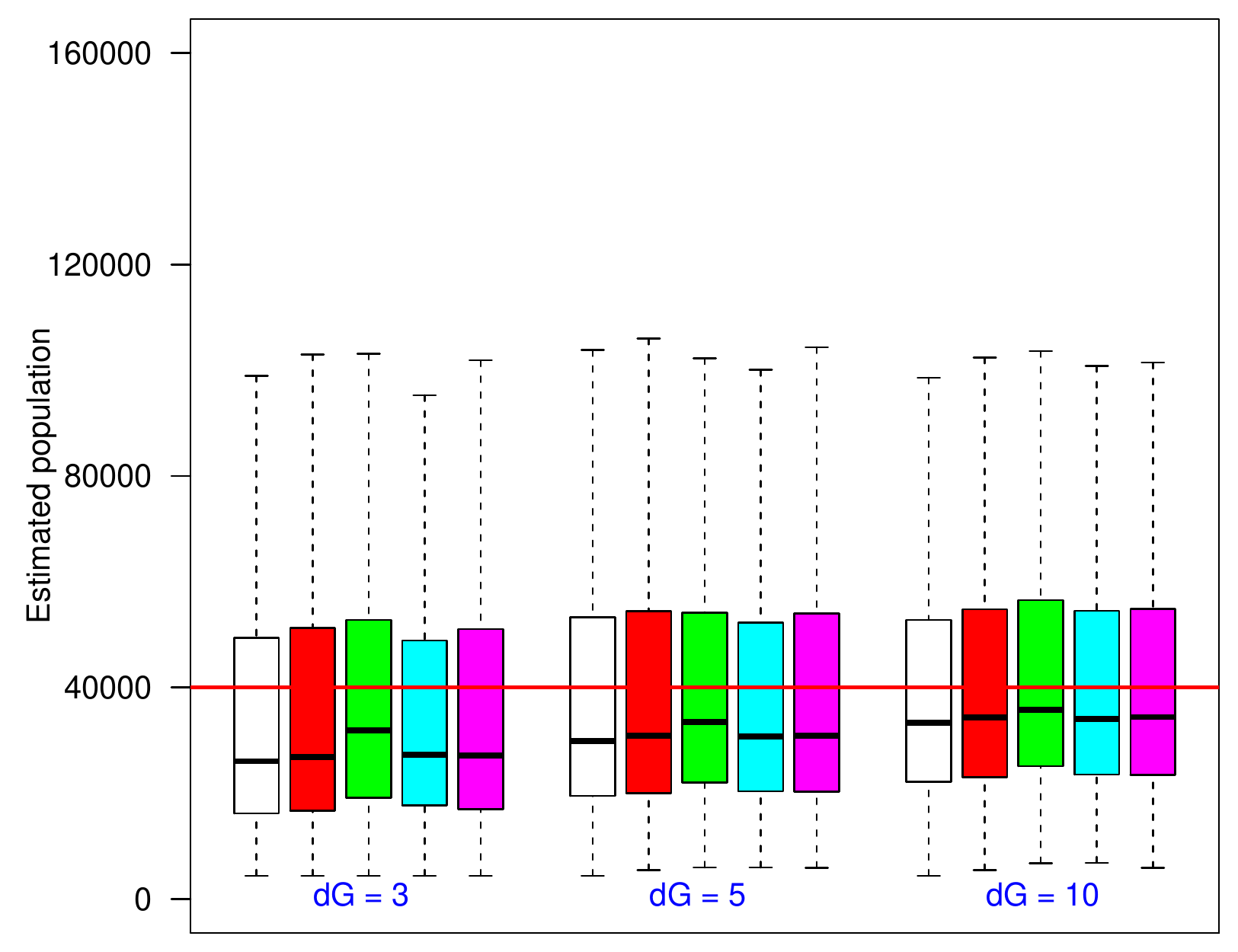}} &
\subcaptionbox{$n$ = $40\cdot 10^3$, $|\Omega| = 32\cdot 10^3$\label{2d}}{\includegraphics[width = 0.3\linewidth]{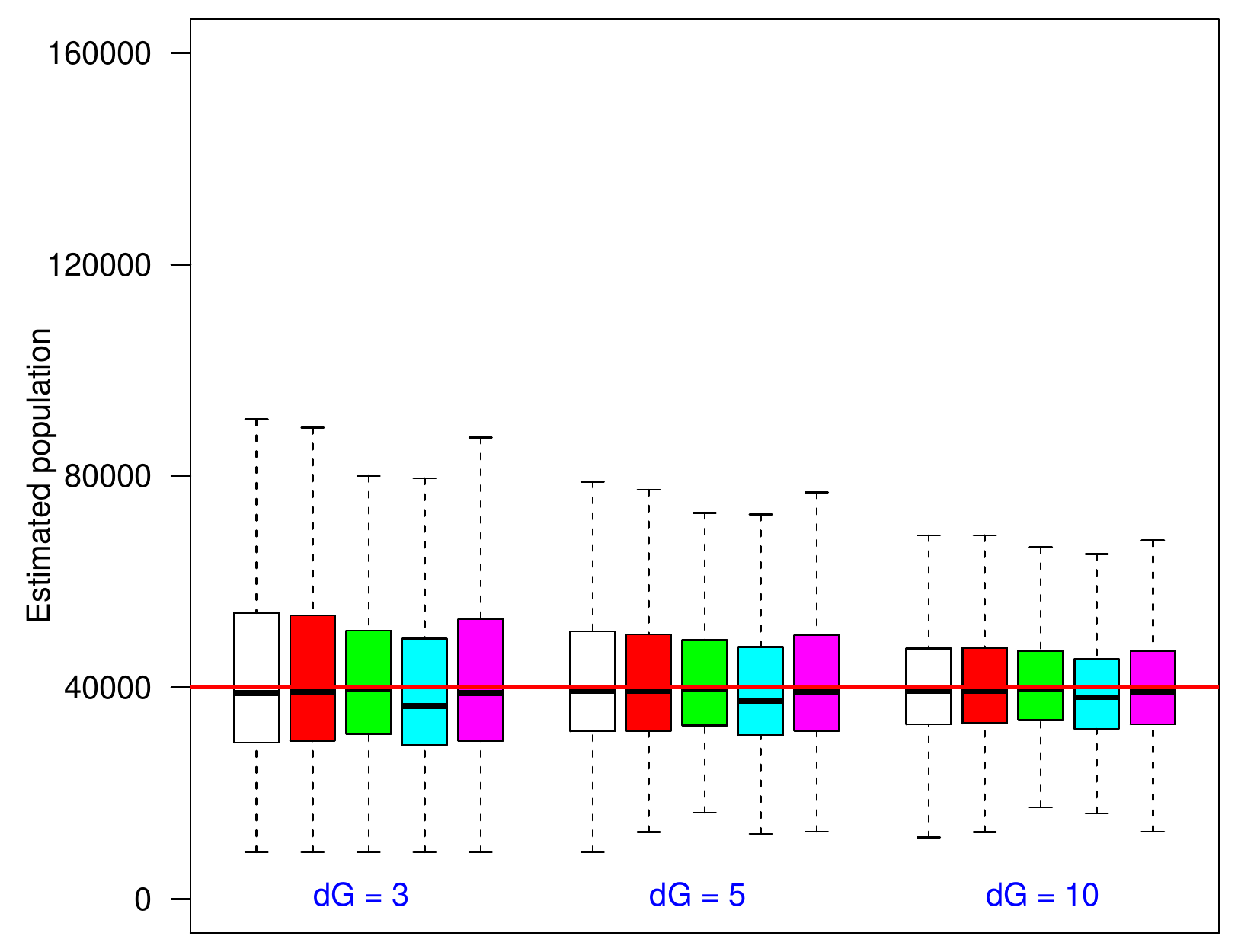}} &
\subcaptionbox{$n$ = $40\cdot 10^3$, $|\Omega| = 256 \cdot 10^3$\label{3d}}{\includegraphics[width = 0.3\linewidth]{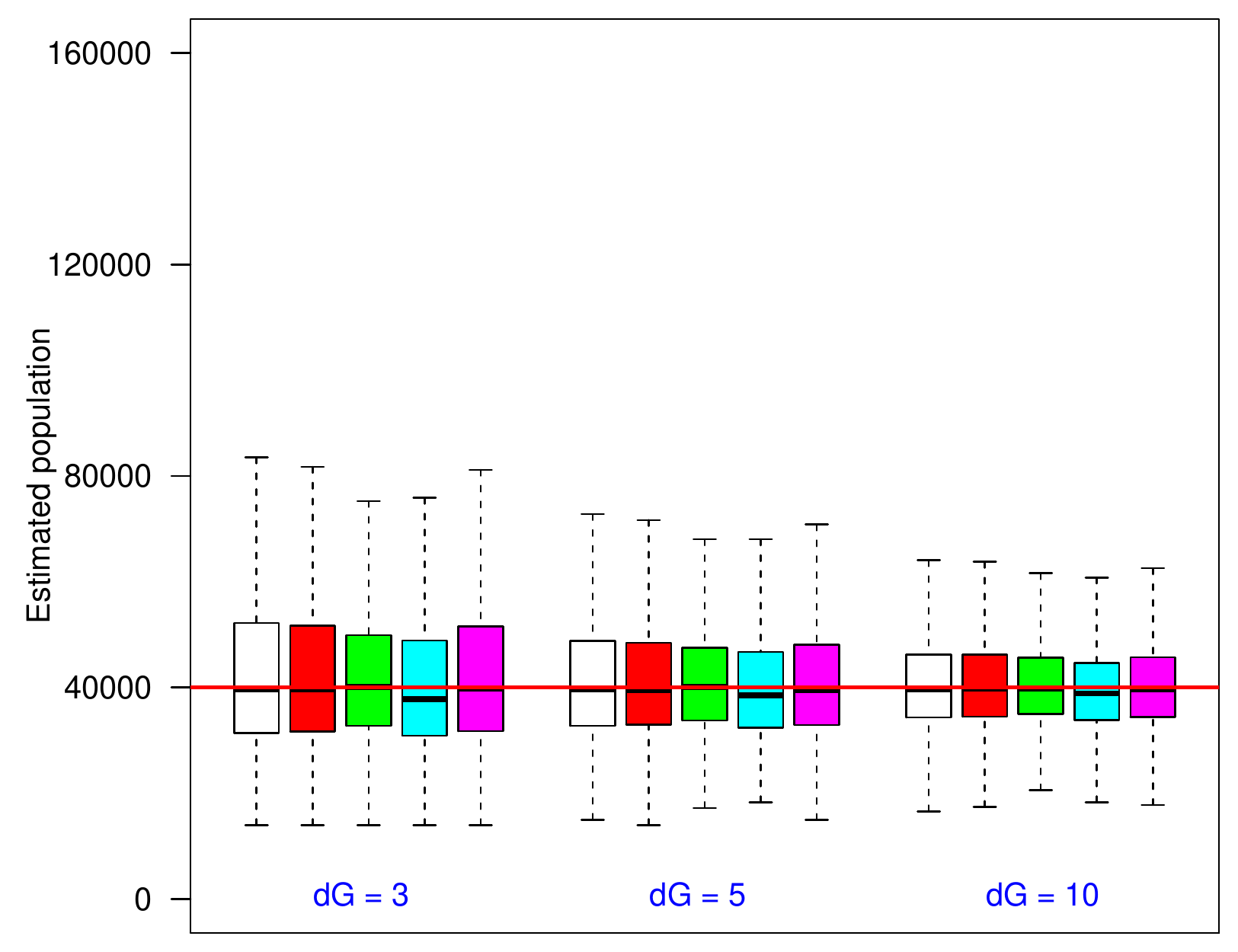}}
\end{tabular}
}
\caption{Estimator $n^{\psi}_2$ on RDS samples of size $r = 500$ with $|\Omega| =2\cdot 10^3$ to $256 \cdot 10^3$. In each box, the thick line indicates the sample median; the top of the box is the median of the upper half of the estimated values (75\% quartile); the bottom of the box indicates the median of the lower half of the estimated values (25\% quartile; and the whiskers indicate the full range of estimated values. No (finite) outliers were removed.}
\label{results:n2_psi}
\end{figure}

Figure \ref{results:n2_psi} shows that as hash space size increases, the median of $n^{\psi}_2$ converge to the true population size.  For example, when $n=5\cdot 10^3$ and $|\Omega| =2\cdot 10^3$, Lognormal degree distribution graphs with $\lambda=3$ have a median $n^{\psi}_2$ value of 4705 (a 5.9\% offset from the true value of $n=5\cdot 10^3$).  In comparison, when $|\Omega| =256 \cdot 10^3$, the median value for this family of graphs is 4901 (just 2.0\% offset from the true value).  As the hash space size increases from $|\Omega| =2\cdot 10^3$ to $|\Omega| =256 \cdot 10^3$, the error in the median estimate decreases by 3.9\%.  The magnitude of this phenomenon increases as networks grow larger.  For example for a network of size $n=40\cdot 10^3$, increasing the hash space size from $|\Omega| =2\cdot 10^3$ to $|\Omega| =256 \cdot 10^3$ causes the error in the median $n^{\psi}_2$ estimate to undergo a 33.9\% change.

In addition, Figure \ref{results:n2_psi} shows that as hash space size increases, the interquartile ranges of the estimates decrease.  For example, when $n=5\cdot 10^3$ and $|\Omega| =2\cdot 10^3$, Poisson degree distribution graphs with $\lambda=3$ experience a interquartile range of 1522 in their $n^{\psi}_2$ estimates (32.0\% of the median).  In comparison, when $|\Omega| =256 \cdot 10^3$, the interquartile range for this family of graphs decreases to 793 (a 47.9\% reduction).  The magnitude of this effect increases as networks grow larger.  For example for a network of size $n=40\cdot 10^3$, increasing the hash space size from $|\Omega| =2\cdot 10^3$ to $|\Omega| =256 \cdot 10^3$ causes the interquartile range of the $n^{\psi}_2$ estimate to undergo a 42.1\% decrease.

\subsection{Evaluating $n^{\psi}_3$ on Synthetic Networks}
\label{sec:eval-n3-psi}

A second set of experiments shows the performance of the $n^{\psi}_3$ performance under identical hashing conditions used to test $n^{\psi}_2$. These experiments also follow the framework described in Section \ref{sec:exp-framework} and use samples derived from an RDS process operating as specified in Assumption \ref{def:rds-assumptions}.  The hash space size was varied from $|\Omega|=2\cdot 10^3$ to $256 \cdot 10^3$.

The $12$ graphs in Figure \ref{results:n3_psi} present the performance of the $n^{\psi}_3$ estimator as the true population size $n$ is varied from $5\cdot 10^3$ to $40\cdot 10^3$ (vertical axis of the grid), the sample size is fixed to $r = 500$ and the hash space size was varied from $|\Omega|=2\cdot 10^3$ to $256 \cdot 10^3$ (horizontal axis of the grid).  In each of the $12$ graphs, the x-axis varies the average degree $\lambda$ from $3$ to $10$.  For each choice of $\lambda$, the medians and quartile ranges of $n^{\psi}_3$ are given for each of the $5$ graph families.  Each of these is determined by $900$ simulations ($30$ graphs times $30$ uniformly drawn samples in each graph).

\begin{figure}[p]
\setlength{\belowcaptionskip}{12pt}
\centering
\begin{tabular}{ScScScSc}
\subcaptionbox{$n$ = $5\cdot 10^3$, $|\Omega| = 2\cdot 10^3$\label{1a}}{\includegraphics[width = 0.3\linewidth]{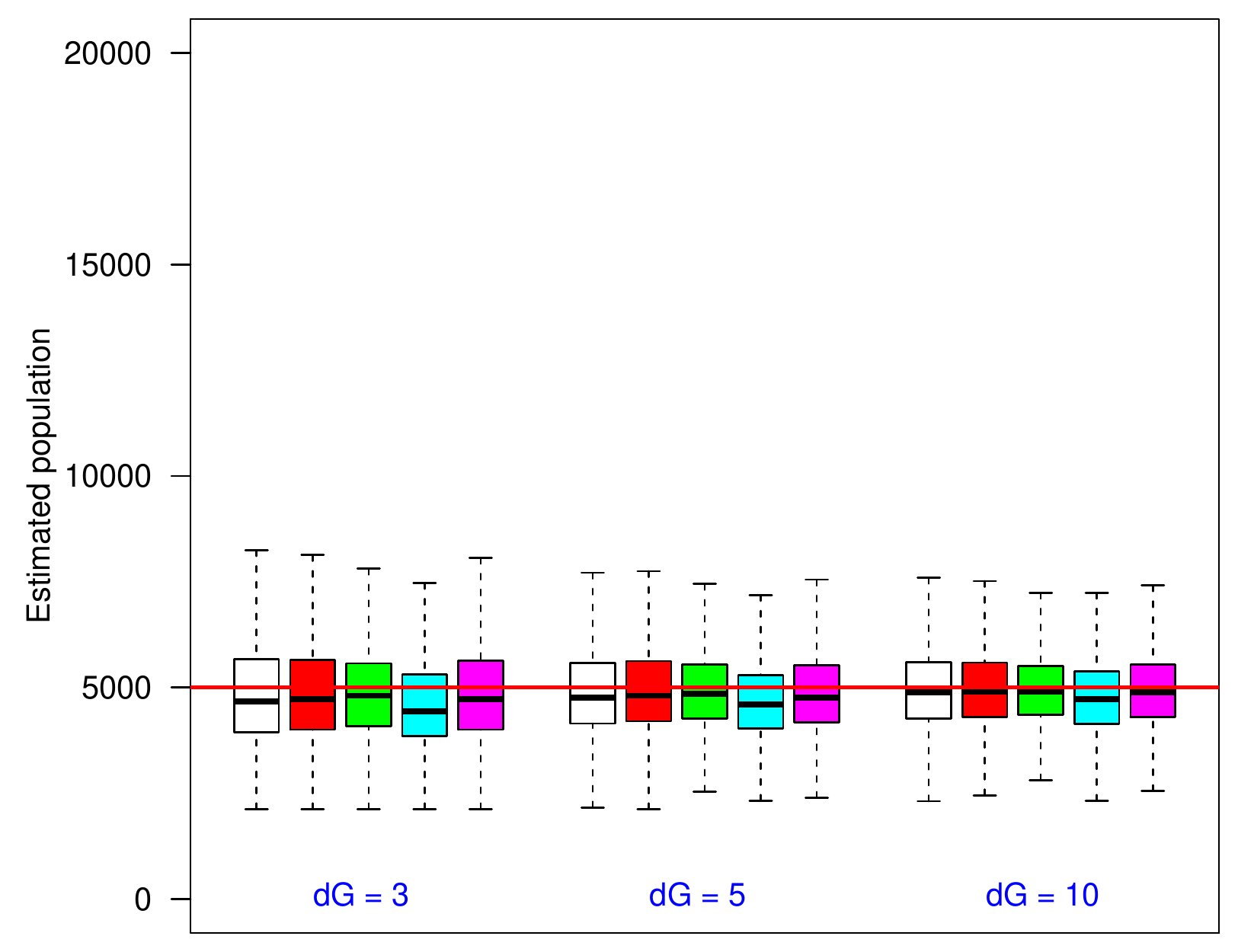}} &
\subcaptionbox{$n$ = $5\cdot 10^3$, $|\Omega| = 32\cdot 10^3$\label{2a}}{\includegraphics[width = 0.3\linewidth]{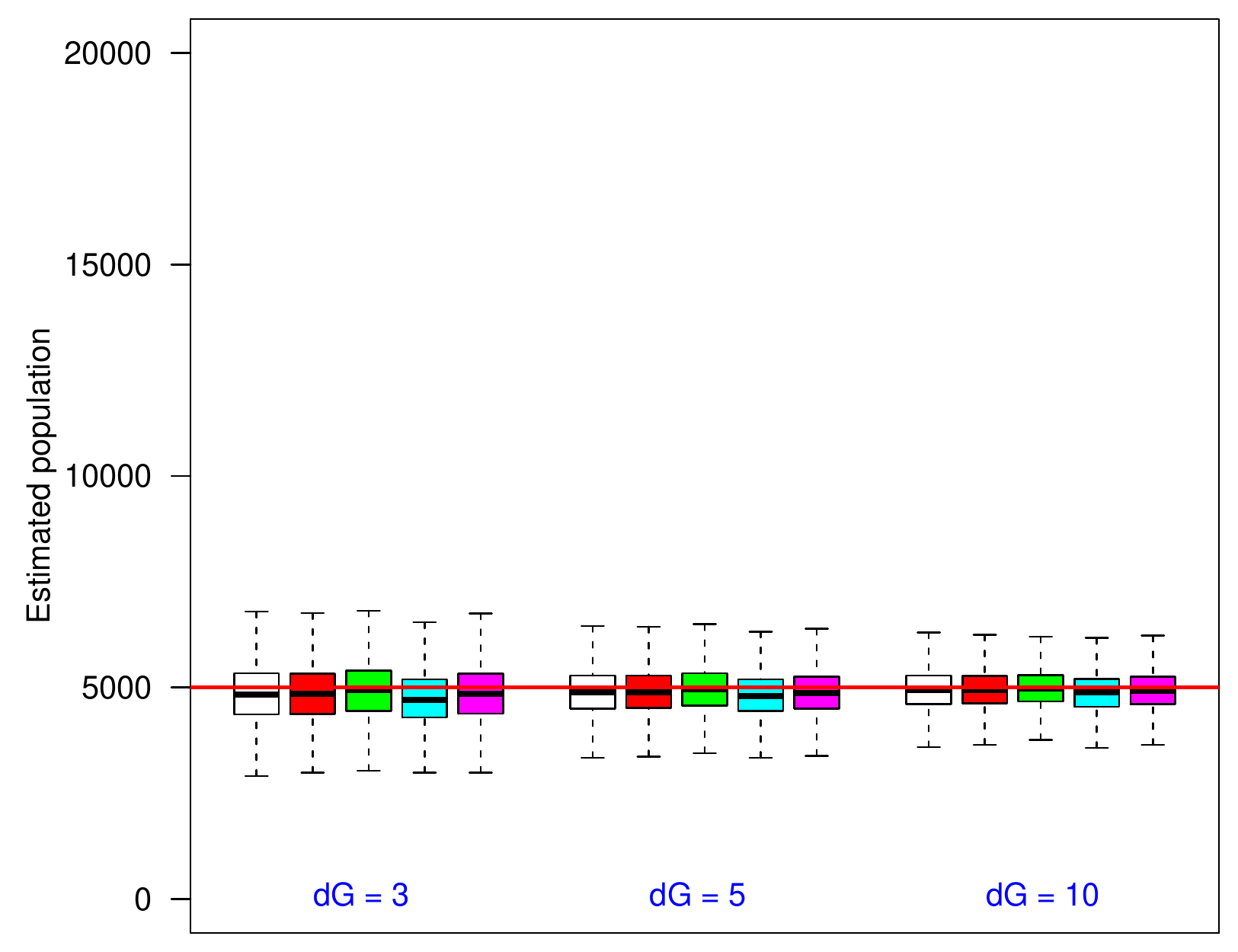}} &
\subcaptionbox{$n$ = $5\cdot 10^3$, $|\Omega| = 256 \cdot 10^3$\label{3a}}{\includegraphics[width = 0.3\linewidth]{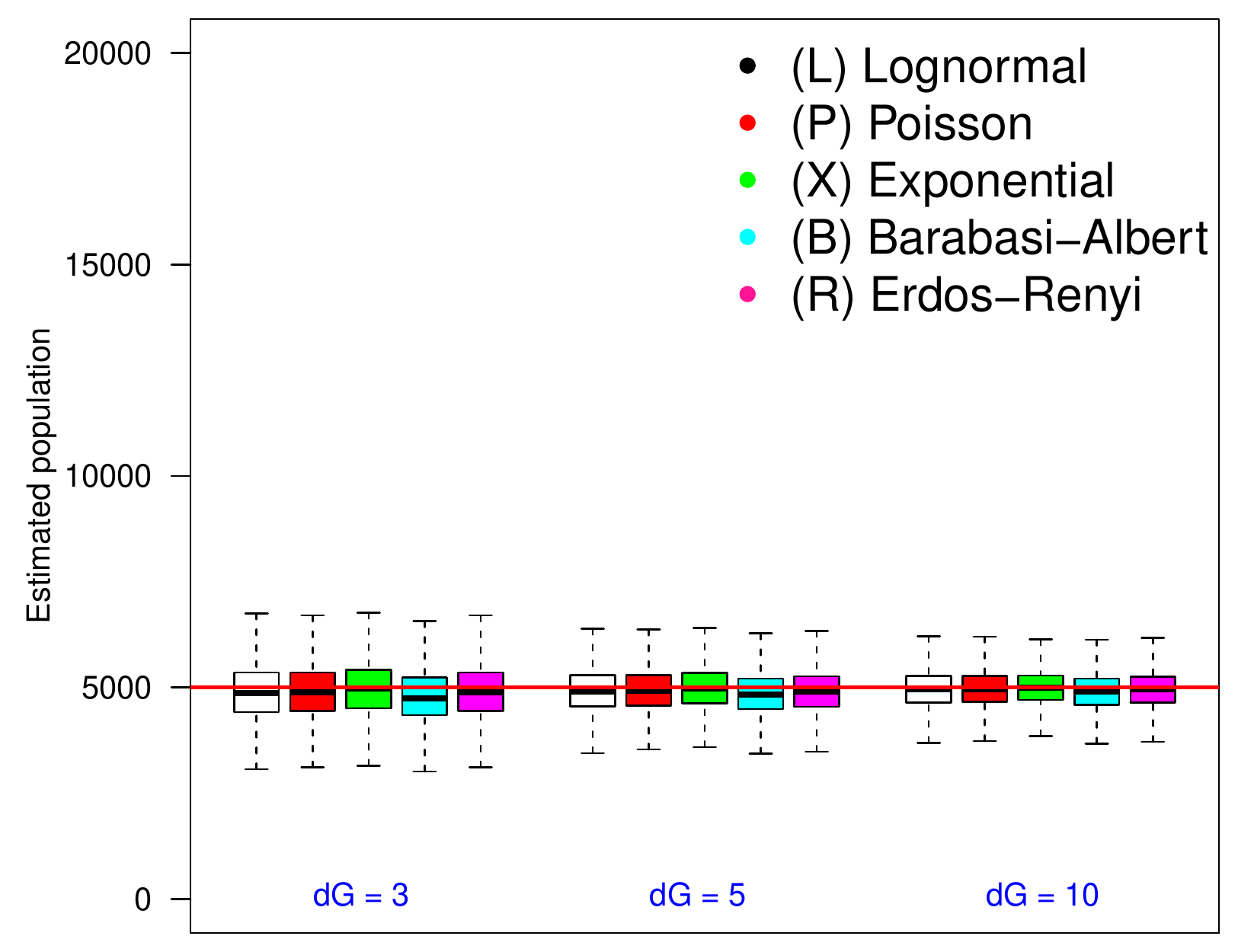}}\\[0pt]
\subcaptionbox{$n$ = $10\cdot 10^3$, $|\Omega| = 2\cdot 10^3$\label{1b}}{\includegraphics[width = 0.3\linewidth]{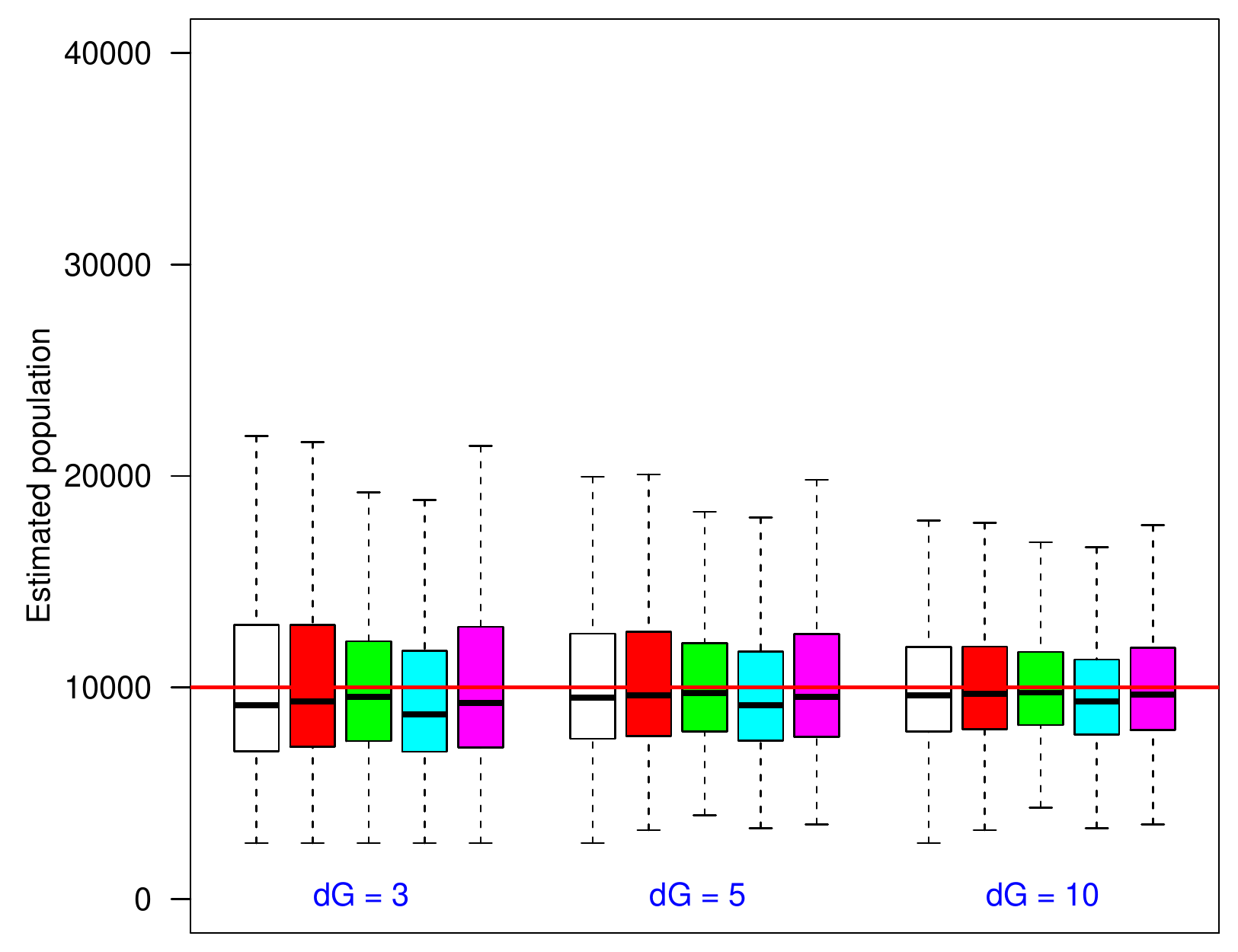}} &
\subcaptionbox{$n$ = $10\cdot 10^3$, $|\Omega| = 32\cdot 10^3$\label{2b}}{\includegraphics[width = 0.3\linewidth]{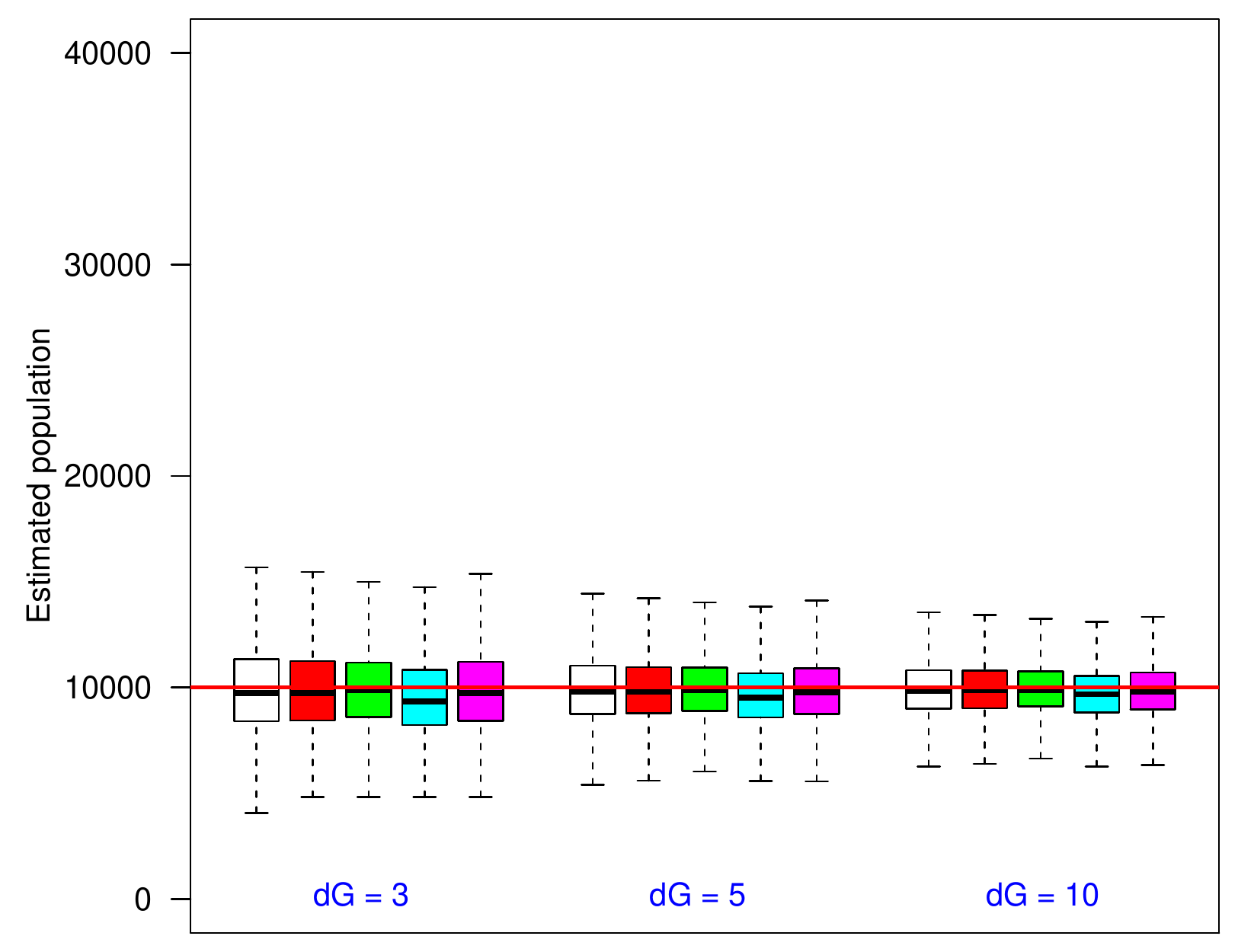}} &
\subcaptionbox{$n$ = $10\cdot 10^3$, $|\Omega| = 256 \cdot 10^3$\label{3b}}{\includegraphics[width = 0.3\linewidth]{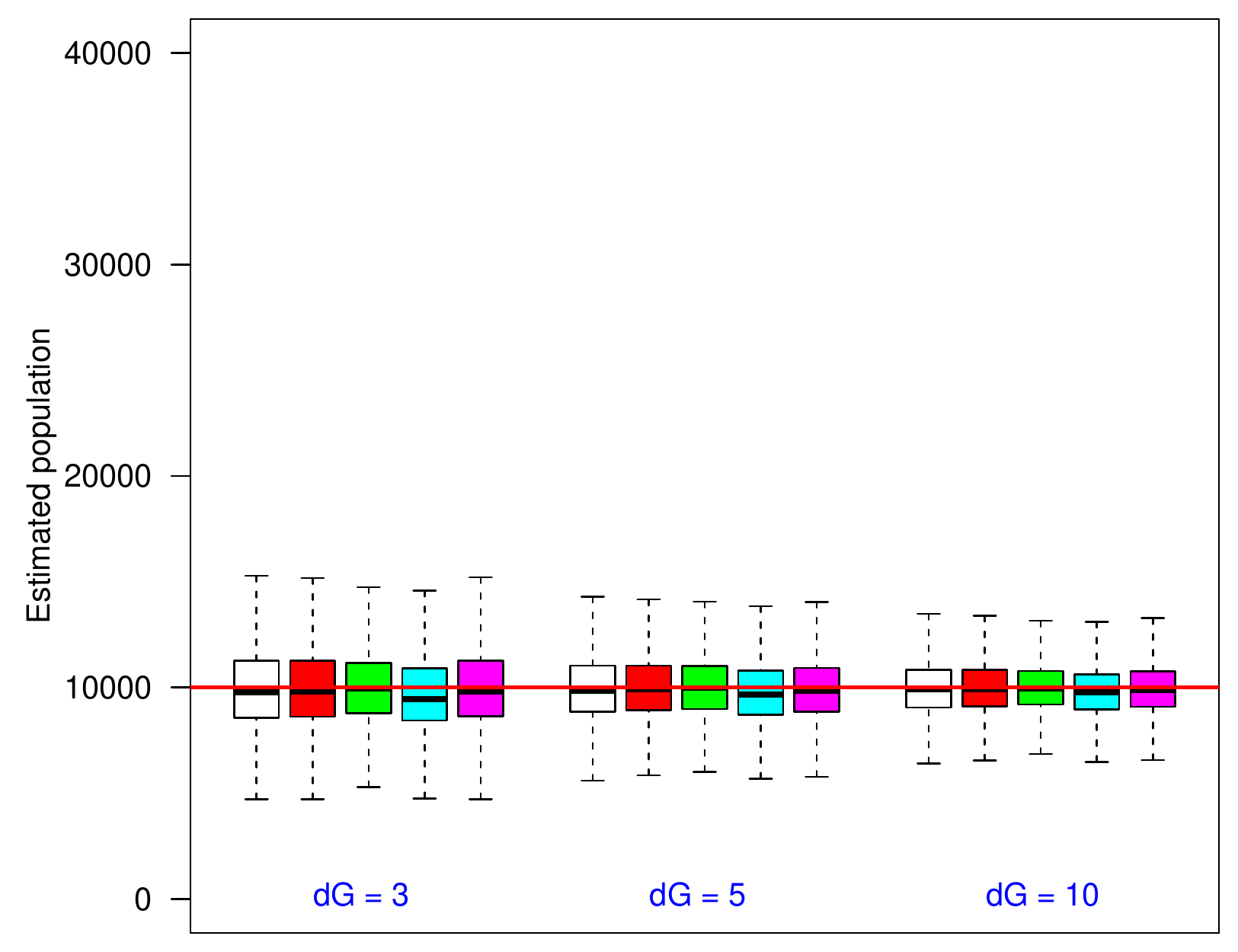}}\\
\subcaptionbox{$n$ = $20\cdot 10^3$, $|\Omega| = 2\cdot 10^3$\label{1c}}{\includegraphics[width = 0.3\linewidth]{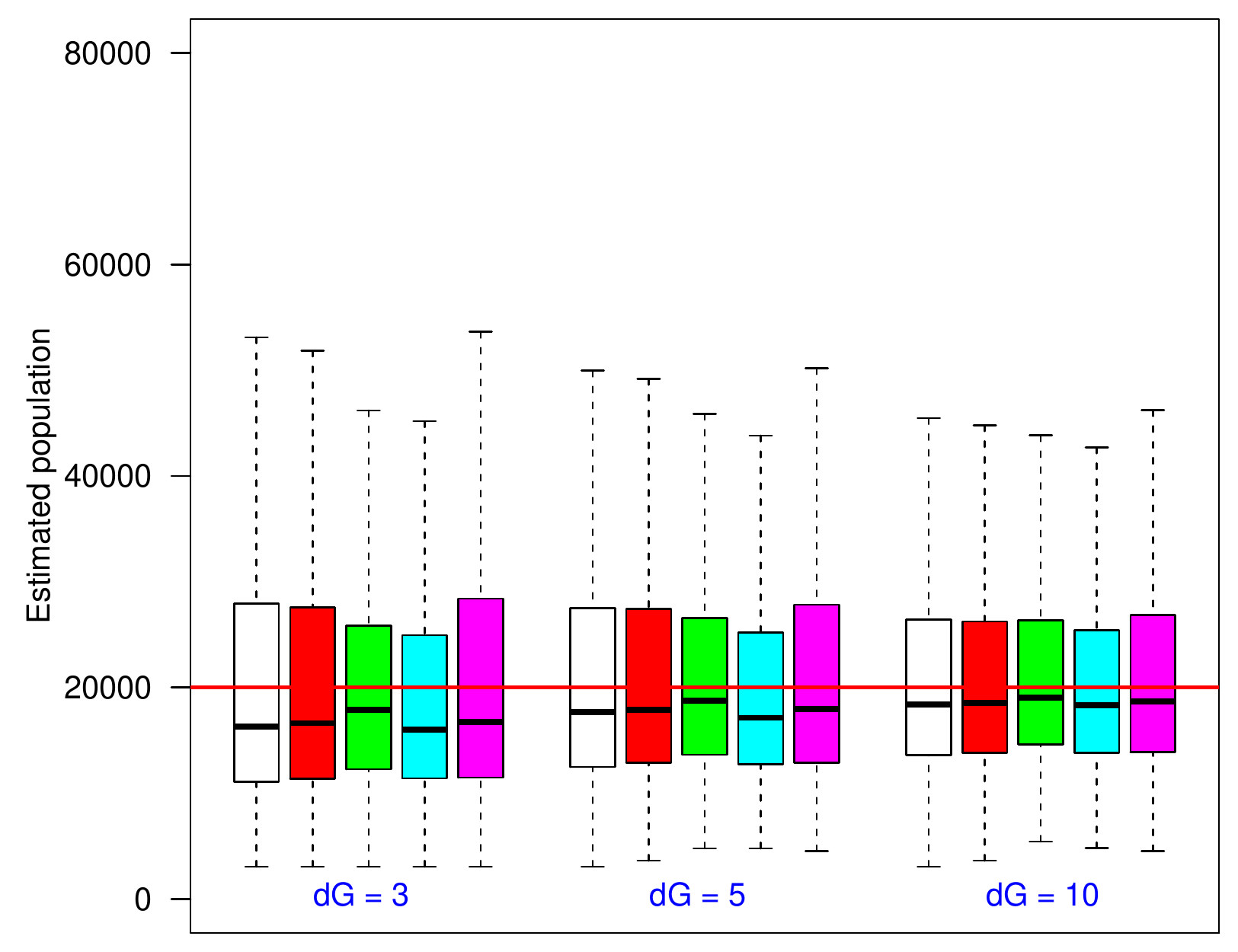}} &
\subcaptionbox{$n$ = $20\cdot 10^3$, $|\Omega| = 32\cdot 10^3$\label{2c}}{\includegraphics[width = 0.3\linewidth]{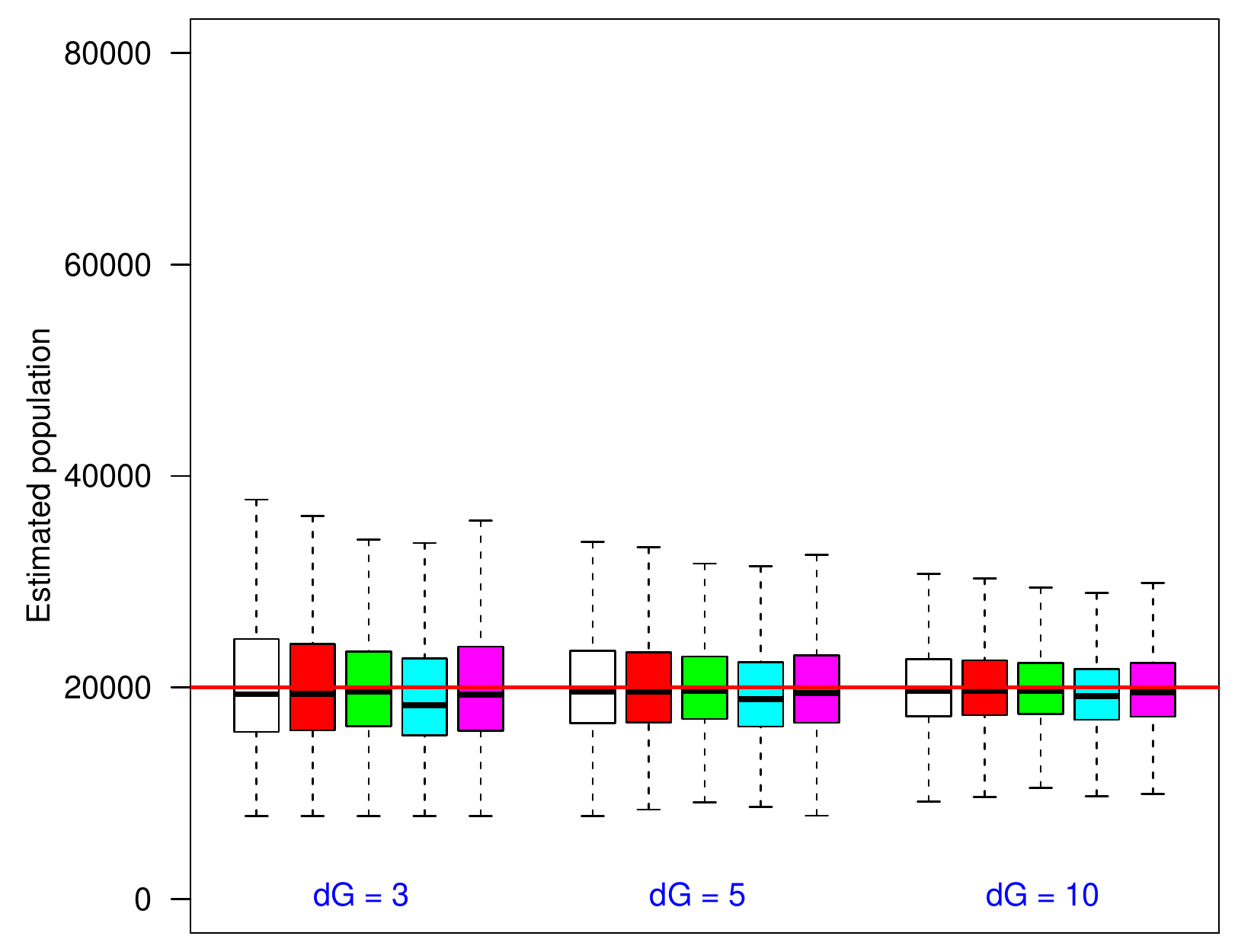}} &
\subcaptionbox{$n$ = $20\cdot 10^3$, $|\Omega| = 256 \cdot 10^3$\label{3c}}{\includegraphics[width = 0.3\linewidth]{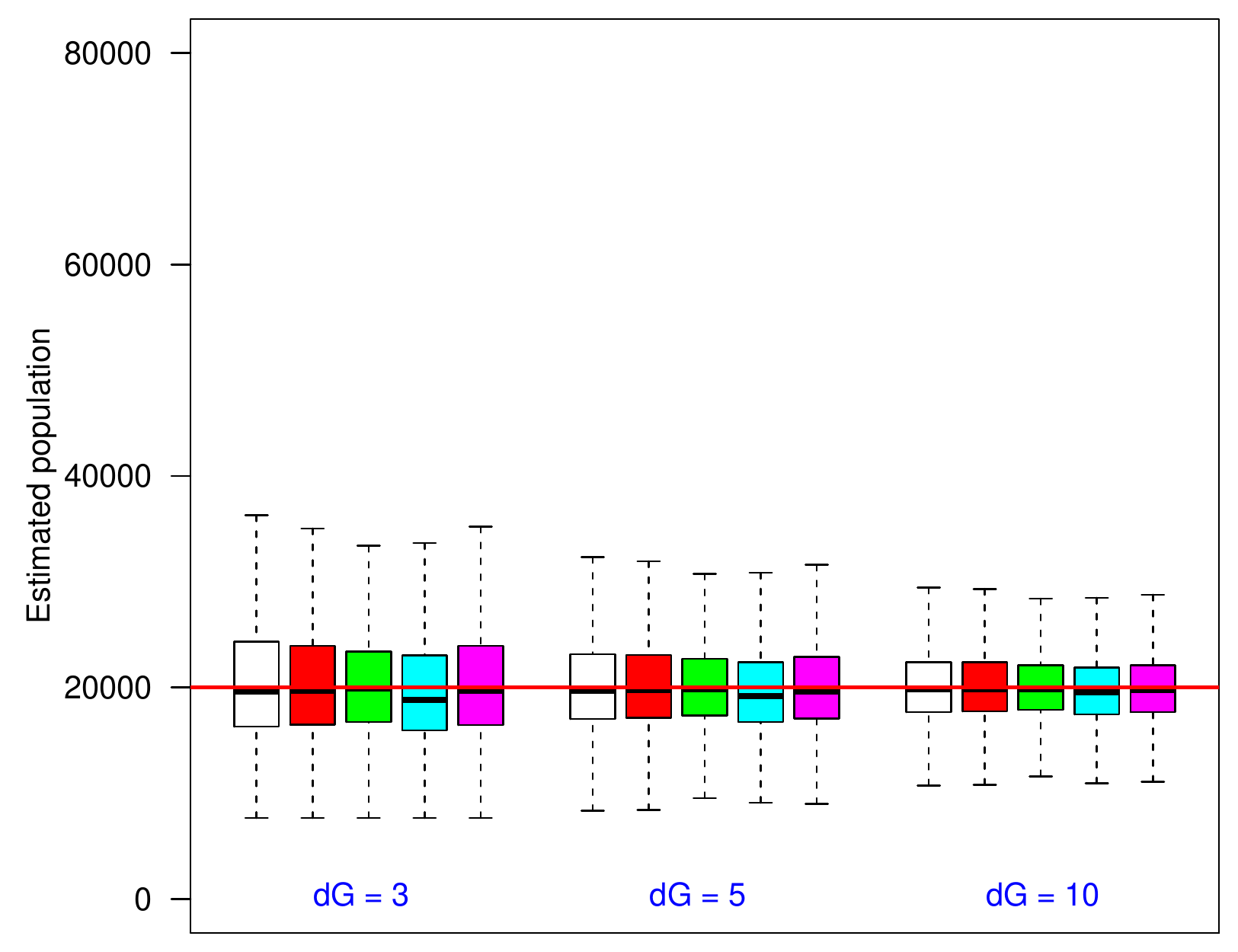}}\\
\subcaptionbox{$n$ = $40\cdot 10^3$, $|\Omega| = 2\cdot 10^3$\label{1d}}{\includegraphics[width = 0.3\linewidth]{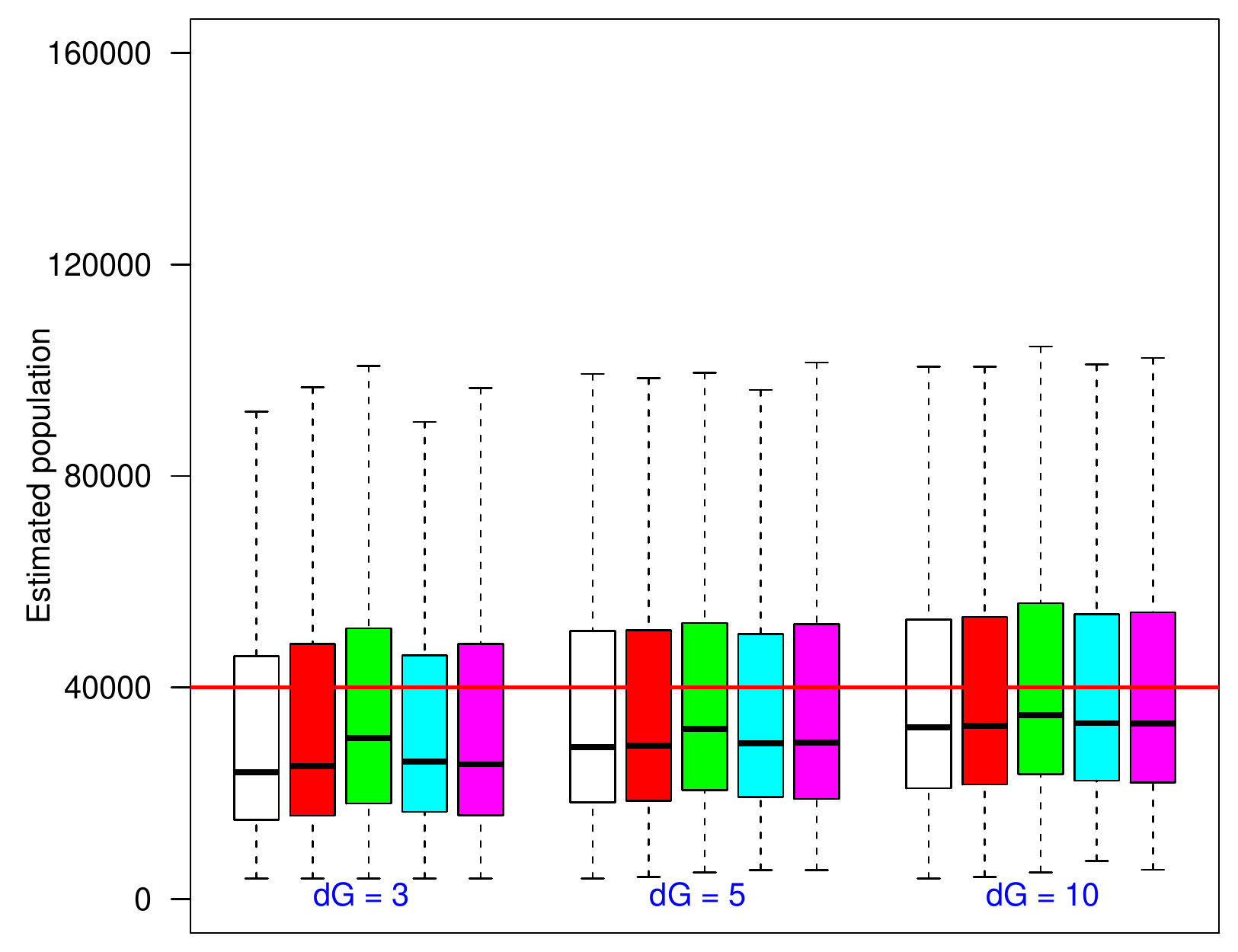}} &
\subcaptionbox{$n$ = $40\cdot 10^3$, $|\Omega| = 32\cdot 10^3$\label{2d}}{\includegraphics[width = 0.3\linewidth]{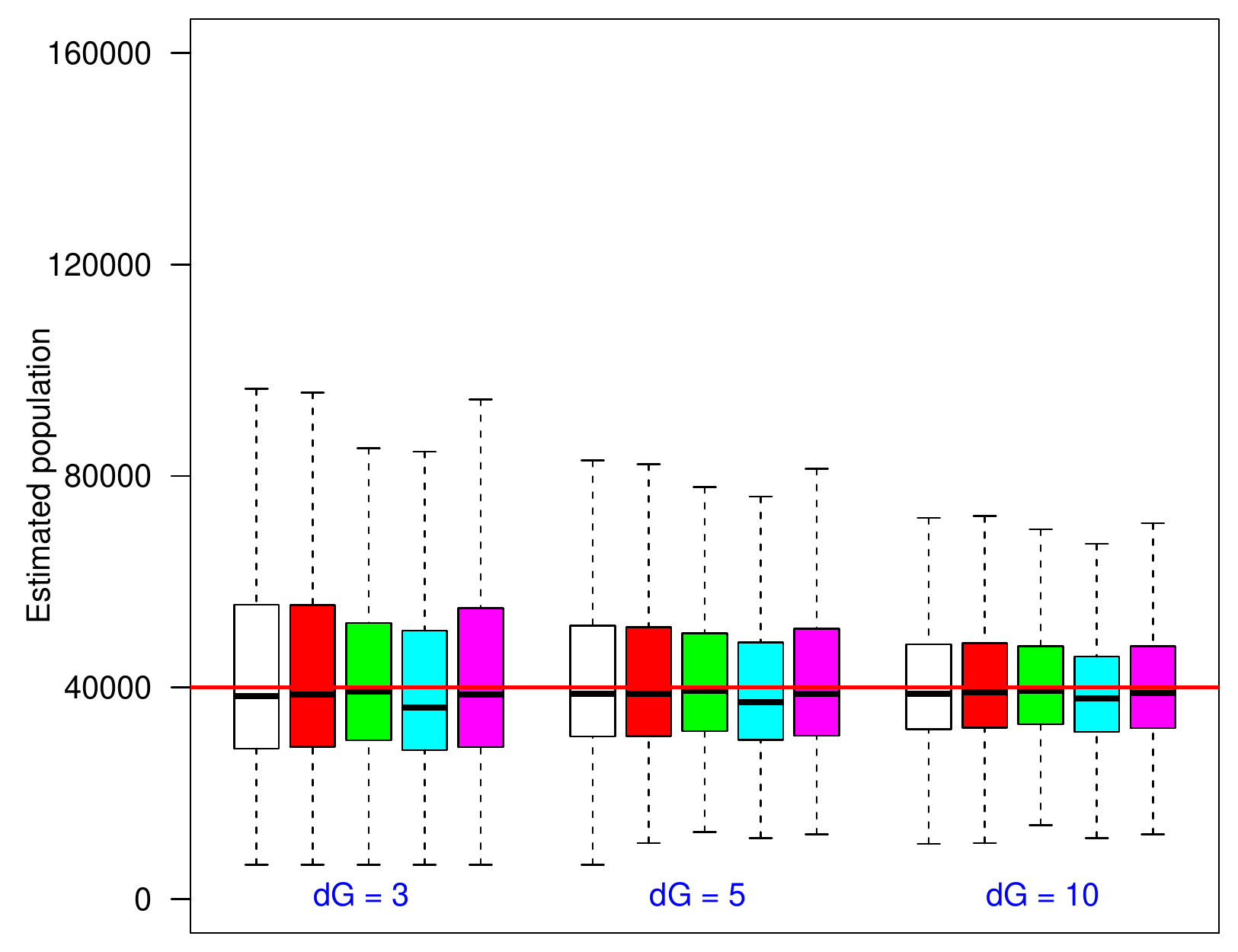}} &
\subcaptionbox{$n$ = $40\cdot 10^3$, $|\Omega| = 256 \cdot 10^3$\label{3d}}{\includegraphics[width = 0.3\linewidth]{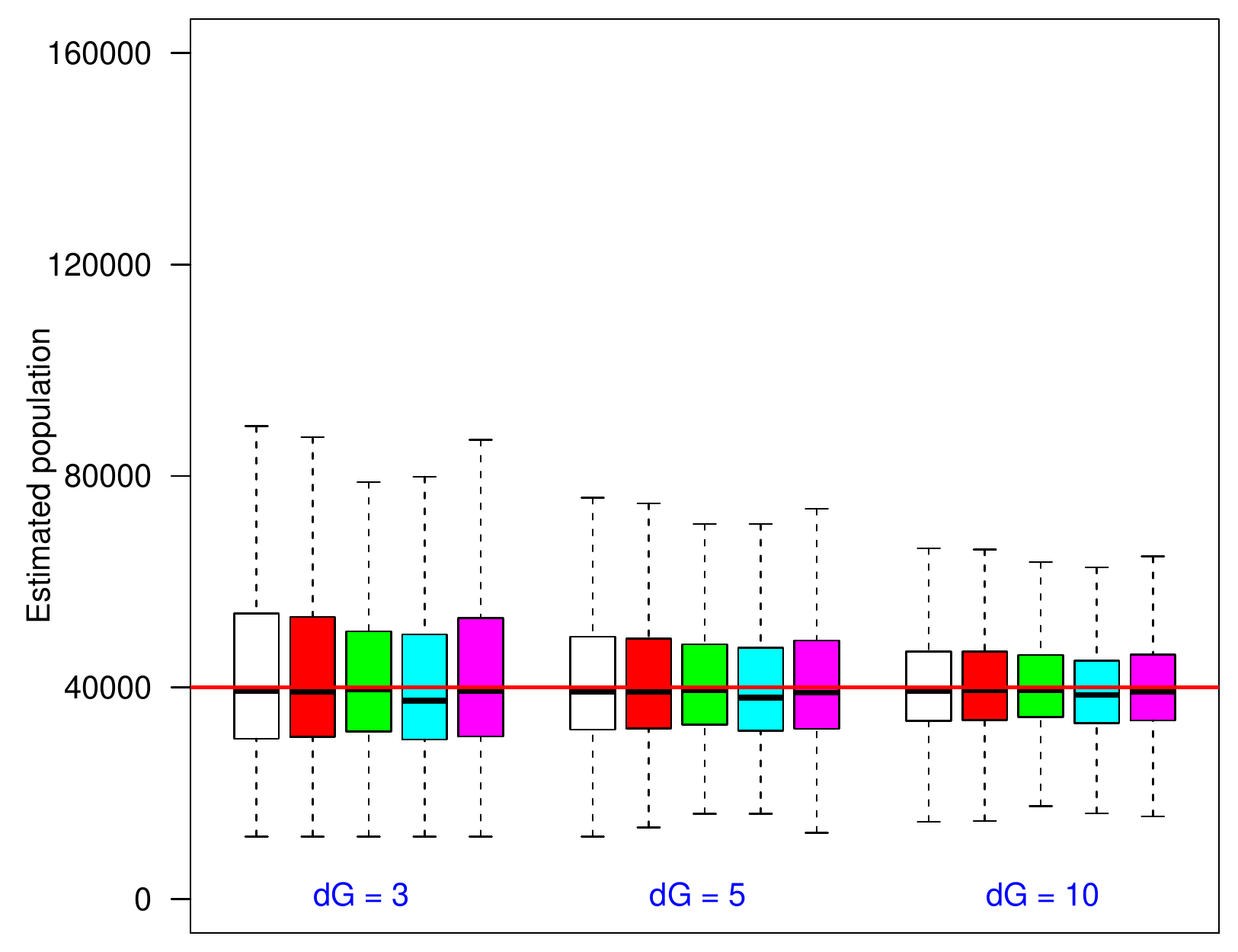}}
\end{tabular}
\caption{Estimator $n^{\psi}_3$ on RDS samples of size $r = 500$ with $|\Omega| =2\cdot 10^3$ to $256 \cdot 10^3$. In each box, the thick line indicates the sample median; the top of the box is the median of the upper half of the estimated values (75\% quartile); the bottom of the box indicates the median of the lower half of the estimated values (25\% quartile; and the whiskers indicate the full range of estimated values. No (finite) outliers were removed.}
\label{results:n3_psi}
\end{figure}

Figure \ref{results:n3_psi} shows that as hash space size increases, the medians of $n^{\psi}_3$ converge to the true population size.  For example, when $n=5\cdot 10^3$ and $|\Omega|=2\cdot 10^3$, Lognormal degree distribution graphs with $\lambda=3$ have a median $n^{\psi}_3$ value of 4667 (a 6.7\% offset from the true value of $n=5\cdot 10^3$).  In comparison, when $|\Omega|=256 \cdot 10^3$, the median for this family of graphs is 4865 (just 2.7\% offset from the true value).  As the hash space size increases from $|\Omega|=2\cdot 10^3$ to $|\Omega|=256 \cdot 10^3$, the error in the median estimate decreases by 4.0\%.  The magnitude of this phenomenon increases as networks grow larger.  For example for a network of size $n=40\cdot 10^3$, increasing the hash space size from $|\Omega|=2\cdot 10^3$ to $|\Omega|=256 \cdot 10^3$ causes the error in the median $n^{\psi}_3$ estimate to undergo a 38.4\% change.

In addition, Figure \ref{results:n3_psi} shows that as hash space size increases, the interquartile ranges of the estimates decrease.  For example, when $n=5\cdot 10^3$ and $|\Omega|=2\cdot 10^3$, Exponential degree distribution graphs with $\lambda=3$ experience a interquartile range of 1491 in their $n^{\psi}_3$ estimates (31.0\% of the median).  In comparison, when $|\Omega|=256 \cdot 10^3$, the interquartile range for this family of graphs decreases to 905 (a 39.3\% reduction).  The magnitude of this effect increases as networks grow larger.  For example for a network of size $n=40\cdot 10^3$, increasing the hash space size from $|\Omega|=2\cdot 10^3$ to $|\Omega|=256 \cdot 10^3$ causes the interquartile range of the $n^{\psi}_3$ estimate to undergo a 43.0\% decrease.

\section{Evaluating Estimators on Real Networks}
\label{sec:Brightkite}
While a range of degree distributions and randomly occurring clusterings can be expected in idealized topologies, the performance of RDS-based estimators $n^{\psi}_2$ and $n^{\psi}_3$ on organically arising, natural human networks may vary. To test this possibility, we perform a number of random-start, RDS-based estimation experiments on the Brightkite data set. Brightkite was once a location-based social networking service provider where users shared their locations by checking-in. The friendship network was collected using their public API, and consists of $|V|=$58,228 nodes and $|E|=$214,078 edges \cite{Cho:2011:FMU:2020408.2020579}. Though originally a directed graph, edges were symmetrized for the purposes of these experiments. Since not all users made a public check-in during the data collection period, the population we used here is 51,406 people. The average clustering coefficient in the network was 0.1723, while the fraction of closed triangles was 0.03979.  The diameter (longest shortest path in the symmetrized network) is 16, though the 90-percentile effective diameter is 6.

For purposes of the experiment we generated 900 respondent-driven samples of size $r=250,500,750$ and hash space size from $|\Omega|=2\cdot 10^3$ to $|\Omega|=256 \cdot 10^3$ within the Brightkite network, each obtained via an RDS process operating as specified in Assumption \ref{def:rds-assumptions}.  The boxplot graphs in Figure \ref{rds-Brightkite-alg3} (a-c) show that estimator $n^{\psi}_2$---where no accommodation is made for the tendency of RDS to oversample tightly clustered network neighborhoods---underestimates the true population size of 51,406 in every case. Given the high clustering coefficient of the network (17.2\%), it seems likely that, for a given sampling tree, the peer-discovery process would necessarily walk across close pairs of nodes that shared one or more common vertices. Of note is that  increasing the sample size and hash space size does little to correct for these effects.  

Graphs (d-f) in Figure \ref{rds-Brightkite-alg4} present the boxplots of Brightkite population estimate using estimator $n^{\psi}_3$. As above, we generated 900 respondent-driven samples of size $r=250,500,750$ and hash space size from $|\Omega|=2\cdot 10^3$ to $|\Omega|=256 \cdot 10^3$ within the Brightkite network. We see that the three different hash space sizes show similar results, while increasing the sample size $r$ from 250 to 500 and 750 improves the accuracy of the median estimate. Unlike the case in Figure \ref{rds-Brightkite-alg3} (a-c), we don't see a consistent pattern of underestimation, indicating that the cross-seed estimator $n^{\psi}_3$ was successful in compensating for the clustering found in the network. As above, the overall size of the hash space has minimal effect on the accuracy of the median, but  we note that an increase in the RDS sample size improves the accuracy of the median estimate and produces smaller interquartile ranges.  

\begin{figure}[t]
\setlength{\belowcaptionskip}{12pt}
\centering
\begin{tabular}{ScScScSc}
\subcaptionbox{ $n^{\psi}_2$ with $|\Omega| = 2 \cdot 10^3$\label{1c}}{\includegraphics[width = 0.3\linewidth]{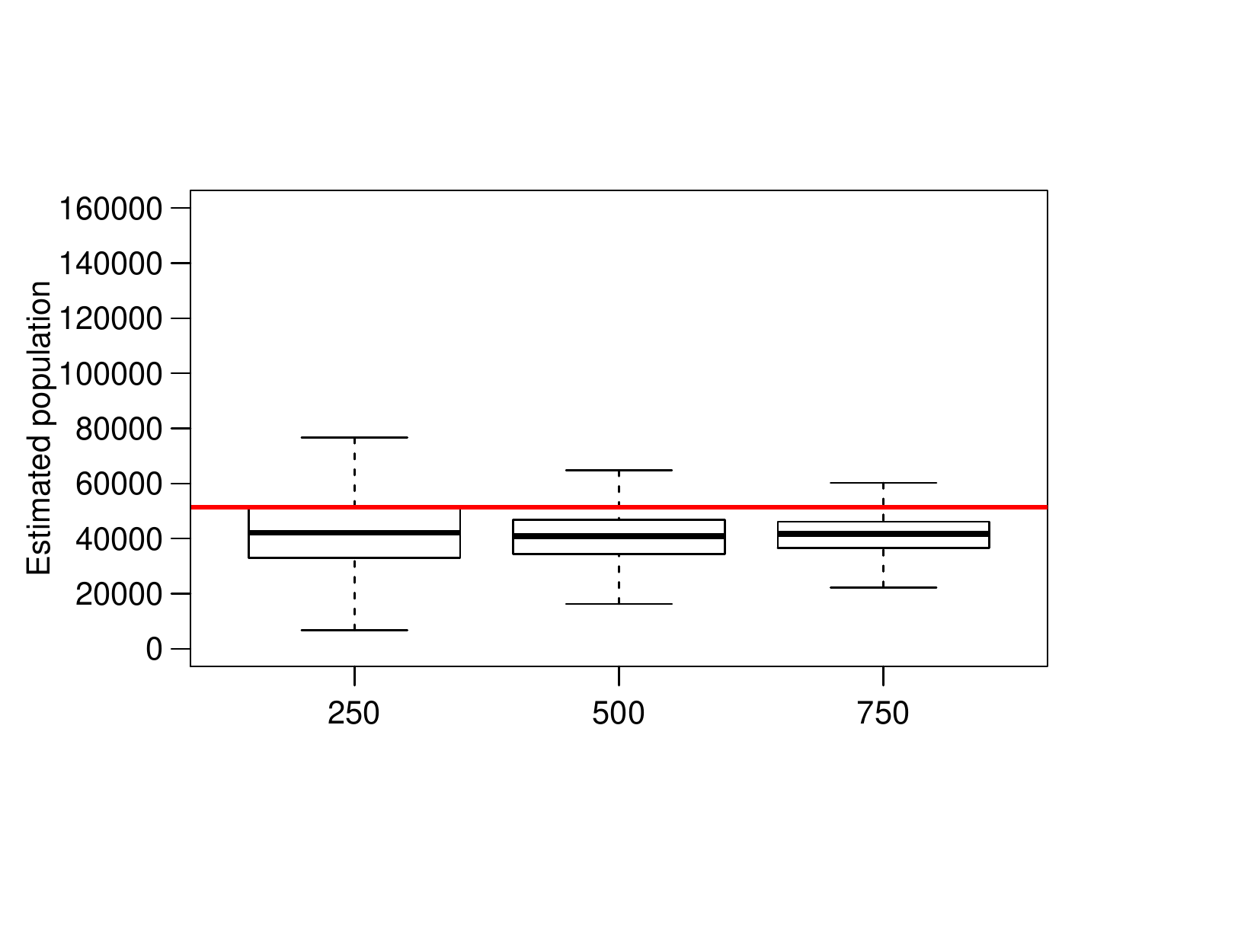}} &
\subcaptionbox{ $n^{\psi}_2$ with $|\Omega| = 32 \cdot 10^3$\label{2c}}{\includegraphics[width = 0.3\linewidth]{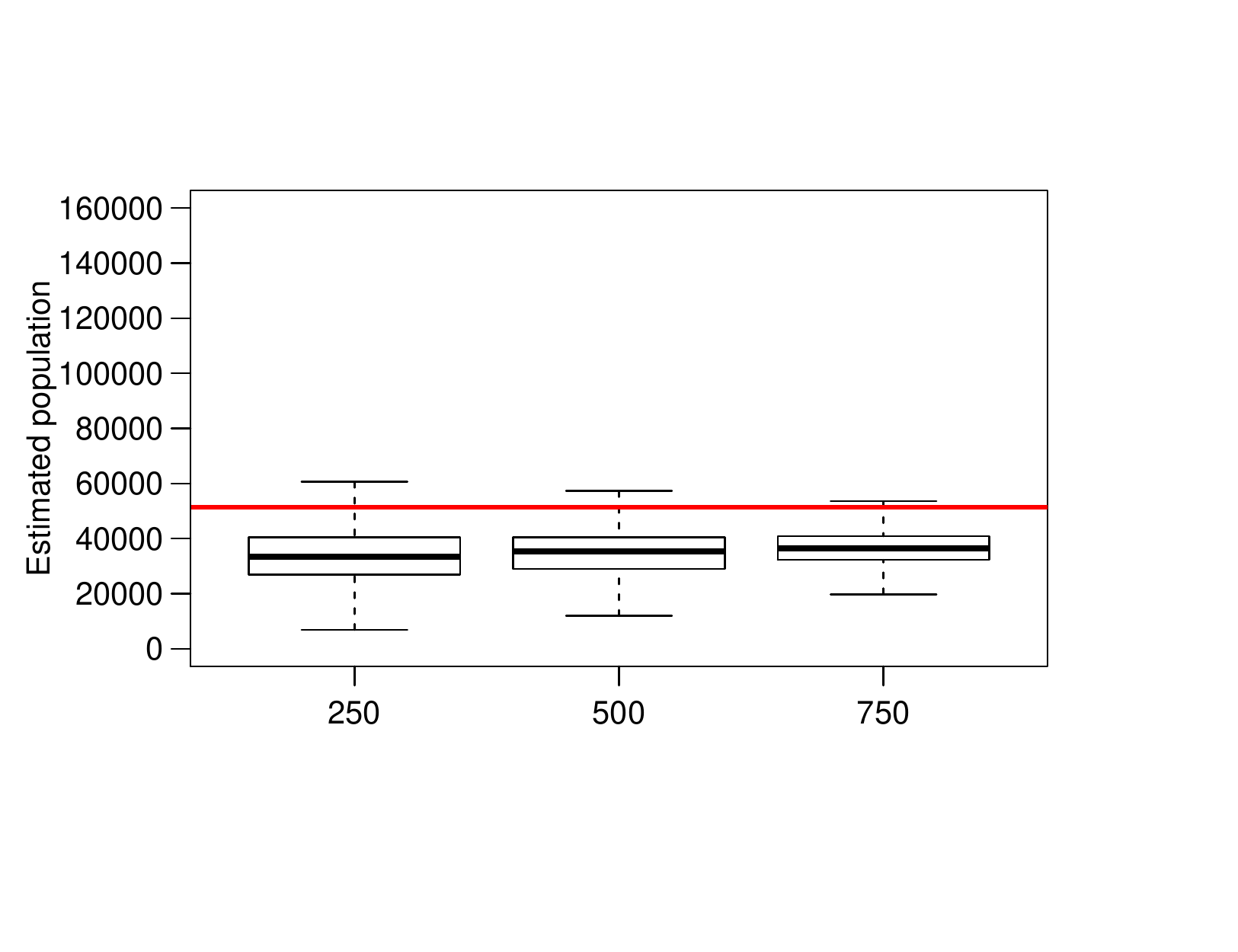}} &
\subcaptionbox{ $n^{\psi}_2$ with $|\Omega| = 256 \cdot 10^3$\label{3c}}{\includegraphics[width = 0.3\linewidth]{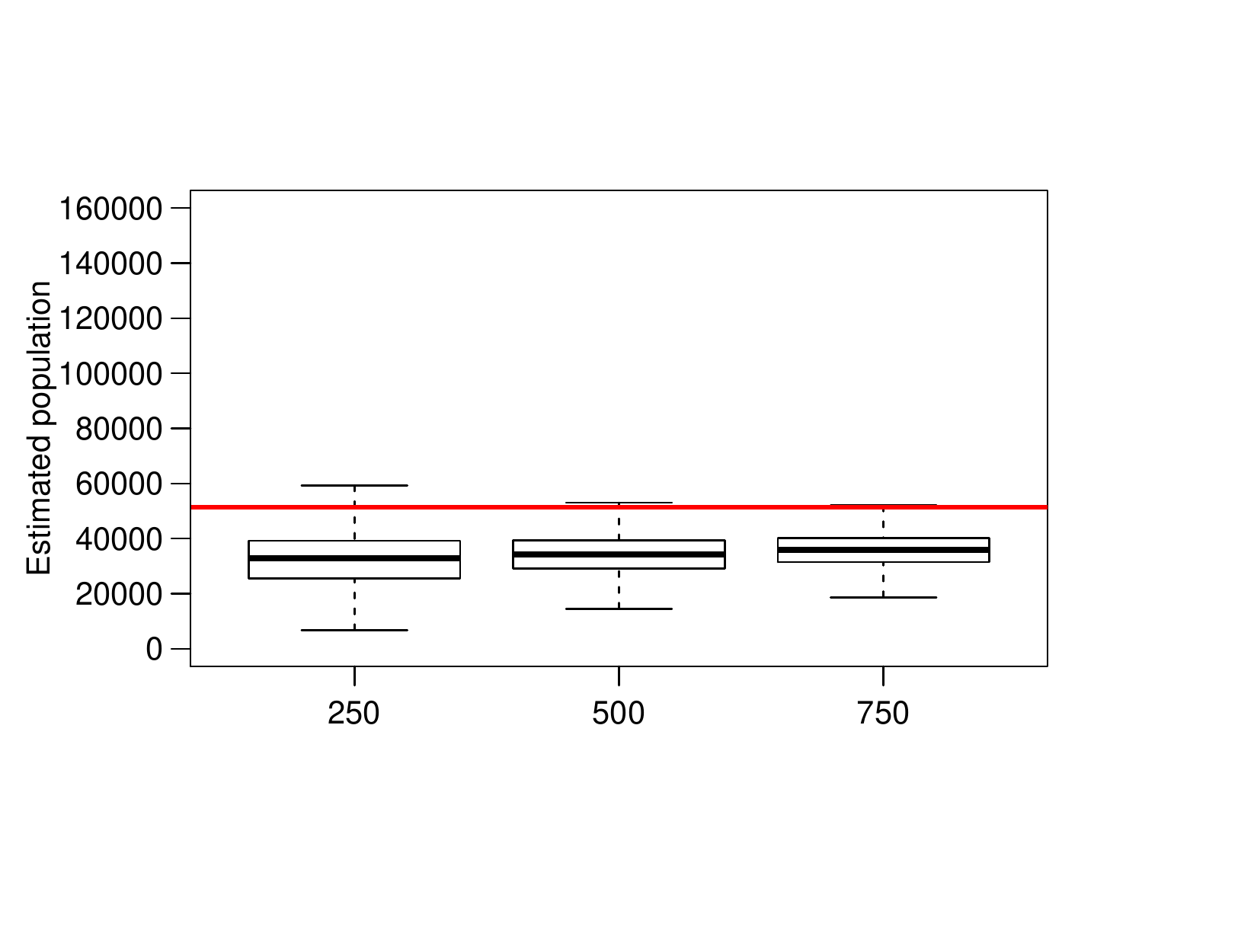}}\\
\subcaptionbox{ $n^{\psi}_3$ with $|\Omega| = 2 \cdot 10^3$\label{1c}}{\includegraphics[width = 0.3\linewidth]{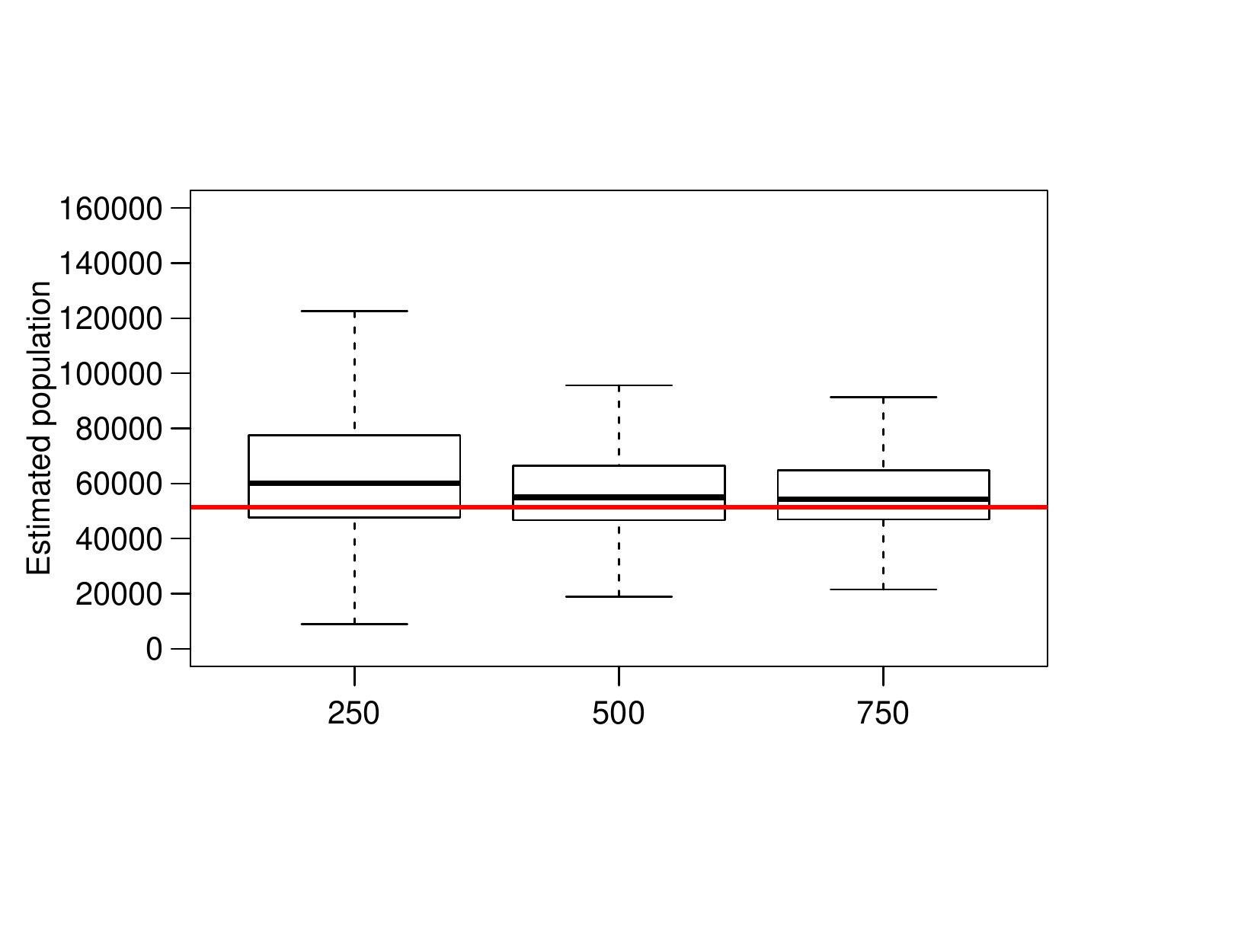}} &
\subcaptionbox{ $n^{\psi}_3$ with $|\Omega| = 32 \cdot 10^3$\label{2c}}{\includegraphics[width = 0.3\linewidth]{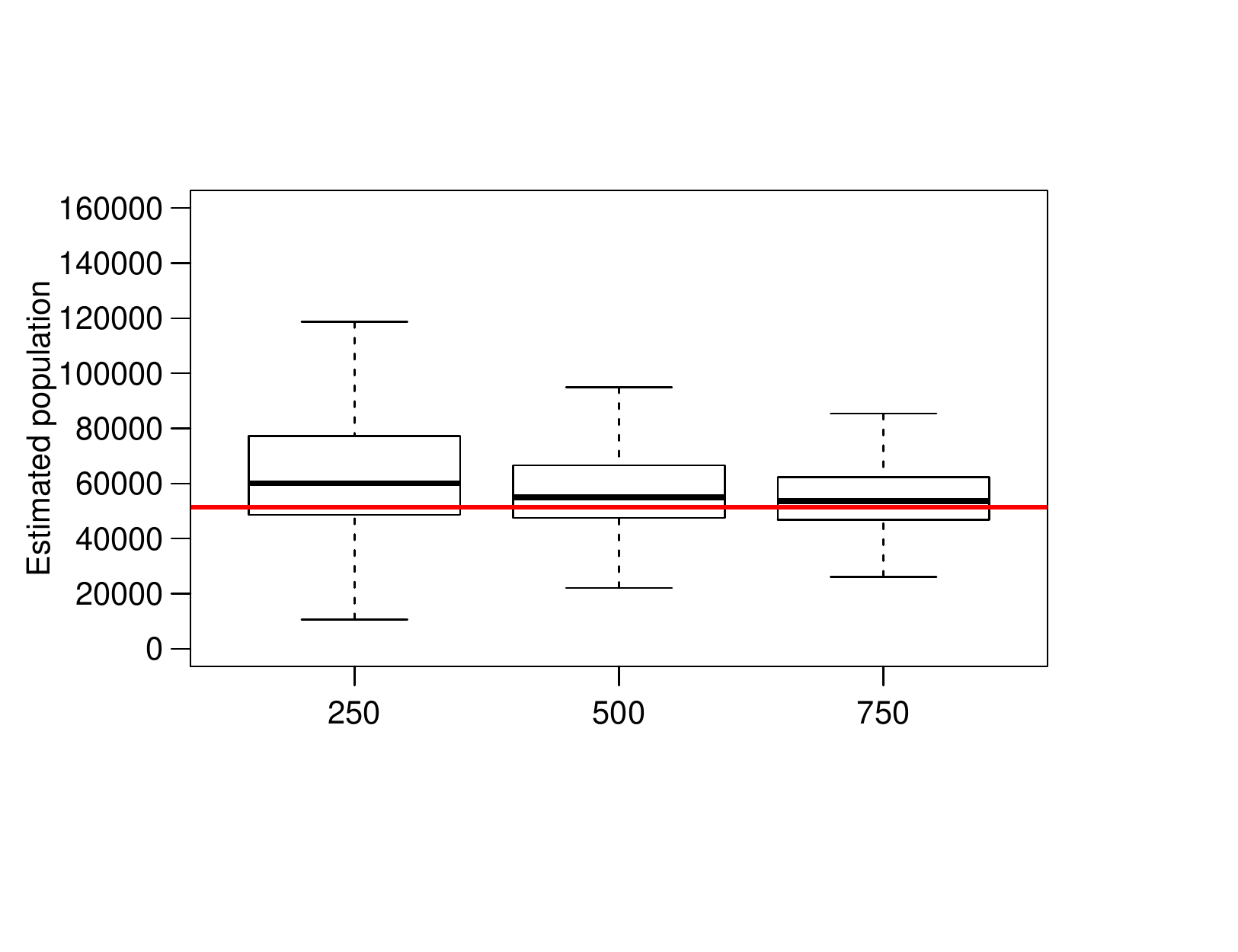}} &
\subcaptionbox{ $n^{\psi}_3$ with $|\Omega| = 256 \cdot 10^3$\label{3c}}{\includegraphics[width = 0.3\linewidth]{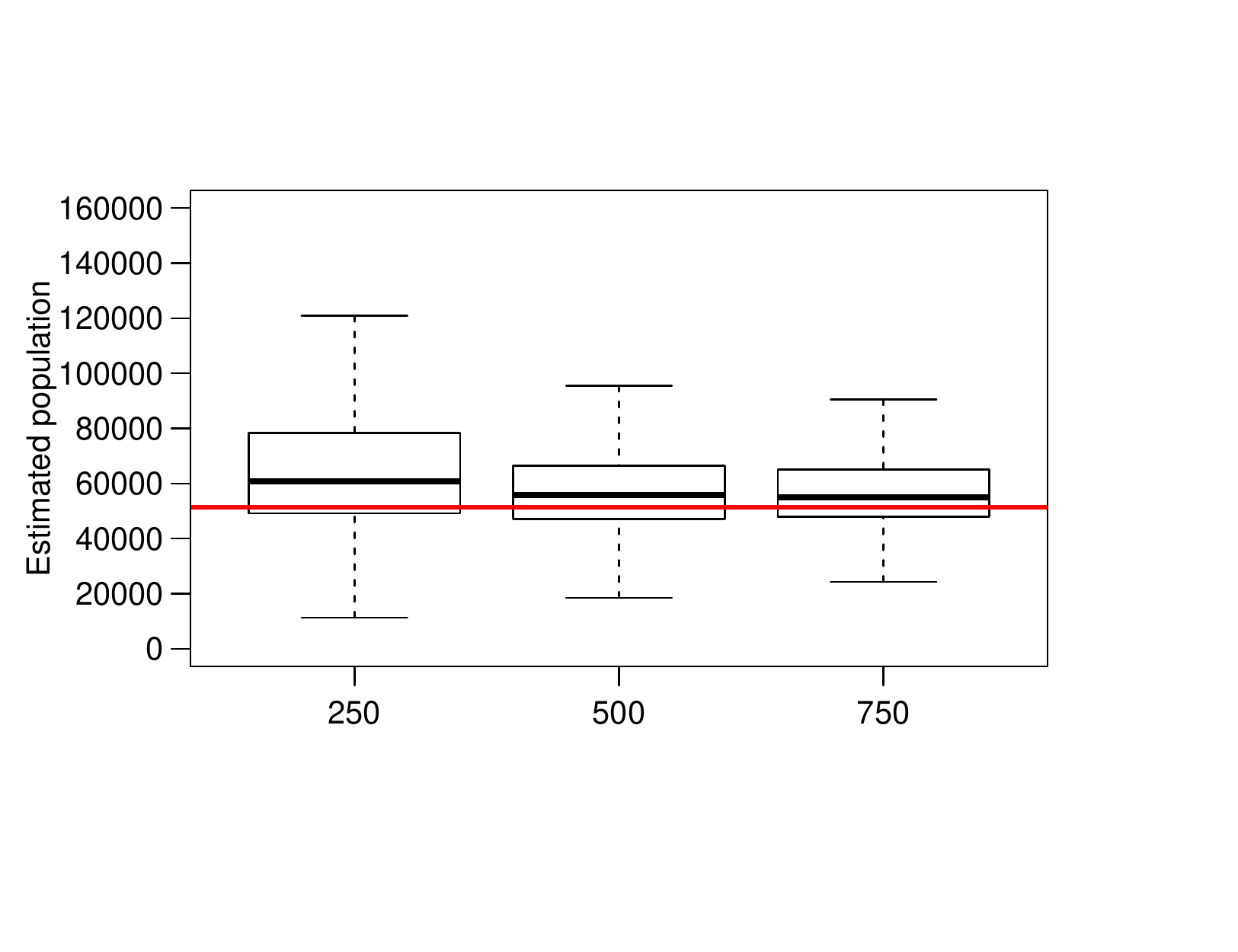}}
\end{tabular}
\caption{Estimator $n^{\psi}_2$ (above) and $n_3^\psi$ (below) on Brightkite network; $|\Omega| =2\cdot 10^3$ to $256 \cdot 10^3$. In each box, the thick line indicates the sample median; the top of the box is the median of the upper half of the estimated values (75\% quartile); the bottom of the box indicates the median of the lower half of the estimated values (25\% quartile; and the whiskers indicate the full range of estimated values. Data points that exceeded the third quartile boundary by over 1.5 times the interquartile range (IQR) were treated as outliers and removed.}
\label{rds-Brightkite-alg3}
\label{rds-Brightkite-alg4}
\end{figure}

\section{Discussion}
\label{sec:discussion}
The results shown here indicate that size estimates for hidden and hard-to-reach populations can be derived from RDS samples across a range of topologies, and in the presence of significant network clustering. As importantly, this can be accomplished under conditions of anonymity by way of identification hashing, e.g. using telefunken codes \cite{TELEFUNKEN2012} or a Privatized Network Sampling (PNS) design \cite{fellowsThesis}. The $n_3^\psi$ estimator joins other successful, RDS-based population estimation procedures, such as those by Handcock and Gile \cite{handcock2014estimating}, and Crawford, Wu, and Heimer \cite{crawford2017hidden}. Like Crawford et al, we make use of half-edge counts. However, our estimator invokes a different strategy---beginning with the original capture-recapture concept---and is shown to be robust across a wide range of topologies. 

A second notable feature of the $n_3$ and $n_3^\psi$ estimators is that a lower level of variance  can be expected at conventional RDS sample sizes. For $r=500$ to $750$, interquartile ranges were low relative to both the median estimate and true population size. Even when hashing was employed towards subject anonymity, sufficiently large hash spaces ($32\cdot 10^3$ or larger),  and samples sizes (500 or above) produced a narrow range of estimates. Given concerns about RDS sample variance generally \cite{verdery_network_2015}, these results indicate robustness against the faults of a single sample.

A third consistent feature observed in these experiments is the performance of the $n_2$ and $n_3$ estimators as graph density increases (see Figures \ref{results:n2} $\&$ \ref{results:n3}). Both estimators show worse performance in sparse (i.e. $dG = 3$) versus more dense networks ($dG = 10$), in terms of the interquartile range of estimates. This was true, with some small variation, across all 5 synthetic topologies. Given the edge-sampling focus of our approach, this is not surprising. Fewer total edges suggest fewer total ``matches'' to discover, leading to greater variability depending on stochastic factors likely associated with the selection of RDS seeds and the random walk features of the RDS sampling process. These results suggest limits on the implementation of the $n_2$ and $n_3$ estimators in sparse graphs.

As researchers increasingly turn to RDS methods for sampling hard-to-reach populations, these results should be of considerable interest to those concerned with what is often referred to as ``the denominator problem''. Where agencies and government administrations seek to understand both the scope of public health challenges, and to measure the effectiveness of their intervention and promotion efforts, the ability to estimate population size (and with this, population prevalence) is widely needed. The results presented here indicate that ``one step'' methods are capable of providing such estimates. Along with the methods mentioned above, this work has the potential to provide public health officials and planners with means to more effectively promote the health of hidden populations---and thus the health of the larger populations in which they are embedded. 
 
\subsection{Limitations}
\label{sec:limitations}

In using uniform random samples to estimate population size, it is possible for the proposed $n_1$ estimator to ``fail'' if one finds that $\langle M(T,\emptyset)\rangle = 0$ in Definition \ref{def:n1}. This happens with greater frequency as the sample size $r \ll n$ the population size.  Figure~\ref{failure-rate}~(a) shows the mean failure rate (the fraction of the 13,500 trials\footnote{We considered: each of $5$ families ${\cal L}(\lambda,n), {\cal P}(\lambda,n), {\cal X}(\lambda,n), {\cal B}(\lambda,n)$, and ${\cal E}(\lambda,n)$ defined in Section \ref{sec:graph-families}, and each $\lambda=3, 5, 10$; from each of these $15$ concrete sample spaces, we used configuration graph sampling to select $30$ random graphs of sizes.  In each of these $5 \times 3 \times 30 = $450 graphs, we generated 30 uniform samples (for $n_1$).  In this manner, a total of $450\times 30 = $13,500 simulations were conducted.} where $n_1$ failed to produce a population estimate), for each choice of population size $n$ (ranging from $5\cdot 10^3$ to $40\cdot 10^3$), and uniform sample size $r$ (chosen to be $250,500$ or $750$).  We see from Figure~\ref{failure-rate}~(a) that failure rate is non-linear in both $r$ and $n$.  For small uniform samples $r=250$, the failure rate of $n_1$ is $\sim 0$ when $n=10\cdot 10^3$, but undergoes an inflection at $n=20\cdot 10^3$, and rises to 3.9\% when the population size again doubles to $n=40\cdot 10^3$.  

Similarly, in using respondent-driven sampling to estimate population size, it is possible for the proposed $n_2$ (resp. $n_3$) estimators to ``fail'' if one finds that $\langle M(S, F)\rangle = 0$ in Definition \ref{def:n2} (resp. $\sum_{s\in D} \langle X(s, F, \gamma) \rangle = 0$ in Definition \ref{def:n3}). Figure~\ref{failure-rate}~(b) shows the mean failure rate (the fraction of the 13,500 trials where $n_2$ failed to produce a population estimate), for each choice of population size $n$ (ranging from $5\cdot 10^3$ to $40\cdot 10^3$), and RDS sample size $r$ (chosen to be $250,500$ or $750$).  RDS samples of size $r=250$ exhibit an $n_2$ failure rate of $\sim 0$ when $n=5\cdot 10^3$, but undergo an inflection at $n=10\cdot 10^3$; the mean failure rate rises to 6.0\% when the population size again doubles to $n=40\cdot 10^3$. In examining the $n_3$ estimator, Figure \ref{failure-rate} (c) shows us that when it is used with RDS samples of size $r=250$, it  exhibits a failure rate of $\sim 0$ when $n=5\cdot 10^3$, but the failure rate undergoes an inflection at $n=10\cdot 10^3$, rising to 8.8\% when the population size again doubles to $n=40\cdot 10^3$.  For sample sizes that are $2X$ and $3X$ as large (i.e. $r=500$ and $r=750$) the inflection point is not yet reached at $n=40\cdot 10^3$ and mean failure rates remain below 0.1\%. This indicates that our estimators based on RDS are more robust against failure than the $n_1$ uniform sampling estimator, and at typical RDS sample sizes ($500 \leqslant r\leqslant 750$), they are robust enough to be used in settings where the population size is expected to be on the order of $n\sim 40\cdot 10^3$.

Figure \ref{failure-rate} (d-e) explore the impact of hash space size on the mean failure rate.  Here we consider a fixed sample size $r = 500$ and vary the size of hash space $|\Omega|$ between $2 \cdot 10^{3}$ and $256\cdot 10^3$.  We observe that the mean failure rate  of $n_2^{\psi}$ and $n_3^{\psi}$ (again taken across 13,500 experiments) grow linearly as $n$ increases, but that the rate of growth depends on $|\Omega|$.  In particular, when $|\Omega|$ is too small (in this case $2 \cdot 10^{3}$ or smaller), the mean failure rate is seen to grow steeply, even for small networks.  The two  graphs (d-e) make evident the tradeoff between subject anonymity/privacy and the failure rates of the estimator.  When the hash space size is sufficiently large ($32 \cdot 10^{3} - 256 \cdot 10^{3}$), failure rates remain low, but smaller hash spaces (which provide for greater anonymity) may produce greater instability in the estimators.

Although $32 \cdot 10^{3} - 256 \cdot 10^{3}$ may appear to be a very large hash space size, we note 
$$10^{4} \leqslant 32 \cdot 10^{3} \leqslant 10^{5} \leqslant 256 \cdot 10^{3}\leqslant 10^{6}.$$
Thus, asking research subjects for the last 5 or 6 digits of their own telephone number and those digits of their friends' phone numbers would be sufficient to provide an accurate estimate (assuming that numerical digits are randomly assigned by phone service providers).  In the event that research subjects remain reluctant to reveal precise digits of their own or their alter's phone numbers, a telefunken code could be constructed \cite{TELEFUNKEN2012} or a Privatized Network Sampling (PNS) design \cite{fellowsThesis} employed.

\begin{figure}[t]
\setlength{\abovecaptionskip}{0cm}
\setlength{\belowcaptionskip}{0cm}
\captionsetup[subfigure]{skip=-18pt,position=top}
    \centering
    \begin{subfigure}[ht!]{0.31\textwidth}
        \caption{Mean failure rate of $n_1$ on uniform samples}
        \includegraphics[width=\textwidth]{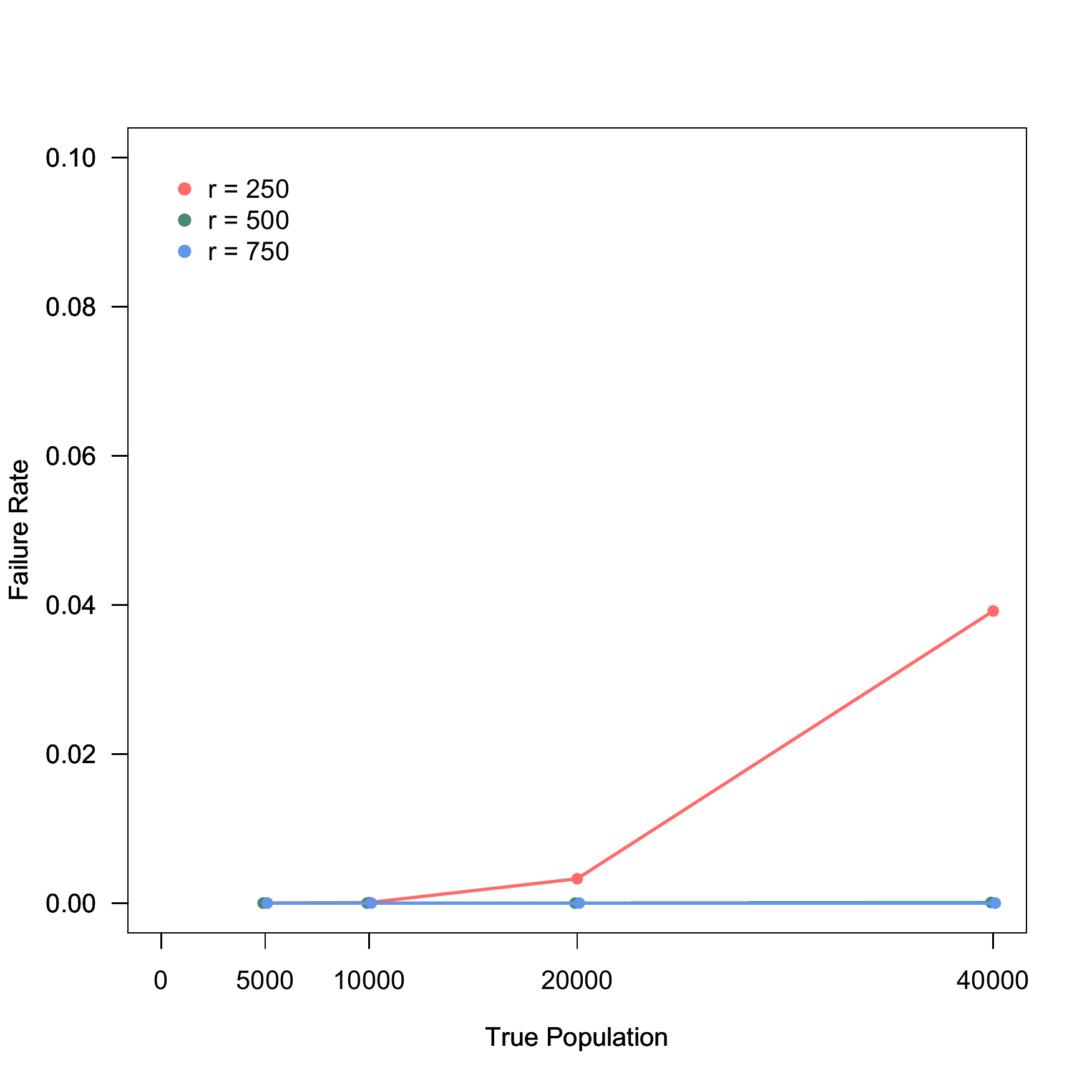}
        \label{}
    \end{subfigure}
    \begin{subfigure}[ht!]{0.31\textwidth}
    	\caption{Mean failure rate of $n_2$ on RDS samples}
        \includegraphics[width=\textwidth]{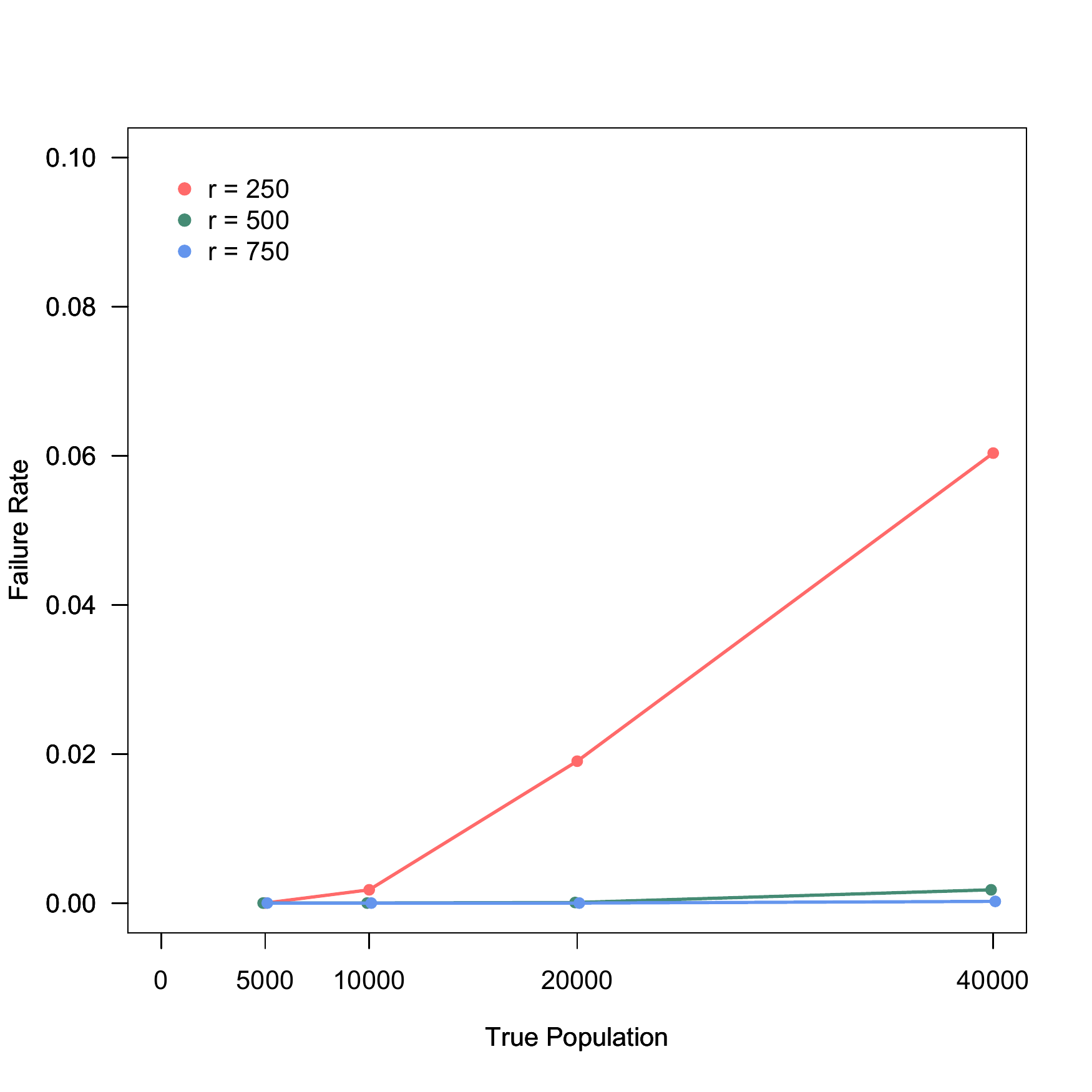}
        \label{}
    \end{subfigure}
        \begin{subfigure}[ht!]{0.31\textwidth}
        \caption{Mean failure rate of $n_3$ on RDS samples}
        \includegraphics[width=\textwidth]{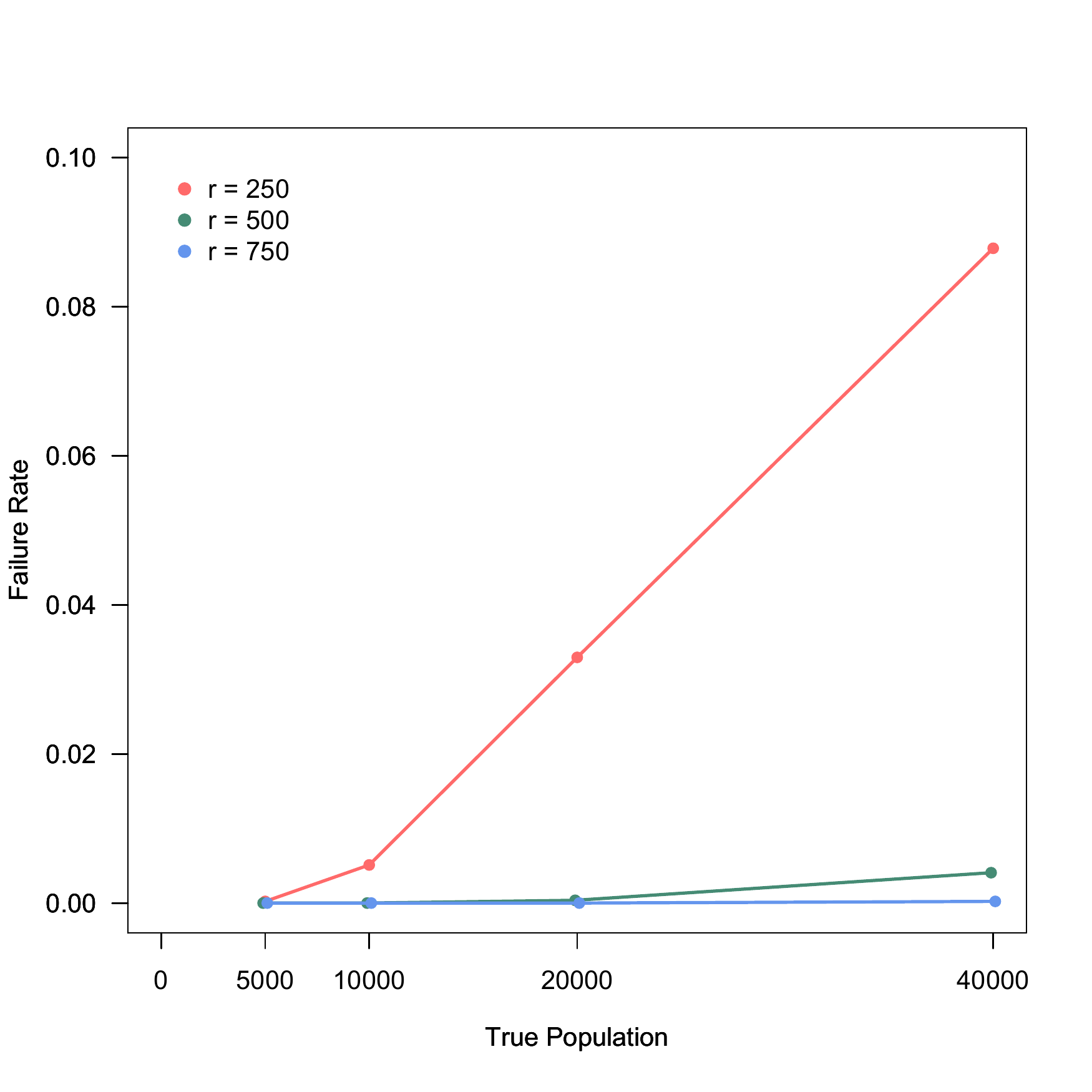}
        \label{}
    \end{subfigure}
    \begin{subfigure}[ht!]{0.48\textwidth}
    	\caption{$n_2^{\uppsi}$ failure rate with sample size $r = 500$}
        \includegraphics[width=\textwidth]{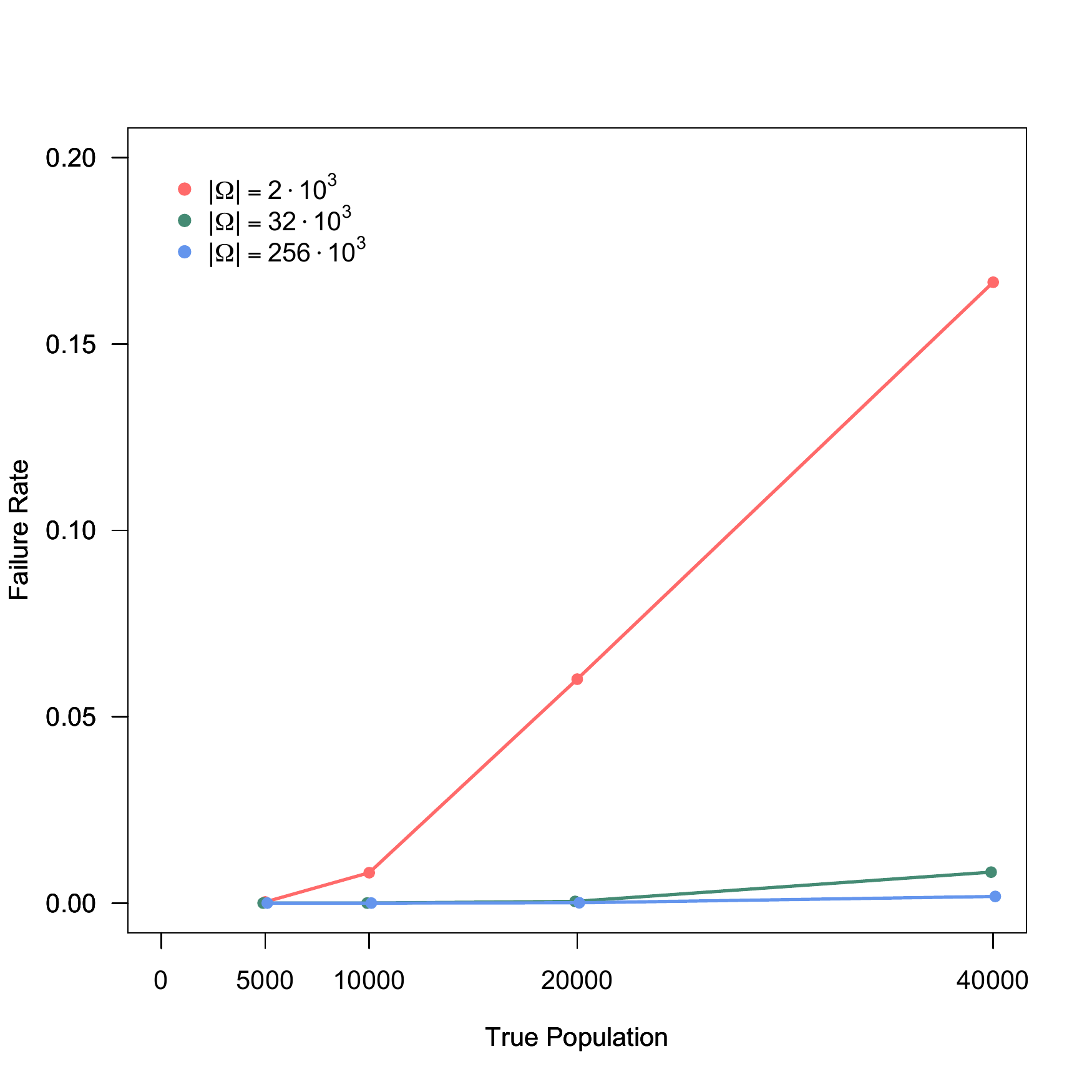}
        \label{}
    \end{subfigure}
    \begin{subfigure}[ht!]{0.48\textwidth}
    	\caption{$n_3^{\uppsi}$ failure rate with sample size $r = 500$}
        \includegraphics[width=\textwidth]{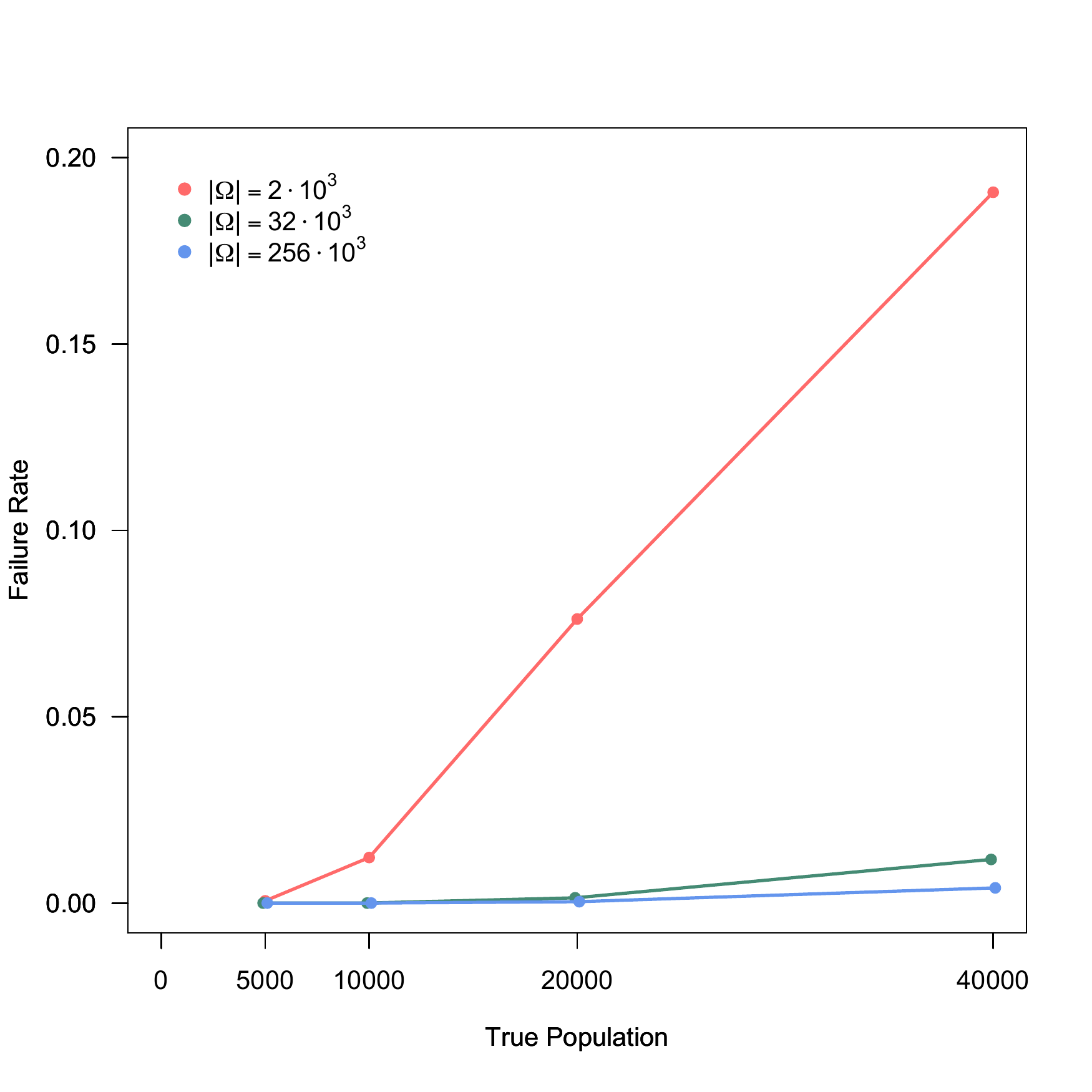}
        \label{}
    \end{subfigure}
    \caption{Failure rates of the $n_1$, $n_2$, and $n_3$ estimators}
    \label{failure-rate}
\end{figure}

\section{Acknowledgements}
\label{sec:acknowledgements}
Research reported in this publication was supported by the National Institutes for Health, National Institute on Drug Abuse under Award Number R01 DA037117 and National Institute for General Medicine R01 GM118427.

\newpage
\bibliographystyle{alpha}
\bibliography{paper1}

\end{document}